\documentclass[11pt]{article}
\usepackage{subcaption}
\usepackage{amssymb}
\usepackage{latexsym}
\usepackage{amsmath}
\usepackage{epsfig}
\usepackage{amscd}
\usepackage{amsthm}
\usepackage[normalem]{ulem}
\usepackage{multirow}
\usepackage{float}
\usepackage{footmisc}
\usepackage{color}
\usepackage{enumerate}
\usepackage{bm}
\usepackage{setspace}
\usepackage{adjustbox} 
\usepackage{xparse}
\usepackage{subcaption}
\usepackage{tcolorbox}
\usepackage{lipsum}
\usepackage{blindtext}
\usepackage{mathtools}
\usepackage[titletoc,title]{appendix}
\usepackage[T1]{fontenc}
\usepackage{dsfont}
\usepackage{natbib}
\usepackage{url}
\usepackage{booktabs}
\usepackage{algorithm}
\usepackage{algorithmicx}
\usepackage{algpseudocode}
\usepackage[font=small,labelfont=bf]{caption}
\usepackage{authblk} 
\usepackage{hyperref}[]
\hypersetup{
	colorlinks = true,
	linkcolor = blue!60!black, 
	filecolor = magenta,
	urlcolor = blue!60!black,
	citecolor = blue!60!black 
}
\usepackage{forest}
\usepackage{sectsty}
\usepackage{tikz}
\usepackage{xcolor}

\newcommand{\mbb}[1]{\mathbb{#1}}
\newcommand{\mcl}[1]{\mathcal{#1}}

\newcommand{\opn}[1]{\operatorname{#1}}

\newtheorem{definition}{Definition}
\newtheorem{theorem}{Theorem}
\newtheorem{lemma}{Lemma}

\newtheorem{corollary}{Corollary}

\begin{document}

\begin{center}
	\textbf{\LARGE Multiscale Cochran-Mantel-Haenszel Scanning for Conditional Dependency}

	\vspace*{.2in}

	\begin{author}
		A
		Gyeonghun Kang$^{\dagger}$  \quad
		Jialiang Mao$^{\ddagger}$ \quad
		Li Ma$^{\S}$
	\end{author}

	\vspace*{.2in}

	\begin{tabular}{c}
		$^{\dagger}$Department of Statistical Science, Duke University                    \\
		$^{\ddagger}$Uber Technologies Inc.                                                                 \\
		$^{\S}$Department of Statistics and Data Science Institute, University of Chicago \\
	\end{tabular}

	\vspace*{.2in}

	\today

	\vspace*{.2in}
\end{center}

\begin{abstract}
    We propose a nonparametric approach to testing conditional independence and estimating conditional association, generalizing the Cochran-Mantel-Haenszel (CMH) test and odds-ratio estimator to continuous sample spaces.
    It leverages a multiscale scanning approach to decompose the sample space into a cascade of $2\times 2 \times T$ tables. 
    Following the CMH test, we condition on the marginal order statistics, which are ``almost ancillary'' regarding conditional dependency.
    This strategy helps overcome a key challenge faced by other methods that discretize the sample space: we achieve consistency without requiring stratum sample sizes to grow to infinity, a constraint often difficult to satisfy in practice.
    Our method produces easy-to-compute test statistics with a known asymptotic null distribution under the conditional sampling model, scaling almost linearly with the sample size. 
    Our simulation results demonstrate reliable Type I error control, even with small samples and high-dimensional conditioning, and competitive power compared to state-of-the-art tests.
    Finally, a case study on Uber ride-share data highlights the method’s unique dual capability, inherited from the CMH, to both test and identify the nature of the inferred conditional association.
    By providing summary statistics that capture the strength and direction of local associations, our method offers practitioners a useful tool for learning conditional dependencies.
\end{abstract}

\vskip 2em

\section{Introduction}
Many statistical questions concern the dependence between two sets of random variables, $X$ and $Y$, conditional on another set $Z$.
A core building block in many inference pipelines is the test of conditional independence (CI)---that is, assessing whether there is any remaining association between $X$ and $Y$ given the information contained in $Z$.
CI tests are frequently carried out as a preliminary step to select relevant predictors and reduce the dimensionality of subsequent tasks, such as density estimation and regression modeling. 
They are also employed to identify structural relationships among collections of random variables.
Such applications involve high-dimensional data exhibiting complex dependence patterns, and the number of CI tests can grow at a polynomial rate, if not exponential, in the number of covariates \citep{sondhi2019reduced}.
Against this backdrop, it is advantageous for a nonparametric CI test to be light on assumptions and scalable to large datasets with near-linear time complexity and minimal tuning \citep{azadkia2021simple}.

Recent years have seen steady progress toward this goal.
One such attempt embeds data into reproducing kernel Hilbert spaces (RKHS) to construct tests that are robust to varying functional associations and noise models.
The test statistic is typically an estimator of a quantity that encodes the conditional dependence, such as the norm of the conditional cross-covariance operator \citep{fukumizu2007kernel}, the correlation of residual functions in RKHS \citep{zhang2012kernel, strobl2019approximate}, or the distance between mean embeddings \citep{doran2014permutation, scetbon2022asymptotic}.
However, kernel-based methods are not immediately scalable, typically requiring $O(n^2)$ operations due to matrix inversions \citep{fukumizu2007kernel, zhang2012kernel}.
Low-dimensional approximations have been proposed \citep{strobl2019approximate, scetbon2022asymptotic}, but these methods require much larger sample sizes as the dimensionality of $Z$ increases, as reported in \citet{runge2018conditional, chalupka2018fast}. This trend is also evident in our simulations.
Most critically, kernel- and pairwise distance-based methods are often unsuitable for high-dimensional tests, as their power decays polynomially with dimension and depends heavily on the choice of kernel bandwidth \citep{ramdas2015decreasing}.

CI can alternatively be framed as a regression or prediction problem, allowing one to leverage standard supervised learning algorithms. 
CI can be weakly characterized by the residuals of $X$ and $Y$ being uncorrelated after each is regressed on $Z$ \citep{daudin1980partial}.
Accordingly, test statistics have been constructed from the residuals of two separate regressions \citep{zhang2017feature, zhang2018measuring, shah2020hardness, scheidegger2022weighted}. 
In addition, \citet{burkart2017predictive, chalupka2018fast} proposed using the increase in predictive accuracy of $Y$ when incorporating $X$ alongside $Z$, compared to using $Z$ alone.
If the conditional distribution of $X$ given $Z$ is known or can be accurately estimated, the model-$X$ framework of \citet{candes2018panning} and the conditional permutation test of \citet{berrett2020conditional} provide an elegant way to construct a level-$\alpha$ test based on empirical quantiles of test statistics computed on synthetic null samples.
This scheme has been combined with modern generative models to engender deep learning-based CI tests \citep{bellot2019conditional, yang2025conditional, ren2025score}.
However, nonparametric estimation of the full distribution or the mean function is inherently more challenging than testing a single property, especially in high dimensions \citep[Section~2.10]{ingster2003nonparametric}.
The estimation step can thus become a procedural bottleneck, often requiring substantial tuning, which limits the statistical performance of the resulting test.

Another branch of nonparametric CI tests operates by discretizing $Z$.
This approach stratifies the support of $Z$ into $T$ disjoint strata, $S_1,\cdots, S_T$, based on pairwise proximity \citep{margaritis2005distribution, huang2010testing, canonne2018testing, neykov2021minimax, kim2022local}.
Within each stratum, a local statistic quantifies the association between $X$ and $Y$, and the overall test statistic is formulated as a weighted sum of these values.
These methods implicitly assume that the conditional dependence of $(X,Y)$ given $Z$ is constant within each stratum and that $X\perp Y \mid Z\in S_t$ serves as a valid proxy for $X\perp Y \mid Z$ under sufficiently fine stratification.
In this sense, this category also encompasses other methods relying on nearest neighbors or distance-based clustering to construct their test statistics or simulate null samples, such as local permutation \citep{fukumizu2007kernel, sen2017model, runge2018conditional, huang2022kernel, li2023k}.

However, the existing discretization-based methods face a peculiar paradox: strata must be small enough to control Type I error (T1E) yet large enough to detect local associations \citep{doran2014permutation, strobl2019approximate, berrett2020conditional}.
For a continuous $Z$, discretization averages the conditional distribution within each stratum; consequently, local independence $X\perp Y \mid Z\in S_t$ does not, in general, imply global CI.
This forces strata to constrict in diameter as the sample size increases in order to control the discretization error \citep{kim2022local}.
At the same time, the number of samples within each shrinking stratum must also grow to infinity to ensure adequate power.
While \citet{kim2022local} proved the theoretical existence of such stratum diameters for one- and two-dimensional $Z$, a practical construction remains elusive.
This problem is exacerbated in a high-dimensional setting, where clustering itself is already difficult.

To address this challenge, we invoke the Conditionality Principle:
conditioning on ancillary statistics may eliminate nuisance parameters and reduce the testing space \citep[Chapter~2]{berger1988likelihood}.
For binary $X$ and $Y$, the conditional association is quantified by the odds ratio of a $2\times 2$ table for each value of $Z$, and the marginal probabilities are nuisance parameters.
Under fine stratification, one can design a test statistic that reduces the null hypothesis down to $T$ odds ratios and renders its distribution independent of nuisance parameters by conditioning on marginal totals.
This statistic can also aggregate weak local signals across sparse strata, thereby improving power.
The Cochran-Mantel-Haenszel (CMH) test \citep{cochran1954some, mantel1959statistical} for a $2\times 2\times T$ table provides a classic example; hence, we extend it to continuous $Z$ in arbitrary dimensions.
This allows us to exploit the CMH's \emph{sparse-data asymptotic} \citep[Section~6.4.4]{agresti2013categorical}, where the asymptotic distributions assume an increasing number of strata rather than stratum sizes.
Consequently, our test remains consistent without requiring the per-stratum sample size to diverge---a key advantage over other discretization-based methods.

The same principle extends directly to continuous $X$ and $Y$.
We first apply a recursive dyadic partition to discretize $X$ and $Y$ into $2^{k_1}$ and $2^{k_2}$ bins, respectively.
This discretization, combined with the $T$ strata of $Z$, transforms the data into a three-way $2^{k_1}\times 2^{k_2}\times T$ contingency table.
Remarkably, if the three-way table is conditionally independent, then conditioned on the margin totals, the likelihood of this full table factorizes into a product of likelihoods for $2\times 2\times T$ sub-tables.
Each sub-table serves as a window, varying in location and resolution, possibly overlapping, scanning the $(X,Y)$ support to detect dependency.
Due to factorization, the CMH statistics computed on these windows are mutually independent.
The practical implication is powerful: the global CI test boils down to a multiple testing problem over these independent scanning windows, rendering our method a divide-and-conquer strategy akin to \citet{ma2019fisher, gorsky2022multi}.

This makes our method not just a test but a diagnostic tool: by identifying windows contributing to rejection, it pinpoints the significant regions of the $(X,Y)$ support and provides summary statistics indicating the strength of dependence.
Moreover, examining stratum-specific sample odds ratios could help reveal the complex conditional dependence patterns present in the data.
As we will demonstrate in the data analysis section, these features are particularly valuable in modern large-scale datasets, where the rejection of the null hypothesis is often trivial, but the key insight lies in identifying the patterns underlying that rejection.

Our next contribution is a fast and robust algorithm for stratifying $Z$.
Discretization-based methods, in general, require stratifying $Z$ into strata of similar size.
Existing methods rely on distance-based procedures like $k$-means \citep{kim2022localsupp} or $k$-nearest neighbors \citep{sen2017model, runge2018conditional, huang2022kernel, li2023k}.
However, they are not invariant to monotone transformations, meaning that a simple change of units can alter the test results.
More importantly, the notion of distance is unreliable in high dimensions, as distances tend to concentrate and become nearly indistinguishable \citep{aggarwal2001surprising}.
Instead, we propose a recursive partitioning algorithm based on sample medians.
Starting with the entire space, the algorithm recursively splits each partition at its sample median, iterating through each coordinate of $Z$. 
This simple, rank-based procedure naturally produces hyper-rectangular strata with equal counts.
By avoiding computing distances entirely, it is invariant to unit conversion, robust in high dimensions, and efficient with a computational complexity of only $O(n\log n)$ required for finding medians.

The hardness theorem of \citet{shah2020hardness} establishes that any valid CI test with non-trivial power must operate on a constrained hypothesis space.
Accordingly, we assume that conditional distributions are smooth, a prerequisite for the validity of any discretization-based approach \citep{kim2022local}.
Second, by inheriting the CMH test, our test is tailored to detect alternatives where the conditional association maintains a homogeneous direction across all values of $Z$ within some sub-region of the $(X,Y)$ support.
This assumption is far less restrictive than global homogeneity and is met in many real-world scenarios.
For example, a moderate dosage of a medicine might consistently reduce blood pressure across all patient ages, even if very low or high dosages have inconsistent effects.
This structure also encompasses common unobserved confounders influencing a specific region of $X$ and $Y$.
The payoff for this targeted assumption is a gain in statistical power against such dependencies, as our simulations later confirm.
In essence, our method embeds a plausible assumption on conditional dependence into the hypothesis test, making it better suited to detect local signals that globally-focused tests might miss.

\section{Method}\label{sec:method}

Let $P$ be a joint distribution of a triplet $(X,Y,Z)$ defined on a support $\mcl{X}\times \mcl{Y}\times\mcl{Z}$, and $\mcl{P}$ be the space of all such distributions.
We denote the conditional distribution of $(X,Y)$ given $Z$ as $P_{XY\mid Z}$, and the conditional distributions of $X$ and $Y$ given $Z$ as $P_{X\mid Z}$ and $P_{Y\mid Z}$, respectively.
$P_Z$ is the marginal distribution of $Z$.
Define $\mcl{P}_0\subset \mcl{P}$ as the set of all distributions equipped with absolutely continuous densities $p$ with respect to the Lebesgue measure $\mu$, satisfying $\mcl{P}_0 = \{P \in \mcl{P}:P_{XY\mid Z}(X,Y\mid Z) = P_{X\mid Z}(X\mid Z)P_{Y\mid Z}(Y\mid Z)\}$.
Given independent and identically distributed (iid) samples from $P$, denoted as $\{(X_i, Y_i, Z_i):i\in[n]\}$, we develop a statistical procedure to test the null hypothesis $\mcl{H}_0:P\in\mcl{P}_0$.

\subsection{Binary $X$ and $Y$}
\label{sec:binaryXY}
We first describe our test procedure when $\mcl{X}=\mcl{Y} = \{0, 1\}$ and $\mcl{Z}=\mbb{R}^{d}$, in which case $P_{XY\mid Z}$ is fully characterized by the cell probabilities of a $2\times 2$ table given $Z$.
Accordingly, the data $\{(X_i, Y_i, Z_i):i\in[n]\}$ can be regarded as having been generated from an uncountably infinite mixture of $2\times 2$ tables.
Define the \emph{conditional log odds ratio} at $Z=z$:
\begin{equation*}
	\theta(z) = \log \frac{
		P_{XY\mid Z}(0,0\mid z) P_{XY\mid Z}(1,1\mid z)}{
		P_{XY\mid Z}(0,1\mid z) P_{XY\mid Z}(1,0\mid z)}.
\end{equation*}
Utilizing $\theta(z)$, $\mcl{H}_0$ can be formulated in terms of the log odds ratios of each table as $\mcl{H}_0: \theta(z) \stackrel{a.s.}{=} 0$.
Rather than addressing $\mcl{H}_0$, we stratify $\mcl{Z}$ into $T$ strata, recasting the problem as testing conditional independence within a $2\times 2\times T$ table.

Let $\mcl{S}= \{S_1, S_2, \cdots, S_T\}$ for $T\geq 1$ be a $T$-stratification of $\mcl{Z}$ if it forms a partition of $\mcl{Z}$.
Given $\mcl{S}$, we write $n(x, y, S_t) = |\{i:X_i=x, Y_i=y,Z_i\in S_t\}|$ as the cell counts of the $2\times 2$ table of all $(X_i,Y_i, Z_i)$ in stratum $S_t$, and $n(\cdot, y, S_t)$, $n(x, \cdot, S_t)$, and $n(\cdot, \cdot, S_t)$ as its row, column, and total sums.
In this fashion, we tabulate the data into a $2\times 2\times T$ table.
The conditional distribution of $(X,Y)$ within stratum $S_t$ is a mixture of the pointwise conditional distributions $P_{XY\mid Z}(X,Y \mid Z)$ for $Z\in S_t$, written as $P_{XY\mid Z}(X, Y \mid Z\in S_t) = \mbb{E}_{Z_t}P_{XY\mid Z}(X,Y\mid Z_t)$, where $Z_t$ is a random variable distributed according to $P_Z$ truncated to $S_t$, i.e., $P_Z(dz\mid Z \in S_t) = P_Z(dz)/P_Z(Z_i\in S_t)$, and $\mbb{E}_{Z_t}$ denotes expectation with respect to $Z_t$.
For stratum $S_t$, define the \emph{marginal log odds ratio} within stratum $S_t$:
\begin{equation*}
	\theta_t = \log \frac{
		P_{XY\mid Z}(0, 0 \mid Z\in S_t) P_{XY\mid Z}(1, 1 \mid Z\in S_t)}{
		P_{XY\mid Z}(0, 1 \mid Z\in S_t) P_{XY\mid Z}(1, 0 \mid Z\in S_t)}.
\end{equation*}
The hypothesis $\tilde{\mcl{H}}_0: \theta_1=\cdots=\theta_T=0$ can be readily tested with the CMH test statistic.
Specifically, define $\psi_n=1_{M_n^2 > \chi^2_{\alpha,1}}$ as the test function of a one-sided level-$\alpha$ test of $\tilde{\mcl{H}}_0$ where
\begin{align*}
	M_{n}        & = \frac{\sum_t \left(n(0,0,S_t) - \mu_{t}\right)}{\sqrt{\sum_t \sigma^2_{t}}}, \quad
	\mu_{t} = \frac{n(0,\cdot,S_t) n(\cdot,0,S_t)}{n(\cdot,\cdot,S_t)},                                                               \\
	\sigma^2_{t} & = \frac{n(0,\cdot,S_t) n(1,\cdot,S_t) n(\cdot,0,S_t) n(\cdot,1,S_t)}{n(\cdot,\cdot,S_t)^2 (n(\cdot,\cdot,S_t)-1)}.
\end{align*}
It is well established that if the distribution of the $2\times 2\times T$ table arising from a stratification $\mcl{S}$ belongs to $\tilde{\mcl{H}}_0$, then $M_n^2$ converges in distribution to $\chi_1^2$.
Owing to the single degree of freedom, the convergence is rapid, and---as will be demonstrated in our simulations---the asymptotic approximation is already accurate in relatively small samples.
Indeed, \citet{mantel1980minimum} observed that accuracy is achieved, both under the null and the alternative, once the sum $\sum_t \left(n(0,0,S_t) - \mu_{t}\right)$ can exceed $5$ in magnitude.

The challenge arises because the set of distributions satisfying $\tilde{\mcl{H}}_0$ is not necessarily equivalent to that of $\mcl{H}_0$.
A mixture of independent $2\times 2$ tables can be a dependent table, and averaging over different dependent tables can yield an independent table.
Therefore, it is possible that a conditionally independent distribution $P\in \mcl{P}_0$ with $\theta(z) \stackrel{a.s.}{=}0$ can have a non-zero $\theta_t$ under some stratification $\mcl{S}$.
Conversely, a distribution $Q$ with non-zero $\theta(z)$ for some non-null set of $Z$ can have $\theta_t=0$ for all strata, depending on $\mcl{S}$.
To be precise, consider any $P\in \mcl{P}_0$.
After averaging over each stratum in $\mcl{S}$, the cell counts $n(x, y, S_t)$ are distributed according to $\tilde{P}^n$, where $\tilde{P}$ is a joint distribution of a triplet $(X, Y,\tilde{Z})$, and $\tilde{Z}$ is a discrete variable with a probability mass function (pmf) $s(t) = P_Z\{Z\in S_t\}$:
\begin{equation}\label{eqn:tildep}
	\tilde{P}(X, Y, \tilde{Z}=t) =
	P_Z\{Z \in S_t\} \mbb{E}_{Z_t} \left[P_{X\mid Z} (X\mid Z_t) P_{Y\mid Z} (Y\mid Z_t)\right].
\end{equation}
Note that $\mbb{E}_{Z_t} \left[P_{X\mid Z} (X\mid Z_t) P_{Y\mid Z} (Y\mid Z_t)\right]$ is a mixture of independent $2\times 2$ tables within $S_t$.
This does not, in general, factor into functions of $X$ and $Y$, and it differs from $\tilde{P}_0$ defined as
\begin{equation}\label{eqn:tildep0}
	\tilde{P}_0(X, Y, \tilde{Z}=t) =
	P_Z\{Z \in S_t\} \mbb{E}_{Z_t} \left[P_{X\mid Z} (X\mid Z_t)\right]\mbb{E}_{Z_t}\left[P_{Y\mid Z} (Y\mid Z_t)\right].
\end{equation}
This distribution lies in $\tilde{\mcl{H}}_0$, under which $\psi_n$ converges in distribution to $\chi_1^2$.
Following \citet{kim2022local}, we call $\tilde{P}_0$ the \emph{CI projection} of $\tilde{P}$.
Figure~\ref{fig:CIproj} visualizes the effects of the stratification and the CI projection.
In essence, stratifying $\mcl{Z}$ and applying $\psi_n$ effectively shifts the null away from $\mcl{H}_0$, with the discrepancy measured by the total variation (TV) distance between $\tilde{P}^n$ and $\tilde{P}_0^n$.

\begin{figure}[t!]
	\centering
	\includegraphics[width=1\linewidth]{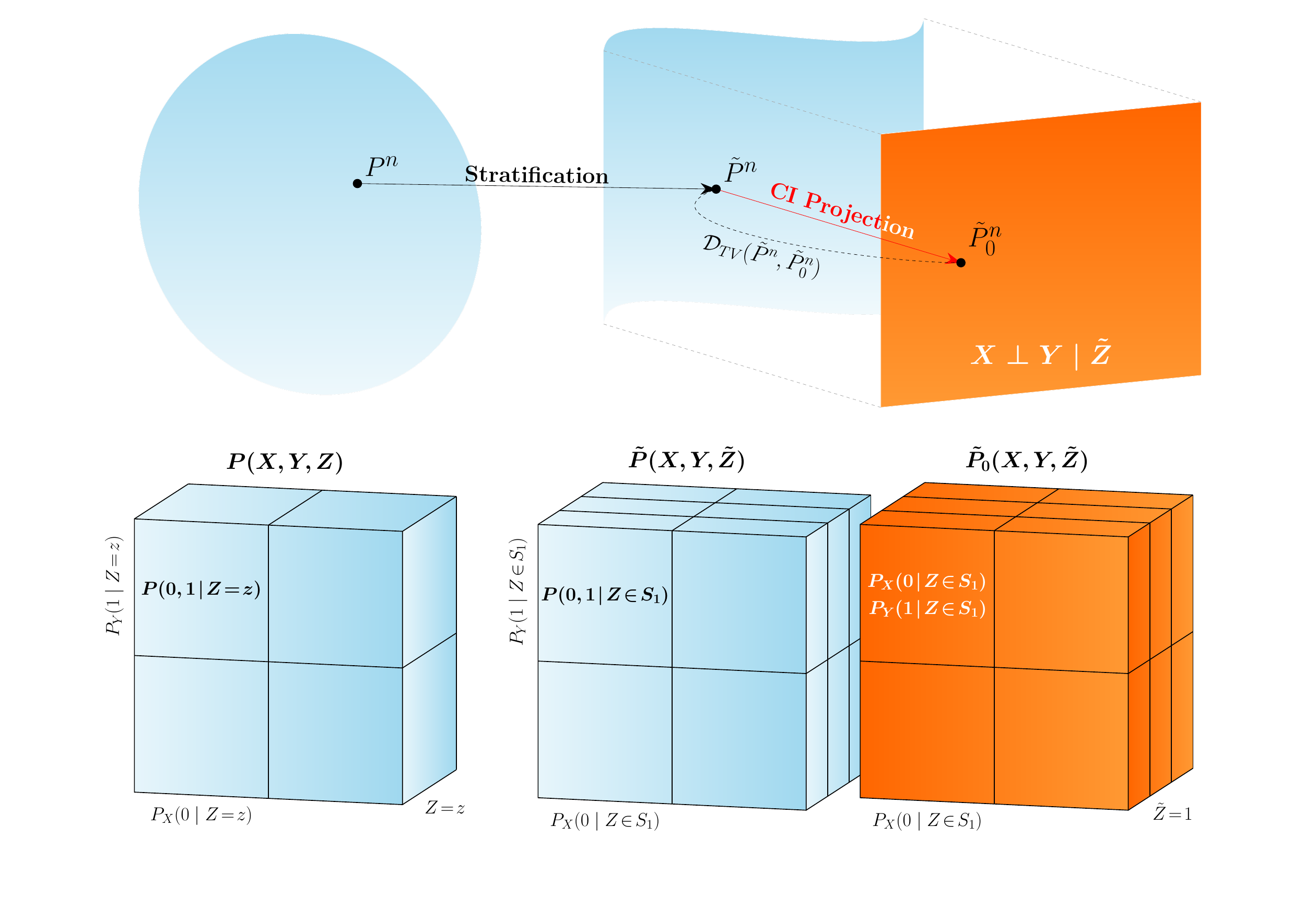}
	\vspace{-1cm}

	\caption{
		This schematic visualizes $P\in \mcl{P}$ along with the induced $\tilde{P}$ and $\tilde{P}_0$.
		$P(X,Y,Z)$ is depicted as a continuum of $2\times 2$ tables for binary $X$ and $Y$.
		$P_X, P_Y$ and $P(0, 1\mid Z=z)$ are shorthands for $P_{X\mid Z}, P_{Y\mid Z}$ and $P_{XY\mid Z}(0, 1\mid Z=z)$.
		A stratification $\mcl{S} = \{S_t:t\in[T]\}$ discretizes $Z$ into $\tilde{Z}\in [T]$ where the event $\{Z\in S_t\}$ is represented as $\{\tilde{Z}=1\}$.
		As such, each cell probability of $\tilde{P}$ is $P(X, Y \mid Z\in S_t)= \mbb{E}_{Z_t} P(X, Y \mid Z_t)$ where $Z_t$ follows $P_Z$ truncated to $S_t$.
		The CI projection factorizes the probability into the product of two marginals, which incurs an error captured by $\mcl{D}_{TV}(\tilde{P}^n, \tilde{P}_0^n)$.}
	\label{fig:CIproj}
\end{figure}

This deviation becomes negligible for an increasingly refined $\mcl{S}$, provided that the conditional distributions are sufficiently smooth.
The TV distance of product distributions can be bounded by the Hellinger distance between $\tilde{P}$ and $\tilde{P}_0$, utilizing inequalities that relate the two metrics.
Specifically, let $\mcl{D}_H(P,Q)$ be the Hellinger distance of two distributions $P$ and $Q$ with densities $p$ and $q$ with respect to $\mu$, written as $\mcl{D}_H(P,Q)=\left(\int |p^{1/2}-q^{1/2}|^2d\mu/2\right)^{1/2}$, and $\delta(z, z')$ be the Euclidean distance between $z,z'\in \mbb{R}^d$.
Under the null $P\in\mcl{P}_0$, the conditional distribution factorizes into two conditionals, and we frame the smoothness condition in terms of these components: $P \in \mcl{P}$ is \emph{marginally smooth} if there exists $L\geq 0$ such that for all $z, z'\in \mcl{Z}$,
$\mcl{D}_H\left(P_{X\mid Z}(X\mid z), P_{X\mid Z}(X\mid z')\right)
\vee
\mcl{D}_H\left(P_{Y\mid Z}(Y\mid z), P_{Y\mid Z}(Y\mid z')\right)\leq L\delta(z,z')$.

\begin{theorem}[Asymptotic T1E control]\label{thm:tbl22 T1E}
	Let $\mcl{S}$ be a $T$-stratification and $h=\max_{S\in\mcl{S}}\sup_{z,z'\in S}\delta(z,z')$.
	If $P \in \mcl{P}_{0}$ is marginally smooth and $h=o(n^{-1/4})$, then for any $\alpha \in [0,1]$, $\lim_{n \to \infty} \mbb{E}_{P^n}[\psi_n] = \alpha$.
\end{theorem}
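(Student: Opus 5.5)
Since $\psi_n$ depends on the data only through the $2\times 2\times T$ table, its expectation under $P^n$ equals its expectation under $\tilde P^n$ from \eqref{eqn:tildep}. We therefore write
\[
\bigl|\mbb{E}_{P^n}[\psi_n]-\alpha\bigr|\le \bigl|\mbb{E}_{\tilde P^n}[\psi_n]-\mbb{E}_{\tilde P_0^n}[\psi_n]\bigr| + \bigl|\mbb{E}_{\tilde P_0^n}[\psi_n]-\alpha\bigr|
\le \mcl{D}_{TV}(\tilde P^n,\tilde P_0^n)+ \bigl|\mbb{E}_{\tilde P_0^n}[\psi_n]-\alpha\bigr|,
\]
using $0\le\psi_n\le1$. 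The second term tends to zero by the CMH asymptotics under $\tilde{\mcl{H}}_0$: $\tilde P_0$ in \eqref{eqn:tildep0} has zero log odds ratio in every stratum, so $M_n^2\to\chi^2_1$ in distribution and the rejection probability converges to $\alpha$. Because $T$ may grow with $n$ when $h\to0$, this convergence is the sparse-data version. Conditionally on the margins, $\sum_t(n(0,0,S_t)-\mu_t)$ is a sum of independent centered hypergeometric terms whose total variance diverges. A Lindeberg CLT, with bounded summands since each lies in $[-1,1]$ in the $2\times 2$ margin sense bounded by stratum size, gives the standard normal limit as the asymptotic regime the paper adopts. We may invoke this as the well-established result quoted earlier.

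For the first term, we use the standard inequality $\mcl{D}_{TV}(\tilde P^n,\tilde P_0^n)\le \sqrt{2}\,\mcl{D}_H(\tilde P^n,\tilde P_0^n)\le \sqrt{2n}\,\mcl{D}_H(\tilde P,\tilde P_0)$, which follows from tensorization of the Hellinger affinity. It therefore suffices to show $\mcl{D}_H^2(\tilde P,\tilde P_0)=O(L^4h^4)$, or more generally $o(1/n)$, because $h=o(n^{-1/4})$ then gives $n\mcl{D}_H^2\to0$. Since both distributions share the $\tilde Z$ marginal $s(t)$, we have $\mcl{D}_H^2(\tilde P,\tilde P_0)=\sum_t s(t)\,\mcl{D}_H^2(A_t,B_t)$. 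Here $A_t=\mbb{E}_{Z_t}[p_X(\cdot\mid Z_t)p_Y(\cdot\mid Z_t)]$ and $B_t=\mbb{E}_{Z_t}[p_X]\,\mbb{E}_{Z_t}[p_Y]$.

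The main obstacle is obtaining the fourth-order rate, since a crude bound only gives order $h^2$. Fix a reference point $z_t\in S_t$ and write $p_X(\cdot\mid Z_t)=\bar p_X+\Delta_X(Z_t)$ with $\bar p_X=\mbb{E}_{Z_t}p_X$, and define $\Delta_Y$ analogously. Then $A_t-B_t=\mbb{E}_{Z_t}[\Delta_X(Z_t)\Delta_Y(Z_t)]$, a covariance term that is quadratic in the fluctuations. Marginal smoothness bounds the fluctuations in Hellinger distance by $Lh$. To convert this into a bound on $\Delta$, note that $|p-q|=|\sqrt p-\sqrt q|(\sqrt p+\sqrt q)$, so that $|\Delta_X|\lesssim \sqrt{\bar p_X}\cdot(\text{Hellinger-scale }O(Lh))$ pointwise in $L^2$. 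This gives $|A_t-B_t|(x,y)\lesssim L^2h^2\sqrt{\bar p_X(x)\bar p_Y(y)}$ in an integrated sense. Then $\mcl{D}_H^2(A_t,B_t)\le \frac12\int (A_t-B_t)^2/B_t\,d\mu$, the $\chi^2$ bound, and this is $O(L^4h^4)$ after Cauchy–Schwarz over $Z_t$.

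The delicate point is justifying the $\chi^2$ step with the square-root weights. I would carry it out via $\sqrt{A_t}-\sqrt{B_t}=(A_t-B_t)/(\sqrt{A_t}+\sqrt{B_t})$ together with the Cauchy–Schwarz bound $\bigl(\mbb{E}\,\Delta_X\Delta_Y\bigr)^2\le \mbb{E}[\Delta_X^2/\bar p_X]\,\bar p_X\cdot \mbb{E}[\Delta_Y^2/\bar p_Y]\,\bar p_Y$. I would then bound $\int \mbb{E}[\Delta_X^2/\bar p_X]\,d\mu$ by a constant times $L^2h^2$ using the Hellinger–Lipschitz assumption. Summing over $t$ with weights $s(t)$ gives $\mcl{D}_H^2(\tilde P,\tilde P_0)\lesssim L^4h^4$. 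Hence $\mcl{D}_{TV}\lesssim \sqrt{n}L^2h^2\to0$, completing the proof.
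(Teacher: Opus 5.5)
Your outer structure matches the paper: the reduction to $\tilde P$, the total-variation triangle inequality, tensorization to $\sqrt{2n}\,\mcl{D}_H(\tilde P,\tilde P_0)$, and the stratum-wise decomposition. The gap is in the fourth-order bound.

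After $\mcl{D}_H^2(A_t,B_t)\le\frac12\int(A_t-B_t)^2/B_t\,d\mu$ and your Cauchy--Schwarz step, which separates $X$ from $Y$, everything rests on
\[
\int \mbb{E}_{Z_t}\bigl[\Delta_X^2/\bar p_X\bigr]\,d\mu=\mbb{E}_{Z_t}\,\chi^2\bigl(P_{X\mid Z}(\cdot\mid Z_t)\,\big\|\,\bar P_X\bigr)\lesssim L^2h^2 .
\]
This does not follow from marginal smoothness, because $\chi^2$ is not controlled by Hellinger distance. Your pointwise conversion actually gives $\Delta_X^2/\bar p_X=(\sqrt{p_X}-\sqrt{\bar p_X})^2\bigl(1+\sqrt{p_X/\bar p_X}\bigr)^2$. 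Nothing in the assumptions bounds the likelihood ratio $p_X(\cdot\mid z)/\bar p_X$ when one conditional puts mass where the stratum average barely does.

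Here is a concrete counterexample for $d\ge 3$.
Place $K\asymp(h/\rho)^d$ points $z_k$, pairwise at least $2\rho$ apart, in a stratum of diameter $h$.
Let $\psi_k(z)=(1-\delta(z,z_k)/\rho)_+$ and $p_X(\cdot\mid z)=(1-w\sum_k\psi_k^2(z))f+w\sum_k\psi_k^2(z)g_k$, with $f,g_1,\dots,g_K$ disjointly supported and $w\le 1/2$.
Then $\mcl{D}_H(P_{X\mid z},P_{X\mid z'})\lesssim(\sqrt w/\rho)\,\delta(z,z')$ globally, so $L^2h^2\asymp wK^{2/d}$.
Now let $P_Z$ on the stratum put mass $1/K$ on tiny balls around the $z_k$. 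On the support of $g_k$ you get $\bar p_X\approx (w/K)g_k$, while $p_X(\cdot\mid Z_t)\approx w g_k$ there for $Z_t$ near $z_k$.
Hence $\mbb{E}_{Z_t}\chi^2\gtrsim wK$, which exceeds $L^2h^2$ by the factor $K^{1-2/d}\to\infty$.
With the same construction for $Y$, the true $\mcl{D}_H^2(A_t,B_t)$ is only of order $w^2\lesssim L^4h^4$. So the target bound holds, but your chain of inequalities cannot reach it under the stated assumptions.

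The paper avoids $\chi^2$ altogether. It invokes Lemma~\ref{supp-lem:thm1-1} (Lemma~1 of Kim et al.): $\mcl{D}_H^2(A_t,B_t)\le 6\,\mbb{E}\bigl[\mcl{D}_H^2(P_{X\mid Z_t},P_{X\mid Z_t'})\,\mcl{D}_H^2(P_{Y\mid Z_t},P_{Y\mid Z_t''})\bigr]\le 6L^4h^4$.

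Your own idea can be repaired in that direction with two changes.
\begin{itemize}
\item Keep the denominator $(\sqrt{A_t}+\sqrt{B_t})^2\ge A_t+B_t$ instead of weakening it to $B_t$.
\item Apply Cauchy--Schwarz to the pairing $\Delta_X\Delta_Y=\bigl[(s_X-\bar s_X)(s_Y-\bar s_Y)\bigr]\cdot\bigl[(s_X+\bar s_X)(s_Y+\bar s_Y)\bigr]$, with $s=\sqrt{p}$ and $\bar s=\sqrt{\bar p}$, rather than splitting $X$ from $Y$.
\end{itemize}
Since $\mbb{E}_{Z_t}[(s_X+\bar s_X)^2(s_Y+\bar s_Y)^2]\le 4(A_t+3B_t)$, you get pointwise $(\sqrt{A_t}-\sqrt{B_t})^2\le 12\,\mbb{E}_{Z_t}[(s_X-\bar s_X)^2(s_Y-\bar s_Y)^2]$.
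Integrating gives $\mcl{D}_H^2(A_t,B_t)\le 24\,\mbb{E}_{Z_t}[\mcl{D}_H^2(P_{X\mid Z_t},\bar P_X)\,\mcl{D}_H^2(P_{Y\mid Z_t},\bar P_Y)]$.
By convexity of $\mcl{D}_H^2$ in its second argument, this is at most $24L^4h^4$. The argument stays within Hellinger distance and needs no likelihood-ratio bound.
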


Under the alternative, our test is consistent for distributions exhibiting homogeneous $\theta(Z)$ across all values of $Z$.
This means the association between $X$ and $Y$ does not change direction across different regions of $\mcl{Z}$.
Note that this homogeneity condition implicitly imposes the smoothness of $Q(X,Y\mid Z)$ where $Q\not\in \mcl{P}_0$.
Indeed, if $Q(X,Y\mid Z)$ is uniformly bounded away from zero for all values of $X$ and $Y$, then the homogeneity condition, together with the marginal smoothness assumption, suffices to attain consistency.
Alternatively, we may require $Q(X,Y\mid Z)$ to be smooth in the following sense: $Q \in \mcl{P}$ is \emph{jointly smooth} if there exists $L\geq 0$ such that for all $z, z'\in \mcl{Z}$, $\mcl{D}_H\left(Q_{XY\mid Z}(X,Y\mid z), Q_{XY\mid Z}(X,Y\mid z')\right) \leq L\delta(z,z')$.
The following theorem establishes consistency under non-negative $\theta(Z)$, and the same conclusion holds when $\theta(Z)$ is non-positive.

\begin{theorem}[Consistency]\label{thm:tbl22 Power}
	Let $\mcl{S}$ and $h$ be defined as in Theorem~\ref{thm:tbl22 T1E}, where $T\asymp n$ and $h=o(n^{-1/2})$.
	Suppose $Q \not\in \mcl{P}_0$ has a conditional distribution $Q(X,Y\mid Z)$ that is continuous in $Z$, and $\theta(Z)$ is non-negative, finite for all $Z\in\mcl{Z}$, and positive with non-zero probability.
	If either (i) $Q$ is jointly smooth or (ii) $Q$ is marginally smooth and $\min_{X,Y} Q(X,Y\mid Z)>q_{min}$ $Q$-a.s. for some $0<q_{min}<1$,
    then $\lim_{n \to \infty} \mbb{E}_{Q^n}[1-\psi_n] = 0$.
\end{theorem}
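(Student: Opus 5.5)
Under $Q^n$, $\psi_n\to1$ iff $M_n\to+\infty$ in probability. The plan is to compare $Q$ with the stratum-level distribution $\tilde Q$, the analogue of \eqref{eqn:tildep} for $Q$. By construction, the stratified counts are exactly multinomial from $\tilde Q^n$. It therefore suffices to show that, under the product-multinomial law of the $2\times2\times T$ table, the numerator $\sum_t (n(0,0,S_t)-\mu_t)$ has conditional mean of order $n$ and fluctuations of order $\sqrt n$. The denominator $\sum_t\sigma_t^2$ is $O_P(n)$, since each $\sigma_t^2\le n(\cdot,\cdot,S_t)/4$ up to a constant factor. Then $M_n\gtrsim \sqrt n\to\infty$.

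\textbf{Step 1 (stratum odds ratios inherit the sign).} Show that $\theta_t$ is close to $\theta(z)$ uniformly over $z\in S_t$. Under (i), joint smoothness gives $\sup_{z\in S_t}\mcl{D}_H(Q_{XY\mid Z}(\cdot\mid z),Q_{XY\mid Z}(\cdot\mid S_t))\le Lh$, because the stratum mixture lies within the Hellinger ball by convexity of $\mcl{D}_H^2$. Under (ii), the lower bound $q_{min}$ makes $\log$ Lipschitz on the cells. One then has to bound the joint cells' variation; I would use the boundedness together with marginal smoothness, plus the continuity of $Q(X,Y\mid Z)$, to obtain $|\theta_t-\theta(z)|\to0$. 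The key quantity is the stratum ``covariance'' $\Delta_t=\tilde Q(0,0\mid t)\tilde Q(1,1\mid t)-\tilde Q(0,1\mid t)\tilde Q(1,0\mid t)$. I will show
\[
\Delta_t=\mbb{E}_{Z_t}\bigl[\Delta(Z_t)\bigr]+O(h^2),
\]
where $\Delta(z)\ge0$ has the same sign as $\theta(z)$. The $O(h^2)$ term is the within-stratum covariance of the marginals, which is a product of two $O(h)$ deviations by the same Hellinger-to-$L_1$ argument used for Theorem~\ref{thm:tbl22 T1E}. Summing, $\sum_t n(\cdot,\cdot,S_t)\Delta_t/\text{(margins)}\ge c\,n\,\mbb{E}_Q[\Delta(Z)]-O(nh^2)$. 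Here $\mbb{E}_Q[\Delta(Z)]>0$ because $\theta>0$ on a set of positive probability and $\Delta$ is continuous. Since $h=o(n^{-1/2})$, the error is $o(1)$, which is much smaller than needed.

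\textbf{Step 2 (conditional mean of the numerator).} Conditional on the stratum totals $N_t=n(\cdot,\cdot,S_t)$, compute $\mbb{E}[n(0,0,S_t)-\mu_t]$. With multinomial cells, the identity for the expectation of a cell minus the product of margins over the total gives
\[
\mbb{E}\bigl[n(0,0,S_t)-\mu_t\mid N_t\bigr]=(N_t-1)\,\Delta_t\cdot\frac{N_t}{N_t}.
\]
The exact form is $\mbb{E}[n_{00}-n_{0\cdot}n_{\cdot0}/N]=(N-1)\Delta$, which holds since $\mbb{E}[n_{0\cdot}n_{\cdot0}]=N\pi_{00}+N(N-1)\pi_{0\cdot}\pi_{\cdot0}$. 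This is the crucial point for sparse strata. Because $T\asymp n$, the $N_t$ are bounded, so strata with $N_t=1$ contribute zero, but a positive fraction of strata have $N_t\ge2$. Summing gives a mean of at least $c'n\,\mbb{E}\Delta(Z)-o(n)$. If the stratification is data-dependent (median splits), I would condition on the $Z$'s and treat the strata as fixed, which is legitimate since the cells given $Z$ are independent.

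\textbf{Step 3 (concentration).} The summands $n(0,0,S_t)-\mu_t$ are independent across strata given the $Z$'s and bounded by $N_t$. Chebyshev or Hoeffding then gives a fluctuation of order $O_P(\sqrt{\sum_t N_t^2})=O_P(\sqrt n)$, using $\max_t N_t=O(1)$ (or $\sum_t N_t^2=O_P(n)$ in general). Hence the numerator is at least $c''n$ with probability tending to one, while the denominator is $O_P(\sqrt n)$, so $M_n\to\infty$ and $\mbb{E}_{Q^n}[1-\psi_n]\to0$. The main obstacle is Step 1 under case (ii): controlling how the joint cell probabilities, not only the marginals, vary within a stratum. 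I would try first to show that the mixture's $\Delta_t$ equals the average of pointwise $\Delta(z)$ plus the marginal covariance term, which requires only marginal smoothness, so that the $q_{min}$ bound is used only to keep $\Delta(z)$ and $\theta(z)$ comparably signed and nondegenerate.
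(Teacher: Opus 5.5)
Your argument is correct in substance, but it takes a genuinely different route from the paper's. The paper never computes moments of $M_n$ under $Q$. Instead it re-partitions $\mcl{Z}$ along $U_Q\cup V_Q$ and builds a piecewise-constant surrogate $Q^\ast$ that freezes $Q_{XY\mid Z}$ at a point $z_t^\ast\in\opn{cl}(S_t)$ in each stratum. It shows rejection under $Q^\ast$ via a Lindeberg CLT and Kou's bounds on the non-central hypergeometric mean and variance. It then transfers this to $Q$ by bounding $\mcl{D}_{TV}(g_Q,g_{Q^\ast})$, using Hellinger tensorization over the strata, the Fisher-information expansion of Lemma~\ref{supp-lem:thm1-3}, and $|\theta_t-\theta(z_t^\ast)|=O(h)$ from Lemma~\ref{supp-lem:thm1-2}; this last step is where $q_{min}$ enters. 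In case (i) it bounds the total variation of the discretized laws by $\sqrt{6n}\,Lh$ instead. Since that error accumulates like $\sqrt{T}\,h\asymp\sqrt{n}\,h$, the paper needs $h=o(n^{-1/2})$.

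You instead use the exact identity $\mbb{E}[n(0,0,S_t)-\mu_t\mid N_t]=(N_t-1)\Delta_t$ together with $\Delta_t=\mbb{E}_{Z_t}\Delta(Z_t)+\operatorname{Cov}\bigl(P_{X\mid Z}(0\mid Z_t),P_{Y\mid Z}(0\mid Z_t)\bigr)$. Discretization then costs only $O(h^2)$ per observation against an $O(1)$ per-observation signal. Since $\sigma_t^2\le N_t/8$ whenever $N_t\ge 2$, you get $\sum_t\sigma_t^2\le n/8$, hence $M_n\gtrsim\sqrt{n}$. This buys a good deal. Only marginal smoothness is used, and case (i) reduces to it because Hellinger distance contracts under marginalization. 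Moreover $h\to0$ already suffices, and you need neither $q_{min}$, nor finiteness of $\theta$, nor the continuity and connectedness used to define $z_t^\ast$. What the paper's route buys is that it stays inside the conditional-on-margins framework of the CMH test, and its comparison with $Q^\ast$ is reused for Theorem~\ref{thm:mainPower}.

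A few points to tighten when you write it out.

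\emph{Step 1.} Drop the claim that $\theta_t$ is uniformly close to $\theta(z)$ on $S_t$ under (ii). Marginal smoothness plus pointwise continuity does not give it, and your $\Delta_t$ argument never uses it.

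\emph{Data-dependent strata.} If you condition on the $Z_i$, the pairs in a stratum are independent but not identically distributed. The mean is then $\frac{N_t-1}{N_t}\sum_{i\in S_t}\Delta(Z_i)$ plus $N_t$ times the empirical covariance of $P_{X\mid Z}(0\mid Z_i)$ and $P_{Y\mid Z}(0\mid Z_i)$ over the stratum. That covariance is again $O(h^2)$. The formula $(N_t-1)\Delta_t$ is the fixed-strata version.

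\emph{Non-singleton strata.} ``A positive fraction of strata have $N_t\ge2$'' is not quite what you need. You need a non-vanishing share of the mass of $\Delta(Z)$ to fall in such strata. This is automatic for median splits with $\eta\ge2$. For a fixed stratification it follows from $\mbb{E}(N_t-1)_+\gtrsim ns_t\min(ns_t,1)$, Cauchy--Schwarz, and $T\lesssim n$.

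\emph{Concentration.} Rather than Hoeffding with $|D_t|\le N_t$, which needs bounded stratum sizes, note that changing one observation moves the numerator by at most a constant. McDiarmid's inequality then gives $O_P(\sqrt{n})$ fluctuations with no condition on $\max_t N_t$.
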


The proof of Theorem~\ref{thm:tbl22 Power} requires that the total counts in each stratum be bounded in the limit.
This regime is called \emph{sparse-data asymptotics}, in that each stratum remains sparse while the number of strata grows infinitely \citep[Section~6.4.4]{agresti2013categorical}.
The CMH statistic $M_n$ is particularly well suited to this setting: by aggregating deviations from the null across strata, the accumulation of minor differences is enough to reject the null.
In other words, consistency relies on the number of strata rather than the size of a stratum.
This feature distinguishes our test from other discretization methods that require infinitely large counts in each stratum, sacrificing T1E control for consistency.

In accordance with this asymptotic setting, we set $T = \lceil n/\eta \rceil$, where a fixed hyperparameter $\eta \in \mbb{N}$ specifies the desired number of counts in each stratum.
Our simulation results (see Section~\ref{supp-sec:sim3} in the Supplementary Material) empirically verify that the test is consistent for a wide range of $\eta$ under a moderately large number of observations, including small stratum sizes such as $\eta = 5$ and $10$.
We set $\eta=10$ as our default choice.

Computing $M_n$ is straightforward once a stratification is available; the non-trivial step lies in constructing it.
Unlike standard clustering, our aim is not to characterize $P_Z$, but to form spatially connected strata of nearly equal size ($\approx \eta$).
Existing methods such as $k$-means or agglomerative clustering either fail to control stratum size or incur quadratic complexity.
We instead use a procedure based on $k$-$d$ trees \citep{bentley1975multidimensional}, recursively partitioning the sample space by median splits along successive axes.
This yields a stratification into $d$-dimensional hypercubes, with the counts in strata being nearly uniform—specifically, either $\approx n/2^{\lceil \log_2 T\rceil}$ or $\approx 2n/2^{\lceil\log_2 T \rceil}$.
In practice, we set $T = \max(\lceil n/\eta \rceil, 200)$, given that the test statistic converges rapidly.
The resulting algorithm has near-linear complexity, with the dominant cost being the median-finding step of order $O(n \log n)$.
Pseudocode for the procedure, termed \textsc{medtree}, is provided in the Supplementary Material.

\subsection{General $X$ and $Y$}
We propose a divide-and-conquer strategy to test for conditional independence in a general sample space.
Our approach decomposes the global hypothesis into a series of local tests on smaller contingency tables.
The cornerstone of the method is a sequence of nested dyadic partitions of $\mcl{X}$ and $\mcl{Y}$, represented by binary trees.
These partitions allow us to characterize $X\perp Y\mid Z$ as the independence of two-way contingency tables at different values of $Z$, which, in turn, is equivalent to the log odds ratios being zero in all coarse-to-fine $2\times 2$ tables formed along the nodes of these trees.
By discretizing $\mcl{Z}$ into $T$ strata, we further simplify the local hypothesis into testing the zero common log odds ratio in a $2\times 2\times T$ table.
In this sense, we name our method \emph{multiCMH}, as it effectively scans the sample space in varying scales to apply the CMH.
A key insight is that if $(X,Y)$ is conditionally independent given the discretized $Z$, then the $p$-values from these numerous local tests are mutually independent after conditioning on all stratum-specific margin totals.

We first define a dyadic partition and prove how $X\perp Y\mid Z$ can be considered a limiting case of conditional independence on these partitions.
\begin{definition}[Dyadic partition]\label{dfn:binary partition}
	$I_0, I_1, I_2, \cdots $ is a sequence of nested dyadic partitions of $(0,1]$ constructed as follows: starting from $I_0=(0,1]$, let $I_{k+1} = \cup_{I\in I_k} \{I^{left}, I^{right}\}$ where $I=(a, b]$ is a parent node whose left and right child nodes are constructed as $I^{left}=(a, c]$, $I^{right}=(c, b]$ for $c=(a+b)/2$.
\end{definition}
\noindent In our algorithm, we set $c$ as the sample median within $(a,b]$.
We define $\mcl{I}_k = \cup_{i=0}^k I_k$ as a \emph{binary tree} of depth $k$, and $\mcl{I} = \cup_{i=0}^\infty I_k$ as an infinite tree.
It is clear that $\sigma(\mcl{I}_k) \subset \sigma(\mcl{I}_{k+1})$ and $\sigma(\mcl{I}) = \mcl{B}(0,1]$, Borel $\sigma$-algebra on $(0,1]$.
This enables us to formulate an alternative notion of conditional independence based on dyadic partitions:

\begin{definition}[$(k_1,k_2)$-conditional independence]
\label{def:k1k2condind}
	Let $I_0, I_1, \cdots $ and $J_0, J_1, \cdots$ be dyadic partitions of $\mcl{X}$ and $\mcl{Y}$ defined as Definition~\ref{dfn:binary partition}.
	For $(X,Y,Z) \sim P$, we say $X$ and $Y$ are $(k_1,k_2)$-conditional independent given $Z$, written as $X\perp_{k_1,k_2}Y \mid Z$, if $P_{XY\mid Z}(I, J\mid Z) \stackrel{a.s.}{=}P_{X\mid Z}(I\mid Z) P_{Y\mid Z}(J\mid Z)$
	for any $I \in \mcl{I}_{k_1}$ and $J \in \mcl{J}_{k_2}$.
\end{definition}

\noindent It is natural to consider $X \perp Y \mid Z$ as a limiting case of $(k_1,k_2)$-conditional independence.

\begin{theorem}\label{thm:condind}
	$X\perp Y \mid Z$ if and only if
	$X\perp_{k_1,k_2} Y \mid Z$ for any $k_1, k_2 \geq 0$.
\end{theorem}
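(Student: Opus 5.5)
The forward direction is immediate. If $X\perp Y\mid Z$, then for every Borel $A\subset\mcl{X}$ and $B\subset\mcl{Y}$ we have $P_{XY\mid Z}(A,B\mid Z)\stackrel{a.s.}{=}P_{X\mid Z}(A\mid Z)P_{Y\mid Z}(B\mid Z)$. Specializing to $A=I\in\mcl{I}_{k_1}$ and $B=J\in\mcl{J}_{k_2}$ gives $X\perp_{k_1,k_2}Y\mid Z$ for every $k_1,k_2$. The whole content is therefore in the converse, which I would prove with a $\pi$-$\lambda$ (Dynkin) argument run conditionally on $Z$.

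For the converse, first collect the null sets. Since $\mcl{I}=\cup_k I_k$ and $\mcl{J}=\cup_k J_k$ are countable, the union $N$ of the exceptional null sets over all pairs $(I,J)\in\mcl{I}\times\mcl{J}$ is still null. I would work with a regular conditional distribution $P_{XY\mid Z}(\cdot\mid z)$, which exists because $\mcl{X}\times\mcl{Y}$ is Euclidean. Then for every $z\notin N$ the factorization holds simultaneously for all $I\in\mcl{I}$ and $J\in\mcl{J}$. Next, note that $\mcl{I}\cup\{\emptyset\}$ is a $\pi$-system. By nestedness, two dyadic intervals are either disjoint or one contains the other, so their intersection is empty or again a dyadic interval. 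The same holds for $\mcl{J}$, and hence the rectangles $\{I\times J\}$ also form a $\pi$-system.

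Now fix $z\notin N$ and compare two probability measures on $\mcl{X}\times\mcl{Y}$: $\mu_z=P_{XY\mid Z}(\cdot\mid z)$ and $\nu_z=P_{X\mid Z}(\cdot\mid z)\otimes P_{Y\mid Z}(\cdot\mid z)$. One point needs checking here: the marginals of $\mu_z$ must coincide with $P_{X\mid Z}(\cdot\mid z)$ and $P_{Y\mid Z}(\cdot\mid z)$ off a null set. This holds after modification on a null set, and I would absorb that null set into $N$. With this in place, $\mu_z$ and $\nu_z$ agree on the $\pi$-system of dyadic rectangles. The class of sets on which they agree is a $\lambda$-system, so by Dynkin's theorem they agree on $\sigma(\{I\times J\})=\sigma(\mcl{I})\otimes\sigma(\mcl{J})=\mcl{B}(\mcl{X})\otimes\mcl{B}(\mcl{Y})$. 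This uses the stated fact $\sigma(\mcl{I})=\mcl{B}(0,1]$, applied coordinatewise. Hence for a.e.\ $z$ the conditional law factorizes, which is exactly $X\perp Y\mid Z$.

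The main obstacle is measure-theoretic bookkeeping rather than any real difficulty. Conditional probabilities are defined only up to null sets that depend on the events, so the argument needs regular versions and must exploit countability of the trees to obtain a single exceptional null set. I would also state the setting explicitly: $\mcl{X}$ and $\mcl{Y}$ are identified with $(0,1]$, for instance via marginal CDF transforms, which are measurable bijections up to null sets under absolute continuity. This ensures the dyadic partitions generate the Borel $\sigma$-algebra. In higher dimensions the same argument applies with coordinatewise dyadic cells, again using that they generate the Borel sets.
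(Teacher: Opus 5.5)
Your proof is correct. It rests on the same tool as the paper's proof, Dynkin's $\pi$-$\lambda$ theorem applied to the dyadic trees together with $\sigma(\mcl{I})=\mcl{B}(0,1]$, but it organizes the measure theory differently.

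\emph{The paper's route.} It never passes to regular conditional distributions. It fixes $C^\ast\in\mcl{C}^{\mcl{Y}}$ and shows that the class of $C_X\in\sigma(X)$ with $\mbb{P}(C_X\cap C^\ast\mid Z)=\mbb{P}(C_X\mid Z)\mbb{P}(C^\ast\mid Z)$ a.s.\ is a $\lambda$-system. This uses only the a.s.\ additivity of conditional probabilities for each fixed disjoint sequence. It then extends from $\mcl{C}^{\mcl{X}}$ to $\sigma(X)$ and repeats the argument in the $Y$ coordinate with $C^\ast\in\sigma(X)$ fixed.

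\emph{Your route.} You fix a regular version $\mu_z$ of the conditional law and use countability of $\mcl{I}\times\mcl{J}$ to get a single exceptional null set. Then, for each fixed $z$, you apply the uniqueness-of-measures lemma once, to $\mu_z$ and the product of its marginals, on the $\pi$-system of dyadic rectangles. That system generates $\mcl{B}(\mcl{X})\otimes\mcl{B}(\mcl{Y})$ because $\mcl{X}\in\mcl{I}$ and $\mcl{Y}\in\mcl{J}$.

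\emph{What each buys.} The paper's argument is more self-contained: it needs no standard-Borel structure, no countability, and no reconciliation of versions. The price is two sequential extensions and checking the $\lambda$-system axioms with a.s.\ equalities. Your argument handles both coordinates in one step and gives the more transparent conclusion that the entire conditional law factorizes off one null set.

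\emph{Two small simplifications.} The cleanest form of your reconciliation step is to define $P_{X\mid Z}(\cdot\mid z)$ and $P_{Y\mid Z}(\cdot\mid z)$ as the marginals of $\mu_z$. These agree with any given versions on the countably many dyadic sets off a null set, so no further modification is needed. Also, the CDF-transform remark is unnecessary, since Definition~\ref{dfn:binary partition} already places $\mcl{X}$ and $\mcl{Y}$ on $(0,1]$.
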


\noindent Theorem~\ref{thm:condind} implies that, given $\{(X_i, Y_i, Z_i): i \in [n]\}$, one can discretize $X$ and $Y$ based on binary trees of reasonable depths and evaluate the conditional independence of the induced discrete variables.
This approach transforms the original problem into testing for independence across a series of two-way contingency tables---one for each value of $z\in \mcl{Z}$.
By introducing a stratification $\mcl{S}$ on the space $\mcl{Z}$, this collection of two-way tables is consolidated into a single three-way table.

In practice, for deep enough $\mcl{I}_{k_1}$, $\mcl{J}_{k_2}$, and a finer stratification of $\mcl{Z}$, each cell of the induced $2^{k_1} \times 2^{k_2} \times T$ table would contain a small, if not zero, number of counts.
To palliate this, we utilize an alternative characterization of conditional independence in terms of coarse-to-fine $2\times 2 \times T$ tables formed along the parent nodes of $\mcl{I}_{k_1}$ and $\mcl{J}_{k_2}$.
To be specific, we call $I\times J$ a scanning \emph{window} of $\mcl{X}\times \mcl{Y}$, where $I\in \mcl{I}_{k_1-1}$ and $J\in \mcl{J}_{k_2-1}$ are the parent nodes of the binary trees.
Define the \emph{conditional log odds ratio} of a window $I\times J$ at $z\in \mcl{Z}$ as
\begin{equation*}
	\theta(I, J, z) = \log \frac{
		P_{XY\mid Z}(I^{left}, J^{left} \mid z) P_{XY\mid Z}(I^{right}, J^{right} \mid z)
	}{
		P_{XY\mid Z}(I^{left}, J^{right} \mid z) P_{XY\mid Z}(I^{right}, J^{left} \mid z)
	}.
\end{equation*}
The following theorem states that $(k_1,k_2)$-conditional independence holds if and only if the conditional log odds ratios at all such windows are zero.

\begin{theorem}\label{thm:smalltobig}
	$X\perp_{k_1,k_2}Y \mid Z$ if and only if
	$\theta(I, J, Z)\stackrel{a.s.}{=}0$
	for any $I\in \mcl{I}_{k_1-1}$ and $J \in \mcl{J}_{k_2-1}$.
\end{theorem}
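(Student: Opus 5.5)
Fix $z$ (up to a null set) and regard $p(I,J)=P_{XY\mid Z}(I,J\mid z)$ as a finite measure on the product of the two trees. The plan is to reduce both directions to a statement about the $2^{k_1}\times 2^{k_2}$ table of leaf cells $I\in I_{k_1}$, $J\in J_{k_2}$. Since $\mcl{I}_{k_1}$ and $\mcl{J}_{k_2}$ are countable, indeed finite, the ``a.s.'' qualifiers can be intersected into a single probability-one set of $z$. On that set we work pointwise, assuming (as implicit in the definition of $\theta$) that all cell probabilities are positive so the log odds ratios are finite.

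\emph{Only if.} Suppose $X\perp_{k_1,k_2}Y\mid Z$. Take any window with $I\in\mcl{I}_{k_1-1}$ and $J\in\mcl{J}_{k_2-1}$; its children lie in $\mcl{I}_{k_1}$ and $\mcl{J}_{k_2}$. Then each of the four cell probabilities factorizes as $P_{X\mid Z}(\cdot\mid z)P_{Y\mid Z}(\cdot\mid z)$. In the odds ratio the numerator and denominator are both $P_X(I^{l})P_X(I^{r})P_Y(J^{l})P_Y(J^{r})$, so $\theta(I,J,z)=0$.

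\emph{If.} This direction proceeds by induction on the pair of depths, following the nested structure. The target is $p(I,J)=p_X(I)p_Y(J)$ for every $I\in\mcl I_{k_1}$ and $J\in\mcl J_{k_2}$. Define the conditional table on a window: for parent nodes $I,J$, let $r(I,J)=p(I^{l},J)/p(I,J)$, the fraction of the row mass of $I\times J$ that falls in $I^{l}$. A zero odds ratio on the window $I\times J$ says exactly that $p(I^{l},J^{l})/p(I,J^{l})=p(I^{l},J^{r})/p(I,J^{r})$, i.e. $r(I,J^{l})=r(I,J^{r})=r(I,J)$.

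\emph{Key lemma.} For any $I\in\mcl I_{k_1-1}$, the ratio $r(I,J)$ is the same for every $J\in\mcl J_{k_2}$. This is proved by descending the $J$ tree from the root $J_0$: at each parent $J$ the zero odds ratio propagates the same ratio to both children, and parents at depth $\le k_2-1$ cover every node up to depth $k_2$. Hence $r(I,J)=r(I,\mcl Y)=p_X(I^{l})/p_X(I)$. Therefore $p(I^{l},J)=\tfrac{p_X(I^{l})}{p_X(I)}p(I,J)$ for all $J\in\mcl J_{k_2}$, and similarly for $I^{r}$.

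Now induct downward on the $I$ tree. At the root, $p(\mcl X,J)=p_X(\mcl X)p_Y(J)$ trivially. If $p(I,J)=p_X(I)p_Y(J)$, the identity above gives $p(I^{l},J)=p_X(I^{l})p_Y(J)$. This yields factorization for all $I\in\mcl I_{k_1}$ and $J\in\mcl J_{k_2}$, which is $(k_1,k_2)$-conditional independence.

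\emph{Main obstacle.} The delicate point is zero-probability cells: if some $p(I,J)=0$ the ratio $r$ is undefined. I would handle this by the convention that $\theta$ is defined only when the cells are positive (as the finiteness in the statement implicitly assumes). Alternatively, I would restrict to nodes with positive mass and note that factorization holds trivially when $p_X(I)=0$ or $p_Y(J)=0$. The remaining case, a zero cell with positive margins, would be ruled out by the positivity convention.
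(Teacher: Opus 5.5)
Your proof is correct and is essentially the paper's argument. Both recast a vanishing window log odds ratio as equality of conditional proportions, $p(I^{left},J^{left})/p(I,J^{left})=p(I^{left},J^{right})/p(I,J^{right})$, and propagate factorization down from the root; your $J$-descent for fixed $I$ followed by an $I$-descent just makes explicit the ordering that the paper's $(l_1,l_2)$ induction leaves implicit.

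The positivity convention is more than you need, though. For $p_X(I)>0$, tracking $r(I,J)$ only on nodes with $p(I,J)>0$ still gives $p(I^{left},J)=\frac{p_X(I^{left})}{p_X(I)}\,p(I,J)$ for every $J$, trivially when $p(I,J)=0$, so under the paper's reading of $\theta=0$ as cross-product equality no zero-cell case has to be excluded.
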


Theorem~\ref{thm:smalltobig} inspires our divide-and-conquer strategy: given $\mcl{I}_{k_1}$, $\mcl{J}_{k_2}$, and $\mcl{S}$, we tabulate $\{(X_i, Y_i, Z_i): i \in [n]\}$ into coarse-to-fine $2\times 2\times T$ tables along the parent nodes of the partition trees, for which we conduct the CMH test and compute the corresponding $p$-values.
This prompts the question of the joint distribution of the $p$-values.
Although one might expect the $p$-values from nested or overlapping windows to be correlated, we establish the contrary.
If the overall three-way table is conditionally independent, then, conditioned on row and column sums at each stratum, all $p$-values are mutually independent.

\begin{table}[t!]
	\centering
	\begin{tabular}{lll}
		\toprule
		\textbf{Notation}                 & \textbf{Definition}                                                    & \textbf{Remark}                                   \\
		\midrule
		$n(I,J,S_t)$                      & $|\{i : X_i \in I,\, Y_i \in J,\, Z_i \in S_t\}|$                      & count in a cell $I\times J \times S_t$            \\
		\midrule
		$n(I_{k_1},J_{k_2}, S_t)$         & $\{n(I, J, S_t): I \in \mathcal{I}_{k_1},\, J \in \mathcal{J}_{k_2}\}$ & two-way table $2^{k_1} \times 2^{k_2}$            \\
		$n(I_{k_1},J_{0}, S_t)$           &                                                                        & column sums at stratum $S_t$                      \\
		$n(I_{0},J_{k_2}, S_t)$           &                                                                        & row sums at stratum $S_t$                         \\\midrule
		$n(I_{k_1},J_{k_2}, \mathcal{S})$ & $\{n(I_{k_1},J_{k_2}, S_t): S_t \in \mathcal{S}\}$                     & three-way table $2^{k_1} \times 2^{k_2} \times T$ \\
		$n(I_{k_1},J_{0}, \mathcal{S})$   &                                                                        & column sums at all strata                         \\
		$n(I_{0},J_{k_2}, \mathcal{S})$   &                                                                        & row sums at all strata                            \\
		\bottomrule
	\end{tabular}
	\caption{Notations for contingency tables tabulated from $\{(X_i, Y_i, Z_i):i\in[n]\}$.}
	\label{tab:notations}
\end{table}

Table~\ref{tab:notations} introduces several necessary notations.
Suppose the data consist of iid samples of $P \in \mcl{P}_0$.
A $T$-stratification $\mcl{S}$ discretizes $Z$ into a discrete variable $\tilde{Z} \in [T]$ with a pmf $s(t) = P_Z\{Z \in S_t\}$.
Hence, $(X, Y, \tilde{Z})$ follows $\tilde{P}$, defined as \eqref{eqn:tildep}.
As discussed in Section~\ref{sec:binaryXY}, $X\perp_{k_1,k_2} Y \mid Z$ does not necessarily imply $X \perp_{k_1,k_2} Y \mid \tilde{Z}$.
Instead, we consider $\tilde{P}_0$ defined as \eqref{eqn:tildep0}, under which the probability of a cell $I\times J \times S_t$ becomes $P_Z\{Z \in S_t\} \mbb{E}_{Z_t}\left[ P_{X\mid Z}(I\mid Z_t) \right] \mbb{E}_{Z_t}\left[ P_{Y\mid Z}(J\mid Z_t) \right]$.
Therefore, if we assume $(X_i, Y_i, \tilde{Z}_i)$ is distributed according to $\tilde{P}_0$, then the resulting two-way table at each stratum is independent.
Similar to Theorem~\ref{thm:tbl22 T1E}, the error incurred by substituting $\tilde{P}_0$ for $\tilde{P}$ can always be controlled with finer stratification for distributions with smooth conditional densities.

Another key consequence of employing $\tilde{P}_0$ is that the conditional distribution of $n(I_{k_1}, J_{k_2}, \mcl{S})$ given a collection of stratum-specific margins $n(I_{k_1}, J_{0}, \mcl{S})$ and $n(I_{0}, J_{k_2}, \mcl{S})$ decomposes into a product of $T$ independent distributions---one for each stratum $S_t$.
Each component is a conditional distribution of the two-way table $n(I_{k_1}, J_{k_2}, S_t)$ given the $S_t$-specific margin sums $n(I_{k_1}, J_{0}, S_t)$ and $n(I_{0}, J_{k_2}, S_t)$, which is a multivariate Fisher's hypergeometric (MHG) distribution.
As shown in \citet{ma2019fisher}, the MHG of each stratum can be factorized into a product of simpler Fisher's hypergeometric (HG) distributions of the coarse-to-fine $2\times 2$ tables.

\begin{theorem}[Multiscale factorization]\label{thm:IJKfacto}
	Let $\mbb{P}_{\tilde{P}_0^n}\left\{ n(I_{k_1}, J_{k_2}, \mcl{S}) \mid n(I_{k_1}, J_{0}, \mcl{S}), n(I_{0}, J_{k_2}, \mcl{S}) \right\}$ be the probability of observing a three-way table $n(I_{k_1}, J_{k_2}, \mcl{S})$ conditioned on the collection of margin totals when $\{(X_i, Y_i, \tilde{Z}_i):i\in [n]\}$ are iid samples of $\tilde{P}_0$. Then
	\begin{align}\label{eqn:dag}
		\begin{split}
			 & \mbb{P}_{\tilde{P}_0^n}\left\{ n(I_{k_1}, J_{k_2}, \mcl{S})
			\mid n(I_{k_1}, J_{0}, \mcl{S}), n(I_{0}, J_{k_2}, \mcl{S}) \right\} \\
			 & =
			\prod_{\substack{I \in \mcl{I}_{k_1-1}, J \in \mcl{J}_{k_2-1}}}
			\left\{
			\prod_{S\in \mcl{S}}
			g_0\left(
			n(I^{left},J^{left}, S) \mid
			n(I^{left}, J, S), n(I, J^{left}, S), n(I,J,S) \right)
			\right\}.
		\end{split}
	\end{align}
    where $g_0(a\mid b, c, d)$ is the pmf of HG.
\end{theorem}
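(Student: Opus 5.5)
The plan is to reduce the three-way statement to a product of single-stratum statements, and then factorize each stratum's table along the binary trees, as in \citet{ma2019fisher}.

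\textbf{Step 1: reduce to a single stratum.} Under $\tilde{P}_0$, the stratum labels $\tilde Z_i$ are iid with pmf $s(t)$. Given the stratum counts $n(I_0,J_0,S_t)=n_t$, the observations in different strata are independent. Within stratum $S_t$ they are iid with cell probabilities $p_t(I)q_t(J)$, where
\[
p_t(I)=\mbb{E}_{Z_t}P_{X\mid Z}(I\mid Z_t), \qquad q_t(J)=\mbb{E}_{Z_t}P_{Y\mid Z}(J\mid Z_t).
\]
The stratum totals are functions of the conditioning margins. Hence the conditional law of $n(I_{k_1},J_{k_2},\mcl{S})$ given all margins factorizes over $t$. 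Each factor is the law of a multinomial $2^{k_1}\times 2^{k_2}$ table with rank-one probabilities, conditioned on its row and column sums. The standard computation (the products $\prod p_t(I)^{n(I,J_0,S_t)}\prod q_t(J)^{n(I_0,J,S_t)}$ cancel) shows this is the multivariate hypergeometric pmf
\[
\frac{\prod_{I\in I_{k_1}} n(I,J_0,S_t)!\,\prod_{J\in J_{k_2}} n(I_0,J,S_t)!}{n_t!\,\prod_{I,J} n(I,J,S_t)!},
\]
which is free of nuisance parameters.

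\textbf{Step 2: factorize the multivariate hypergeometric along the trees.} It then suffices to show, for a fixed stratum $S$, that
\[
\mathrm{MHG} = \prod_{I\in\mcl{I}_{k_1-1},\,J\in\mcl{J}_{k_2-1}} g_0\bigl(n(I^{left},J^{left},S)\mid n(I^{left},J,S),\,n(I,J^{left},S),\,n(I,J,S)\bigr).
\]
Here $g_0(a\mid b,c,d)=\binom{b}{a}\binom{d-b}{c-a}/\binom{d}{c}$, which in symmetric form equals $b!(d-b)!c!(d-c)!/\bigl(d!\,a!\,(b-a)!\,(c-a)!\,(d-b-c+a)!\bigr)$. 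I would prove this by a telescoping product of factorials, organized as an induction on $k_1+k_2$; the base case $k_1=k_2=1$ is exactly one hypergeometric.

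\textbf{Step 3: the inductive step.} Refine $\mcl{J}_{k_2}$ to $\mcl{J}_{k_2+1}$ while keeping $\mcl{I}_{k_1}$ fixed. The new windows are the pairs $I\times J$ with $J\in J_{k_2}$ and $I\in\mcl{I}_{k_1-1}$. For each fixed $J\in J_{k_2}$ these windows form the full tree over $I$ for the $2^{k_1}\times 2$ table with columns $(J^{left},J^{right})$. Applying the $k_2=1$ case of the claim (itself proved by induction on $k_1$), the product of their $g_0$ factors equals the MHG of that $2^{k_1}\times 2$ table given its margins. Multiplying these factors over $J\in J_{k_2}$ and combining with the induction hypothesis for the coarser table, the factorials $n(I,J,S)!$ at level $k_2$ cancel. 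What remains is the MHG of the refined table. The refined row sums $n(I_0,J',S)$ and the column sums $n(I,J_0,S)$ line up because the new column margins $n(I_0,J^{left},S)$ appear in the $2\times 2$ factors at the root $I=I_0$.

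The main obstacle is this bookkeeping in Step 3. One must check that every intermediate count $n(I,J,S)$ with $I$ or $J$ internal to the tree appears in numerator and denominator equally often, so that only leaf cells and the two margins survive. I would handle this with the ``ratio-of-two-MHGs'' identity
\[
\frac{\mathrm{MHG}(\text{fine})}{\mathrm{MHG}(\text{coarse})}=\prod_{J\in J_{k_2}}\mathrm{MHG}\bigl(2^{k_1}\times 2 \text{ subtable}\bigr),
\]
verified by direct factorial comparison. This mirrors the argument in \citet{ma2019fisher} and turns the induction into a one-line cancellation.
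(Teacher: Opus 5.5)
Your proposal is correct, but it takes a different route from the paper. The paper argues probabilistically within a stratum, writing $n_{i,j}=n(I_i,J_j,S)$. It uses $p_n\{n_{i,j}\mid n_{i,0},n_{0,j}\}=p_n\{n_{i,j}\mid n_{i-1,j},n_{i,0}\}\,p_n\{n_{i-1,j}\mid n_{i-1,0},n_{0,j}\}$, dropping the finer row margins in the second factor by an urn (random label assignment) argument. It then applies the induction hypothesis to the second factor and expands the first by a chain rule over column levels $b=1,\dots,j$; each step is a product of independent $2\times2$ hypergeometrics over non-overlapping windows. You instead write the conditional law in closed form as the multivariate hypergeometric and show that the product of the $g_0$'s telescopes to it. Your ratio-of-MHGs identity checks out by direct factorial comparison, and so does the two-level induction.

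The bookkeeping can even be done in one shot. Take a cell $I\times J$ with $I$ at depth $a$ and $J$ at depth $b$. Its factorial $n(I,J,S)!$ receives:
\begin{itemize}
\item $+1$ from window $(\mathrm{pa}(I),J)$ when $a\ge1$ and $b\le k_2-1$;
\item $+1$ from window $(I,\mathrm{pa}(J))$ when $a\le k_1-1$ and $b\ge1$;
\item $-1$ from window $(I,J)$ when $a\le k_1-1$ and $b\le k_2-1$;
\item $-1$ from window $(\mathrm{pa}(I),\mathrm{pa}(J))$ when $a,b\ge1$.
\end{itemize}
Hence only $1/n(I_0,J_0,S)!$, the two margins and the leaf cells survive.

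\textbf{What your route buys.} It is a purely algebraic, fully checkable identity. It avoids the paper's informal conditional-independence steps: the urn argument, and the tacit replacement of $n_{i-1,j}$ by $n_{i-1,b}$ in the chain over $b$.

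\textbf{What the paper's route buys.} It delivers the layer-by-layer DAG structure directly: given its two parent layers, each resolution layer is a product of independent $2\times2$ hypergeometrics. This is what Corollary~\ref{cor:pvalindep} and Theorem~\ref{thm:mainT1E} actually invoke. To extract it from your identity, add one remark. Given the margins, the leaf table can be rebuilt window by window, coarse to fine, from the top-left counts $n(I^{left},J^{left},S)$. Each $g_0$ factor is a normalized pmf in its own variable, with conditioning counts fixed by coarser windows. So your product is a chain-rule decomposition along the DAG.
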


\begin{figure}[t!]
	\centering
	\includegraphics[width=0.8\linewidth]{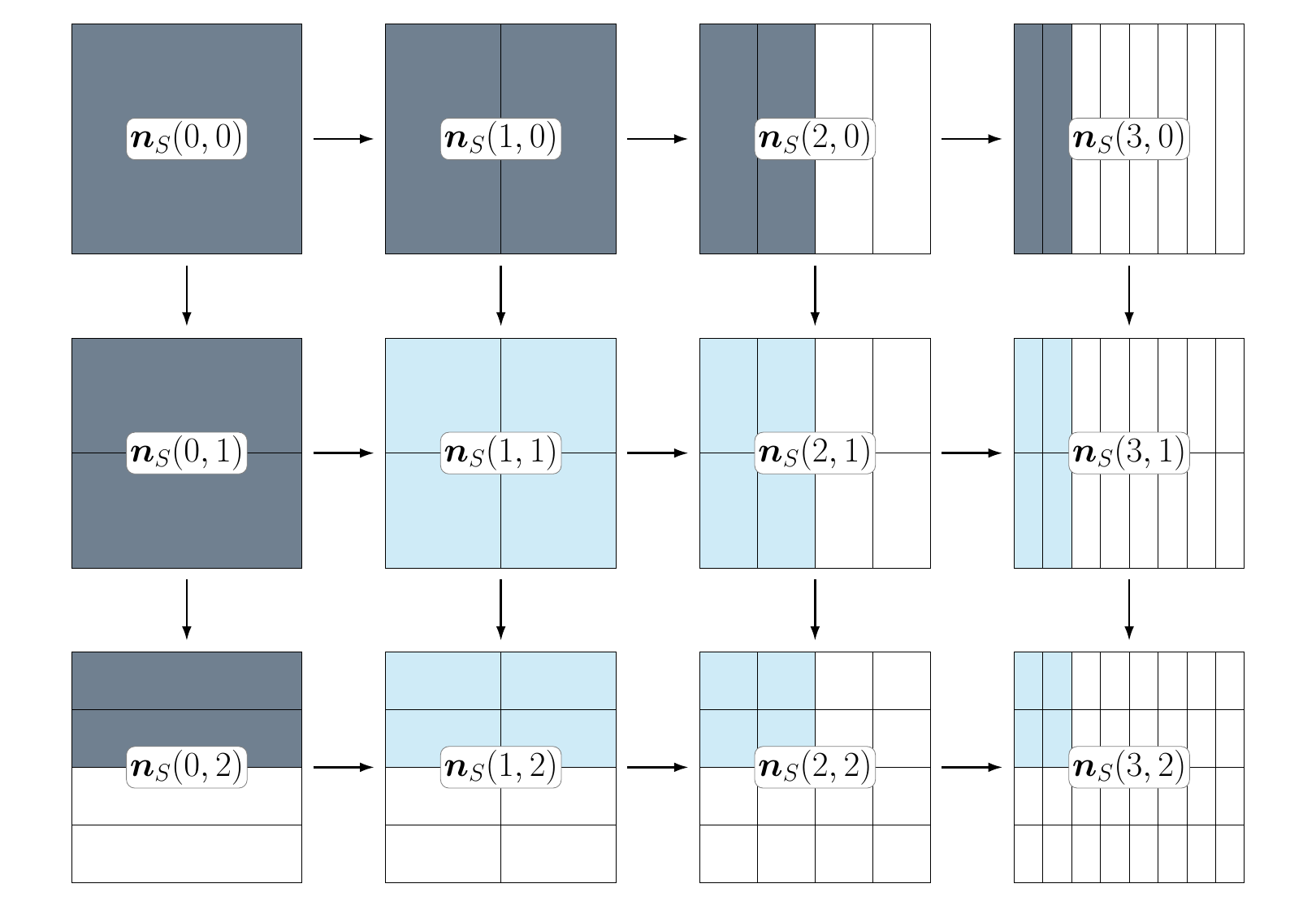}
	\caption{
		A pictorial illustration of the DAG for a three-way table $n(I_{3}, J_2, \mcl{S})$ in Theorem~\ref{thm:IJKfacto}.
		$\bm{n}_S(i,j)$ is a shorthand for a two-way table $n(I_i, J_j, S)$ at stratum $S\in \mcl{S}$.
		$\bm{n}_S(i-1,j)$ constitutes row sums, $\bm{n}_S(i,j-1)$ column sums of all the $2\times 2$ tables of $\bm{n}_S(i,j)$.
		For example, given the margin totals $\bm{n}_S(3,0)$ and $\bm{n}_S(0,2)$, the skyblue colored cell in $\bm{n}_S(1,1)$ is sampled conditioned on the cells marked in gray in $\bm{n}_S(0,1)$ and $\bm{n}_S(1,0)$.
		The process proceeds along the edges of the DAG to generate $\bm{n}_S(3,2)$.
		Repeating this process for all strata yields a draw of $n(I_{3}, J_2, \mcl{S})$ conditioned on $n(I_3, J_0, \mcl{S})$ and $n(I_0, J_2, \mcl{S})$.
	}
	\label{fig:dag}
\end{figure}

Under $\tilde{P}_0$ and conditioned on all stratum-specific margins, the joint distribution of the $2\times 2\times T$ tables formed at each node of the binary trees follows a directed acyclic graph (DAG) structure specified by \eqref{eqn:dag}; see Figure~\ref{fig:dag} for an example when $k_1=3$ and $k_2=2$.
To illustrate, consider a three-way table $n(I_{l_1}, J_{l_2}, \mcl{S})$ and let $l_1+l_2$ be its \emph{resolution}.
$n(I_{l_1}, J_{l_2}, \mcl{S})$ arises from a collection of non-overlapping windows $I\times J$ ($I\in I_{l_1-1}$, $J\in J_{l_2-1}$) scanning $\mcl{X}\times \mcl{Y}$, each corresponding to a $2\times 2\times T$ table $n(\{I^{left}, I\}, \{J^{left}, J\}, \mcl{S})$.
The parents of $n(I_{l_1}, J_{l_2}, \mcl{S})$ in the DAG, which are $n(I_{l_1-1}, J_{l_2}, \mcl{S})$ and $n(I_{l_1}, J_{l_2-1}, \mcl{S})$, constitute the collections of the row and column sums of such $2\times 2\times T$ tables.
The conditional distribution of the child $n(I_{l_1}, J_{l_2}, \mcl{S})$ given the parents is a product of independent $2\times 2\times T$ tables.

As a direct implication of Theorem~\ref{thm:IJKfacto}, if the test statistic of the local hypothesis of each window is a function of the corresponding $2\times 2\times T$ table only, then the $p$-values are mutually independent, conditioned on stratum-specific margin totals.
Since each window is conditionally independent under $\tilde{P}_0$, the corresponding CMH statistic converges in distribution to $\chi_1^2$.
All combined, the $p$-values of CMH tests on all coarse-to-fine $2\times 2\times T$ tables are asymptotically independent.
The following theorem summarizes this result.

\begin{corollary}[Independence of $p$-values]\label{cor:pvalindep}
	Suppose $p(I, J)$ is a $p$-value of any window $I\times J$ computed as a function of $2\times 2\times T$ table $n(\{I^{left}, I\}, \{J^{left}, J\}, \mcl{S})$ such that
	\begin{equation*}
		\lim_{n\to\infty}\mbb{P}_{\tilde{P}_0^n}\left\{
		p(I, J) \leq \alpha(I, J) \mid n(I^{left}, J, \mcl{S}), n(I, J^{left}, \mcl{S}), n(I, J, \mcl{S})
		\right\} = \alpha(I, J),
	\end{equation*}
	where $\alpha(I, J)$ is a significance level of a window $I\times J$.
	Then the following holds:
	\begin{align*}
		 & \mbb{P}_{\tilde{P}_0^n}\bigg\{
		\bigcap_{\substack{I \in \mcl{I}_{k_1-1}, J \in \mcl{J}_{k_2-1}}}
		\{p(I, J) \leq \alpha(I, J)\}\mid
		n(I_{k_1}, J_{0}, \mcl{S}), n(I_{0}, J_{k_2}, \mcl{S})
		\bigg\}                           \\& =
		\prod_{\substack{I \in \mcl{I}_{k_1-1}, J \in \mcl{J}_{k_2-1}}} \mbb{P}_{\tilde{P}_0^n} \left[
			p(I, J) \leq \alpha(I, J) \mid
			n(I^{left}, J), n(I, J^{left}), n(I, J)
			\right]                           \to
		\prod_{\substack{I \in \mcl{I}_{k_1-1}, J \in \mcl{J}_{k_2-1}}} \alpha(I, J).
	\end{align*}
\end{corollary}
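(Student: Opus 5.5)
The plan is to derive Corollary~\ref{cor:pvalindep} directly from the factorization in Theorem~\ref{thm:IJKfacto}, treating the DAG structure in \eqref{eqn:dag} as a sequential generative model.

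First, I will index the windows by resolution. For each window $I\times J$ with $I\in\mcl{I}_{k_1-1}$, $J\in\mcl{J}_{k_2-1}$, the factor in \eqref{eqn:dag} is the conditional pmf of the vector $\{n(I^{left},J^{left},S):S\in\mcl{S}\}$ given the window's margins $n(I^{left},J,\mcl{S})$, $n(I,J^{left},\mcl{S})$, $n(I,J,\mcl{S})$. These margins are cells of coarser tables in the DAG. I will order the windows by resolution $l_1+l_2$ and, within a resolution, arbitrarily. Then two facts hold:
\begin{itemize}
\item the full three-way table $n(I_{k_1},J_{k_2},\mcl{S})$ is in bijection with the collection of all ``free cells'' $n(I^{left},J^{left},\mcl{S})$ together with the fixed margins $n(I_{k_1},J_0,\mcl{S})$ and $n(I_0,J_{k_2},\mcl{S})$;
\item the margins of each window are deterministic functions of the margins and of the free cells of strictly earlier windows.
\end{itemize}
This is the same bookkeeping as in \citet{ma2019fisher}, applied stratum by stratum.

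Given this, \eqref{eqn:dag} says the joint conditional law is a product of kernels. Each kernel is the law of one window's free cells given that window's own margins, and each is a probability kernel summing to one over its support.

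Let $A(I,J)=\{p(I,J)\le\alpha(I,J)\}$. Since $p(I,J)$ is a function of the $2\times2\times T$ table $n(\{I^{left},I\},\{J^{left},J\},\mcl{S})$, the event $A(I,J)$ is determined by the window's free cells and its margins. I will sum the indicator $\prod 1_{A(I,J)}$ against the product kernel, peeling off windows from the last in the order (finest resolution) to the first. When I sum over the free cells of the last window, the factor $1_{A}$ times its kernel gives $\mbb{P}\{A(I,J)\mid \text{its margins}\}$. Under $\tilde{P}_0$ this is the CMH-type conditional probability, a function of those margins only.

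The main obstacle is here. For the conditional probabilities to factor out as constants, $\mbb{P}\{A(I,J)\mid\text{margins}\}$ must not depend on the earlier free cells. Exactly this holds only asymptotically: the statement's hypothesis gives convergence to $\alpha(I,J)$ for each margin configuration. So I will read the displayed equality the way the corollary states it, as a product of conditional probabilities. The iterated summation yields the exact identity with each factor being $\mbb{P}[A(I,J)\mid n(I^{left},J),n(I,J^{left}),n(I,J)]$ evaluated at the realized margins.

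Finally, I will pass to the limit using the hypothesis that each factor tends to $\alpha(I,J)$. Because the number of windows is finite (fixed $k_1,k_2$) and each factor lies in $[0,1]$, the product of the limits is the limit of the product. If needed, dominated convergence handles integrating out the margins, giving $\prod\alpha(I,J)$.
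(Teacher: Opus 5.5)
Your route is the paper's. You use the DAG factorization of Theorem~\ref{thm:IJKfacto}, integrate out windows from the finest resolution inward, and then apply the per-window hypothesis. The paper peels a whole resolution layer at once by conditioning on $\{n(I_i,J_j,\mathcal{S}):i+j=k+1\}$; your window-by-window order is equivalent. Your bookkeeping of free cells and margins is correct.

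The gap is your claim that iterated summation gives an \emph{exact} identity, with each factor $f_{IJ}=\mathbb{P}[A(I,J)\mid n(I^{left},J),n(I,J^{left}),n(I,J)]$ evaluated at realized margins. Write $\mathcal{M}$ for $n(I_{k_1},J_0,\mathcal{S}),n(I_0,J_{k_2},\mathcal{S})$. The claim amounts to $\mathbb{P}(\bigcap A(I,J)\mid\mathcal{M})=\mathbb{E}[\prod f_{IJ}\mid\mathcal{M}]$, which is false in general.

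The reason is that after you sum out a fine window, its factor depends on its margins, and these contain free cells of coarser windows. Take $k_1=2$, $k_2=1$. The root's free cell $x=n(I^{left},J^{left},\mathcal{S})$ is a margin of $I^{left}\times J$ directly, and of $I^{right}\times J$ through $n(I,J^{left})-x$. Summing out both children leaves $\sum_x 1_{A(I,J)}(x)\,f_{I^{left}J}(x)\,f_{I^{right}J}(x)\,g(x)$, where $g$ is the root's hypergeometric kernel. Your identity would instead give $f_{IJ}\sum_x f_{I^{left}J}(x)f_{I^{right}J}(x)g(x)$. The two differ by the covariance under $g$ of $1_{A(I,J)}$ and $f_{I^{left}J}f_{I^{right}J}$, which is generically nonzero in finite samples. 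This is exactly the dependence you yourself noted vanishes only asymptotically.

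The paper's steps (b)--(c) make the same move, and (c) additionally pulls a random product outside $\mathbb{E}[\cdot\mid\mathcal{M}]$. So the middle ``$=$'' in the statement holds only in the limit.

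The limit survives if you use a perturbation bound instead of an identity. Order the windows $w_1,\dots,w_m$ by resolution and set $B_j=\prod_{i\le j}1_{A(w_i)}$. Summing out $w_m$ gives exactly $\mathbb{E}[B_m\mid\mathcal{M}]=\mathbb{E}[B_{m-1}f_{w_m}\mid\mathcal{M}]$. Since $0\le B_{m-1}\le 1$, this gives $|\mathbb{E}[B_m\mid\mathcal{M}]-\alpha(w_m)\mathbb{E}[B_{m-1}\mid\mathcal{M}]|\le\mathbb{E}[|f_{w_m}-\alpha(w_m)|\mid\mathcal{M}]$. Induction then yields
\[
\Bigl|\,\mathbb{P}\Bigl(\bigcap_j A(w_j)\Bigm|\mathcal{M}\Bigr)-\prod_j\alpha(w_j)\Bigr|
\le\sum_j\mathbb{E}\bigl[\,|f_{w_j}-\alpha(w_j)|\bigm|\mathcal{M}\bigr].
\]
Read the hypothesis as $f_{w_j}\to\alpha(w_j)$ in probability, which is the sensible reading since window margins are random and change with $n$. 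Then each summand vanishes by bounded convergence, and with finitely many windows the right side tends to zero.
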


\subsection{Multiscale CMH scanning}\label{sec:cmh scanning}

\begin{algorithm}[t!]
	\caption{Multiscale CMH (multiCMH) with three-stage Šidák correction}
	\label{alg:main}
	\begin{algorithmic}[1]
		\State $k_1', k_2' \leftarrow k_1-1, k_2-1$

		\For{$k = 0, 1, 2, \dots, k_1'+k_2'$} \Comment{Scan from low to high resolutions}
		\State $U(k) \leftarrow 0$ 
		\For{$l_1 = \max(0, k-k_2'), \cdots, \min(k_1', k)$} \Comment{Scan each partition}
		\State $l_2 \leftarrow k - l_1$
		\State $L(l_1,l_2) \leftarrow 0$ 

		\For{each window $I\times J$ ($I \in I_{l_1}, J \in J_{l_2}$)} \Comment{Scan each window}
		\If{$V(I, J) = 1$}
		\State $\mcl{S}_{IJ} \leftarrow \Call{medtree}{\{z_i:x_i\in I, y_i\in J\},\eta}$
		\State Compute $p(I, J, \mcl{S}_{IJ})$
		\State $L(l_1,l_2) \leftarrow L(l_1,l_2) + 1$
		\EndIf
		\EndFor

		\If{$L(l_1,l_2) > 0$}
		\State Compute $\tilde{p}_{l_1, l_2}$
		\Comment{Partition multiplicity control}
		\State $U(k) \leftarrow U(k) + 1$
		\EndIf
		\EndFor

		\If{$U(k) > 0$}
		\State Compute $\tilde{p}_{k}$
		\Comment{Resolution multiplicity control}
		\EndIf
		\EndFor

		\State Compute $\tilde{p}$ and reject the null at level $\alpha$ if $\tilde{p} \leq \alpha$
		\Comment{Overall multiplicity control}
		\State Report significant windows where $p(I, J, \mcl{S}_{IJ}) \leq \alpha_n(I, J)$
	\end{algorithmic}
\end{algorithm}

We are now ready to describe multiCMH.
Let $F_{\chi^2_1}$ be the cdf of the $\chi^2_1$ distribution.
For a window $I\times J$ and a $T_{IJ}$-stratification $\mcl{S}_{IJ}$ specific to a window $I\times J$, the $p$-value of the one-sided CMH test is given as $p(I, J, \mcl{S}_{IJ})=1 - F_{\chi^2_1}\left(M_n^2(I, J, \mcl{S}_{IJ})\right)$, where
\begin{align*}
	M_{n}(I, J, \mcl{S}_{IJ}) & =
	\frac{\sum_{S\in \mcl{S}_{IJ}} \left(n(I^{left},J^{left},S) - \mu_{S}\right)}{\sqrt{\sum_{S\in \mcl{S}_{IJ}} \sigma^2_{S}}}, \quad
	\mu_{S} = \frac{n(I^{left},J,S) n(I,J^{left},S)}{n(I,J,S)},                                                                      \\
	\sigma^2_{S}              & = \frac{n(I^{left},J,S) n(I^{right},J,S) n(I,J^{left},S) n(I,J^{right},S)}{n(I,J,S)^2 (n(I,J,S)-1)}.
\end{align*}
Given a finite $k_1$ and $k_2$, we let a $T$-stratification $\mcl{S}$ be the finest common refinement of all the window-specific stratifications $\{\mcl{S}_{IJ}:I\in \mcl{I}_{k_1-1}, J\in \mcl{J}_{k_2-1}\}$.
Since $\mcl{S} \preceq \mcl{S}_{IJ}$, $M_{n}(I, J, \mcl{S}_{IJ})$ is also a function of $n(\{I^{left}, I\}, \{J^{left}, J\}, \mcl{S})$; hence, by Corollary~\ref{cor:pvalindep}, the $p$-values $p(I, J, \mcl{S}_{IJ})$ are asymptotically mutually independent across all windows conditioned on stratum-specific margin totals under $\tilde{P}_0$.

Our inference recipe consists of three steps: screening, $p$-value computation, and multiplicity adjustment.
See Algorithm~\ref{alg:main} for an overview of our algorithm.
Let $n_{IJ}=|\{i:X_i\in I, Y_i\in J\}|$ be the number of observations in a window $I\times J$.
$n_{IJ}$ will be smaller for higher resolutions, with many cells potentially empty within the corresponding $2\times 2\times T$ table.
The $p$-values of such windows are likely to be far from converging to their asymptotic distributions for a limited sample size, incurring only extra penalties in multiplicity correction.
Therefore, we sift out such redundant windows with a screening rule $V(I,J)\in\{0,1\}$:
specifically, $V(I,J)=0$ if either the total count $n_{IJ}$ is less than $v_{all}$ or if any of the margins aggregated over all strata fall below $v_{margin}$.
If $V(I,J)=1$, we proceed to set $T_{IJ} = \lceil n_{IJ}/\eta\rceil$, where $\eta \in \mbb{N}$ is the desired number of observations in each stratum, construct $\mcl{S}_{IJ}$, and compute $p(I, J, \mcl{S}_{IJ})$.
By default, we set $v_{all}=20$ and $v_{margin}=10$.

Given the independence of the $p$-values, any multiplicity correction procedure can be employed to control the family-wise error rate (FWER), which is equivalent to T1E of the global null hypothesis.
However, the hierarchical structure of the dyadic trees should be taken into account.
Simultaneously applying a simple Bonferroni correction across all windows would impose an undue penalty on larger windows.
For this reason, our algorithm applies Šidák's correction \citep{sidak1967rectangular} in three hierarchical stages, as in \citet{ma2019fisher}: first, to windows within the same product partition $I_{l_1} \times J_{l_2}$ of $\mcl{X}\times \mcl{Y}$; second, to the corrected $p$-values across all partitions of the same resolution; and lastly, across all resolutions.
Specifically, for each product partition $I_{l_1} \times J_{l_2}$, we define the \emph{partition-wise} $p$-value as $\tilde{p}_{l_1, l_2} = 1 - \left(1 - \min_{I\in I_{l_1}, J \in J_{l_2}, V(I,J)=1} p(I, J, \mcl{S}_{IJ})\right)^{L(l_1,l_2)}$,
where $L(l_1,l_2)$ is the number of windows with valid ($V(I,J)=1$) $p$-values.
After computing $\tilde{p}_{l_1, l_2}$ for all partitions $(l_1, l_2)$ of the same resolution $l_1+l_2=k$, we compute the \emph{resolution-wise} $p$-value as $\tilde{p}_{k} = 1 - \left(1 - \min_{l_1+l_2=k, L(l_1, l_2)>0} \tilde{p}_{l_1, l_2} \right)^{U(k)}$,
where $U(k)$ is the number of valid ($L(l_1,l_2)>0$) partition-wise $p$-values.
Lastly, using the $p$-values of all resolutions from zero up to $k_1+k_2-2$, we compute the \emph{overall} $p$-value as $\tilde{p} = 1 - \left(1 - \min_{0\leq k \leq k_1+k_2-2} \tilde{p}_{k} \right)^{k_1+k_2-1}$.
This hierarchical procedure controls the FWER by construction and ensures that $p$-values are only corrected against others of the same granularity.

\begin{theorem}[Asymptotic level control]\label{thm:mainT1E}
	Suppose there exists $\mcl{S}$, a common $T$-stratification of $\mcl{Z}$ such that $\mcl{S} \preceq \mcl{S}_{IJ}$ and $T \asymp T_{IJ}$ for any $I\in \mcl{I}_{k_1-1}$ and $J \in \mcl{J}_{k_2-1}$.
	Let $h=\max_{S \in \mcl{S}} \sup_{z,z'\in S}\delta(z,z')$ be its maximal stratum diameter.
	If $P \in \mcl{P}_{0}$ is marginally smooth and $h = o(n^{-1/4})$, then for any $\alpha\in(0,1)$, $\lim_{n\to \infty} \mbb{P}_{P^n}\{\tilde{p} \leq \alpha \} = \alpha$.
\end{theorem}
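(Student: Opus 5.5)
The plan is to proceed in two stages, mirroring the proof of Theorem~\ref{thm:tbl22 T1E}. First, transfer from $P^n$ to the CI projection $\tilde{P}_0^n$ at a cost of the total variation distance. Second, use Corollary~\ref{cor:pvalindep} to compute the exact limiting null distribution of the Šidák-combined $p$-value under $\tilde{P}_0^n$.

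\textbf{Transfer step.} The event $\{\tilde{p}\le\alpha\}$ is measurable with respect to the three-way table $n(I_{k_1},J_{k_2},\mcl{S})$. Each $p(I,J,\mcl{S}_{IJ})$ depends only on this table, because $\mcl{S}\preceq\mcl{S}_{IJ}$, and so do the screening indicators $V(I,J)$. Treating the window-specific stratifications as fixed (conditional on the $Z$-sample), this gives
\[
\bigl|\mbb{P}_{P^n}\{\tilde{p}\le\alpha\}-\mbb{P}_{\tilde{P}_0^n}\{\tilde{p}\le\alpha\}\bigr|\le \mcl{D}_{TV}(\tilde{P}^n,\tilde{P}_0^n).
\]
Here $\tilde{P}$ and $\tilde{P}_0$ are the laws of the discretized triple $(\mathbf{1}\{X\in\cdot\},\mathbf{1}\{Y\in\cdot\},\tilde{Z})$ on the $2^{k_1}\times2^{k_2}\times T$ cells. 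By data processing, the coarsening of $X$ and $Y$ can only decrease Hellinger distance. So I would reuse the bound established for Theorem~\ref{thm:tbl22 T1E}:
\begin{itemize}
\item $\mcl{D}_{TV}(\tilde{P}^n,\tilde{P}_0^n)\le\sqrt{2n}\,\mcl{D}_H(\tilde{P},\tilde{P}_0)$;
\item marginal smoothness within strata of diameter at most $h$ gives $\mcl{D}_H^2(\tilde{P},\tilde{P}_0)=O(L^4h^4)$, i.e.\ a second-order term from the variance of the $Z$-mixture.
\end{itemize}
The total-variation bound is then $O(\sqrt{n}\,h^2)=o(1)$ when $h=o(n^{-1/4})$. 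The cell structure here is finer than the $2\times2$ case ($2^{k_1}\times 2^{k_2}$ cells), but $k_1,k_2$ are fixed, so the constants are unaffected.

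\textbf{Null distribution under $\tilde{P}_0^n$.} I would condition on the stratum margins $n(I_{k_1},J_0,\mcl{S})$ and $n(I_0,J_{k_2},\mcl{S})$. These determine every $V(I,J)$, since $V$ depends on $n_{IJ}$ and the aggregated window margins, which are row and column sums of parents in the DAG. Hence $L(l_1,l_2)$ and $U(k)$ are fixed under the conditioning. By Corollary~\ref{cor:pvalindep}, the valid window $p$-values are conditionally asymptotically independent. Each is asymptotically uniform, because its window is a conditionally independent $2\times2\times T_{IJ}$ table with $T_{IJ}\asymp T\to\infty$, and the CMH statistic is then asymptotically $\chi^2_1$.

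Given independent uniforms $U_1,\dots,U_L$, $1-(1-\min_j U_j)^L$ is exactly uniform. Applying this three times yields the result:
\begin{itemize}
\item within each partition, using independence across windows;
\item across partitions of the same resolution, since the partition-wise $p$-values are functions of disjoint sets of windows and are therefore independent;
\item across resolutions, by the same argument.
\end{itemize}
Thus $\mbb{P}\{\tilde{p}\le\alpha\mid\text{margins}\}\to\alpha$. The number of valid windows is bounded by $2^{k_1+k_2}$, so the finite-product limits interchange freely. Dominated convergence then removes the conditioning.

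\textbf{Main obstacle.} The hardest step is justifying the conditional asymptotic uniformity of each window's CMH $p$-value given the margins. This is a conditional CLT for a sum of independent hypergeometric deviations with bounded stratum sizes, and it needs $\sum_S\sigma_S^2\to\infty$. I would get this by Lindeberg: bounded summands, together with the screening margins and $T_{IJ}\asymp n$, should force the variance sum to grow linearly on a set of margins of probability tending to one.

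A secondary subtlety is that $\mcl{S}_{IJ}$ is data-dependent through the $Z_i$. I would handle this by conditioning on the $Z$-sample, under which the strata are fixed and the argument above goes through.
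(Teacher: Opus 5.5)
\textbf{Gap: the screening indicators are not fixed by the top-level margins.} Your two-stage plan is the paper's. You transfer from $P^n$ to $\tilde P_0^n$ at cost $O(\sqrt{n}\,h^2)$ using the Hellinger bound from Theorem~\ref{thm:tbl22 T1E}, then combine the multiscale factorization with an unravelling of the three \v{S}id\'ak stages. Your two-sided bound $|\mbb{P}_{P^n}\{\tilde p\le\alpha\}-\mbb{P}_{\tilde P_0^n}\{\tilde p\le\alpha\}|\le\mcl{D}_{TV}(\tilde P^n,\tilde P_0^n)$ is in fact what the equality $\lim=\alpha$ needs; the paper writes only the one-sided version.

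The problem is your claim that $n(I_{k_1},J_0,\mcl{S})$ and $n(I_0,J_{k_2},\mcl{S})$ determine every $V(I,J)$, so that $L(l_1,l_2)$ and $U(k)$ are fixed under the conditioning. This is false. For a window $I\times J$, screening uses $n_{IJ}$ and the aggregated margins $\sum_S n(I^{left},J,S)$ and $\sum_S n(I,J^{left},S)$. Apart from the root window, these involve joint cells that are not functions of the top-level row and column totals.

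Worse, this randomness is tied to other windows' statistics. The screening of $I^{left}\times J$ at resolution $k$ uses the margin $\sum_S n(I^{left},J^{left},S)$. That is exactly the quantity whose deviation from $\sum_S\mu_S$ drives $p(I,J)$ at resolution $k-1$. So the \v{S}id\'ak exponents are random and correlated with lower-resolution $p$-values. The fact that $1-(1-\min_j u_j)^{L}$ is uniform for independent uniforms $u_j$ and a \emph{fixed} $L$ therefore cannot be applied as you do.

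\textbf{How the paper handles it.} The paper peels off resolutions from the top. For resolution $k$ it conditions on the tables $\{n(I_i,J_j,\mcl{S}):i+j=k+1\}$. These fix $V$, $L(\cdot,\cdot)$ and $U(k)$ at resolution $k$, and they determine all events at lower resolutions. Theorem~\ref{thm:IJKfacto} then makes the resolution-$k$ window $p$-values conditionally independent. Hence the conditional probability that $\tilde p_k$ stays above its threshold tends to $(1-\alpha)^{1/(k_1+k_2-1)}$ for every realized $L$ and $U$. Because this limit does not depend on the conditioning variables, iterating the tower property with bounded convergence gives $1-\alpha$. Alternatively, if all window cell and margin probabilities are positive, every $V(I,J)=1$ with probability tending to one, and your fixed-count argument can be run on that event.

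\textbf{Two smaller points.}
\begin{itemize}
\item Conditioning on the $Z$-sample does not fix $\mcl{S}_{IJ}$, because \textsc{medtree} is run on $\{z_i:x_i\in I,\,y_i\in J\}$, which depends on the $X_i,Y_i$. Also, given the $Z_i$ the data are no longer iid from $\tilde P$, so the $\tilde P^n$ versus $\tilde P_0^n$ comparison stops being the relevant one. Treat the stratifications as fixed, as the paper does.
\item Corollary~\ref{cor:pvalindep} needs conditional uniformity given the $\mcl{S}$-level margins. Under $\tilde P_0$, the window table re-aggregated to the coarser $\mcl{S}_{IJ}$ is a within-stratum mixture, not a null table. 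So ``each window is a conditionally independent $2\times2\times T_{IJ}$ table'' does not give you that directly. The paper first reduces to the common stratification $\mcl{S}$, appealing to $T\asymp T_{IJ}$, before invoking the corollary; you should make that reduction explicit too.
\end{itemize}
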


Beyond simply rejecting the global null hypothesis, the compositional nature of our test allows us to trace a rejection of the null back to the windows that contributed to it.
Šidák's correction can be re-expressed as an adjustment to the significance level for each individual window: given $\alpha$, the multiplicity-corrected significance level of a window $I\times J$ in partition $I_{l_1} \times J_{l_2}$ of resolution $l_1+l_2=k$ is $\alpha_n(I, J) = 1 - (1 - \alpha)^{1 / [(k_1+k_2-1) \cdot U(k) \cdot L(l_1,l_2)]}$.
We define windows where $p(I,J, \mcl{S}_{IJ}) \leq \alpha_n(I, J)$ as \emph{significant windows}.
Taken together, these windows form a map of the conditional dependency structure.
Indeed, if the conditional log odds ratio within a window is either simultaneously positive or negative with a non-zero probability, then the power of our test to reject the local null hypothesis converges to one as the sample size increases.
This local consistency leads immediately to global consistency, even when the resolution is allowed to grow with the sample size.

\begin{theorem}[Local and global consistency]\label{thm:mainPower}
	For $Q \in \mcl{P}$, consider its truncation to $I\times J \times \mcl{Z}$ for $I\in \mcl{I}_{k_1-1}$ and $J \in \mcl{J}_{k_2-1}$.
	If the induced distribution of a triplet $(1_{X\in I^{left}}, 1_{Y\in J^{left}}, Z)$ satisfies the conditions of Theorem~\ref{thm:tbl22 Power}, and if $T\asymp n$ and $h=o(n^{-1/2})$, then for any $\alpha\in(0,1)$,
	$\lim_{n\to\infty} \mbb{P}_{Q^n}\{ p(I, J, \mcl{S}_{IJ}) \leq \alpha_n(I, J) \mid n(I_{k_1}, J_{0}, \mcl{S}_{IJ}), n(I_{0}, J_{k_2}, \mcl{S}_{IJ}) \} = 1$.
	If one or more such windows exist, and if $k_1, k_2$ are either fixed or of order $\mcl{O}(\log n)$, then $\lim_{n\to \infty} \mbb{P}_{Q^n} \{\tilde{p} \leq \alpha \} = 1$ for any $\alpha\in(0,1)$.
\end{theorem}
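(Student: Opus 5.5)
The plan is to reduce the local claim to the proof of Theorem~\ref{thm:tbl22 Power} applied to the binary triplet $(1_{X\in I^{left}}, 1_{Y\in J^{left}}, Z)$ restricted to $I\times J\times\mcl{Z}$. The key difference is that the threshold $\alpha_n(I,J)$ now shrinks with $n$.

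\textbf{Step 1: divergence of the window statistic.} Conditioned on $n_{IJ}$, which grows linearly in $n$ almost surely because $Q(I\times J)>0$, the observations falling in the window are iid from the truncated law. That truncated law satisfies the hypotheses of Theorem~\ref{thm:tbl22 Power}, with $T_{IJ}\asymp n_{IJ}$ and $h=o(n^{-1/2})$. I will reuse the mechanism of that proof rather than only its conclusion. Write $D_n=\sum_S (n(I^{left},J^{left},S)-\mu_S)$. Under homogeneous nonnegative $\theta$ that is positive with positive probability, each stratum contributes a term whose conditional mean, given its margins, is at least $c>0$ on a positive fraction of strata. The smoothness conditions and $h=o(n^{-1/2})$ make the discretization bias negligible. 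Hence $D_n\gtrsim c\,T_{IJ}\asymp n$, while $\sum_S\sigma_S^2\asymp n$ because the stratum sizes are bounded. Therefore $M_n(I,J,\mcl{S}_{IJ})\ge c'\sqrt n$ with probability tending to one. I would state this as an intermediate lemma: $M_n/\sqrt{n}$ is bounded below in probability, which is stronger than the bare consistency in Theorem~\ref{thm:tbl22 Power}.

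\textbf{Step 2: comparison with the shrinking level.} Since $1-(1-\alpha)^{1/N}\ge \alpha/N$ up to constants, we have $\alpha_n(I,J)\gtrsim \alpha/[(k_1+k_2-1)U(k)L(l_1,l_2)]$. Here $U(k)\le k+1$ and $L(l_1,l_2)\le 2^{k_1+k_2}$. If $k_1,k_2=\mcl{O}(\log n)$, then $\alpha_n(I,J)\gtrsim n^{-C}$ for some constant $C$. On the other hand, $p(I,J,\mcl{S}_{IJ})=1-F_{\chi^2_1}(M_n^2)\le e^{-M_n^2/2}\le e^{-c'^2 n/2}$. This decays faster than any polynomial, so $p\le\alpha_n$ with probability tending to one.

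\textbf{Conditioning on margins.} The margins are only a function of the data, and the event holds with probability tending to one unconditionally. Because the conditional probability is bounded, convergence in probability to one of the conditional probability follows. For the literal conditional statement, I will invoke the Step 1 lower bound conditionally: given margins, $D_n$ is a sum of independent noncentral hypergeometric terms. Its conditional mean is of order $n$ on a high-probability set of margins, and its conditional variance is $\mcl{O}(n)$, so Chebyshev applies.

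\textbf{Global consistency.} The global part follows because rejection of one such window forces $\tilde p\le\alpha$. By construction of the three-stage Šidák correction, $p(I,J)\le\alpha_n(I,J)$ implies $\tilde p_{l_1,l_2}\le 1-(1-\alpha)^{1/[(k_1+k_2-1)U(k)]}$. That in turn implies $\tilde p_k\le 1-(1-\alpha)^{1/(k_1+k_2-1)}$, and hence $\tilde p\le\alpha$. This requires checking that the window survives screening, i.e.\ $V(I,J)=1$ eventually, which holds because its counts grow linearly.

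\textbf{Main obstacle.} I expect the hardest part to be Step 1's quantitative rate, namely showing $M_n$ grows like $\sqrt n$ uniformly enough to beat a polynomially small level. The plan is to bound the stratum-wise bias by the Hellinger smoothness times $h$: the bias is $o(n^{-1/2})$ per stratum, so $o(\sqrt n)$ in total. This is negligible against the order-$n$ signal.
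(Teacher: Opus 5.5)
Your proposal is correct. Its skeleton matches the paper's: apply the machinery of Theorem~\ref{thm:tbl22 Power} to the window's binary triplet, show that the standardized CMH drift grows like $\sqrt n$ (the term $G_T$ in \eqref{supp-eqn:GtHt}), and check that this beats the shrinking \v{S}id\'ak level. The execution differs in three places.

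First, the discretization error. The paper handles it through the total-variation split \eqref{supp-eqn:gqTV} between $g_Q$ and $g_{Q^\ast}$. This is modular, since the drift is then analysed under $Q^\ast$, where every stratum log odds ratio is exactly $\theta(z_t^\ast)\ge0$. But it costs $\sqrt{T}\,O(h)$, and that is precisely where $h=o(n^{-1/2})$ is used. Your direct bound on the mean shift is $O(h)$ per stratum against a drift of order $n$, so it only needs $Th=o(n)$; for this step $h\to0$ would already suffice.

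Second, the shrinking level. The paper compares on the quantile scale, which needs a lower bound on $\alpha_n$. The bound it implicitly uses, $\alpha_n\asymp1/\log n$, ignores that $L(l_1,l_2)$ can grow polynomially in $n$ when $k_1,k_2=\mcl{O}(\log n)$. It also writes $z_{\alpha_n/2}\asymp1/\alpha_n$, whereas the Gaussian quantile is of order $\sqrt{\log(1/\alpha_n)}$. The conclusion survives only because $z_{\alpha_n/2}=\mcl{O}(\sqrt{\log n})=o(\sqrt n)$. Your comparison of $p(I,J,\mcl{S}_{IJ})\le e^{-M_n^2/2}\le e^{-(c')^2n/2}$ with $\alpha_n(I,J)\gtrsim n^{-C}$ is the correct form of this step.

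Third, the global claim. The paper only asserts that local consistency implies global consistency. You verify the pathwise chain $p(I,J,\mcl{S}_{IJ})\le\alpha_n(I,J)\Rightarrow\tilde p_{l_1,l_2}\le1-(1-\alpha)^{1/[(k_1+k_2-1)U(k)]}\Rightarrow\tilde p_k\le1-(1-\alpha)^{1/(k_1+k_2-1)}\Rightarrow\tilde p\le\alpha$. You also check that the window eventually survives the screening rule.

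One caution on conditioning. The $\sigma$-field in the statement is generated by $n(I_{k_1},J_0,\mcl{S}_{IJ})$ and $n(I_0,J_{k_2},\mcl{S}_{IJ})$, and these do not in general determine the window's own margins. So your Chebyshev argument given the window margins conditions on a different $\sigma$-field, as does the paper's appeal to \eqref{supp-eqn:gqTV}. What delivers the statement as written is your unconditional argument: $\mbb{E}[1-\mbb{P}(A_n\mid\mcl{G})]=1-\mbb{P}(A_n)\to0$ for any $\sigma$-field $\mcl{G}$, so the conditional probability tends to one in probability.
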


We conclude our recipe with a note on selecting the partition depth $k_1$ and $k_2$.
Although successive median splits allow for a maximum depth of $\lceil \log_2 n \rceil$, such granularity is unnecessary in practice.
Since our screening rule skips windows with minimum marginal counts (aggregated across strata) below $v_{margin}$, a more practical choice is $\lceil \log_2 (n/v_{margin}) \rceil$.
Alternatively, one may impose a manual cap by setting $\max(k_{max}, \lceil \log_2 (n/v_{margin}) \rceil)$, where $k_{max}$ is chosen so that partitions of size $2^{-k_{max}}$ are sufficiently fine-grained for finite samples.
By default, we take $k_{max}=7$, corresponding to a partition width of about $2^{-7}\approx 0.008$ on the empirical scale.

\section{Simulation Studies}
We conduct two simulations to assess the performance of multiCMH relative to other state-of-the-art methods.
First, we evaluate the finite-sample performance of multiCMH in terms of T1E control and statistical power compared to other methods.
Second, we demonstrate its computational efficiency and scalability on datasets with sample sizes exceeding one million.
All simulations were executed on a high-performance computing cluster using CPU nodes equipped with Intel Xeon Gold 6226 processors.
The code to reproduce the figures is at \url{https://github.com/hun-learning94/multiCMH}.

For reproducible comparisons that are relevant for practitioners, the competing methods were selected based on their public availability and active maintenance.
These methods include the conditional distance correlation (CDIT) \citep{wang2015conditional}, a classifier-based test (CCIT) \citep{sen2017model}, a conditional mutual information statistic estimated based on $k$-nearest neighbors (CMIknn) \citep{runge2018conditional}, the randomized conditional independence test (RCIT) \citep{strobl2019approximate}, the generalized covariance measure (GCM) \citep{shah2020hardness} and its weighted version (wGCM) \citep{scheidegger2022weighted}, and a kernel mean embedding distance (LPCIT) \citep{scetbon2022asymptotic}.

\subsection{Simulation 1: T1E and power analysis}
\label{sec:sim1}

\begin{figure}[t!]
	\centering
	\includegraphics[width=0.95\linewidth]{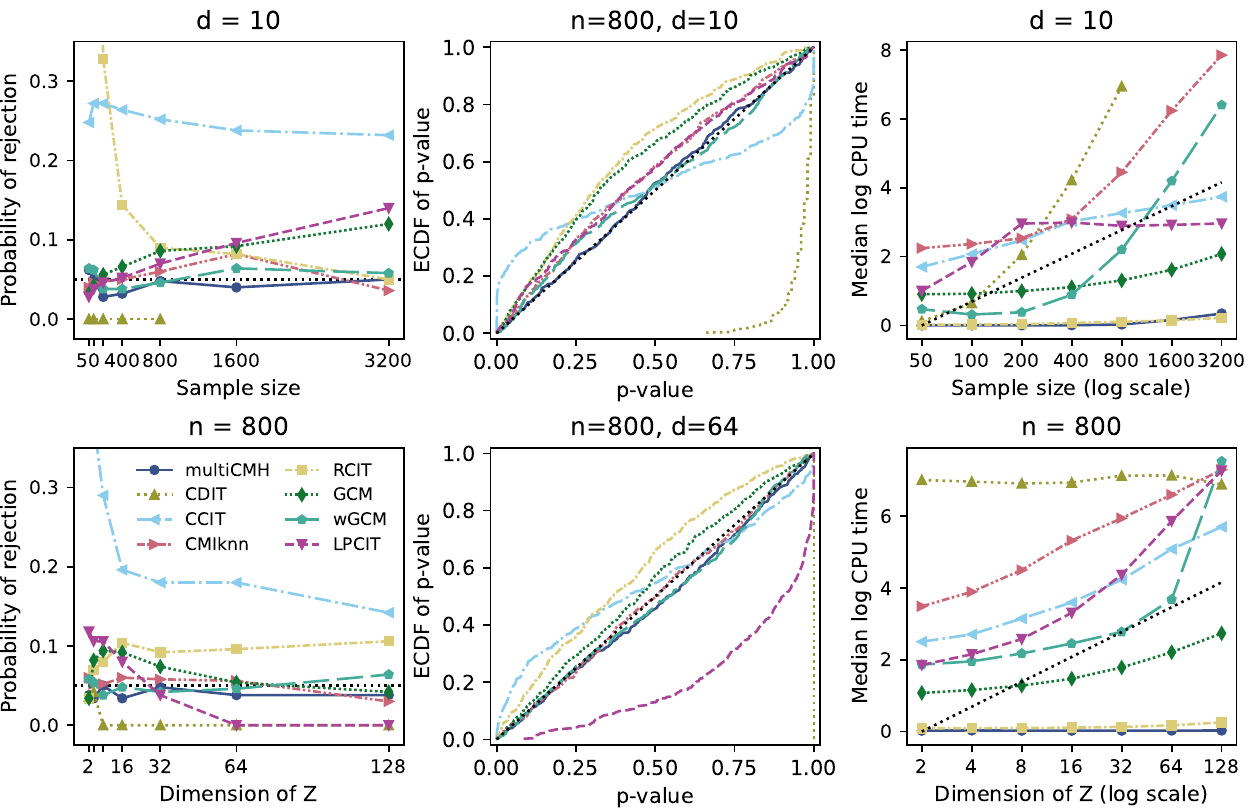}
	\caption{Results of Simulation~1. Probabilities of rejection are truncated at $0.35$, and the nominal level $0.05$ is indicated as a gray dotted line. In the ECDF and median log CPU time plots, a $45^\circ$ reference line through the origin is also drawn in gray dotted line.}
	\label{fig:sim1 T1E}
\end{figure}

\begin{figure}[t!]
	\centering
	\includegraphics[width=0.95\linewidth]{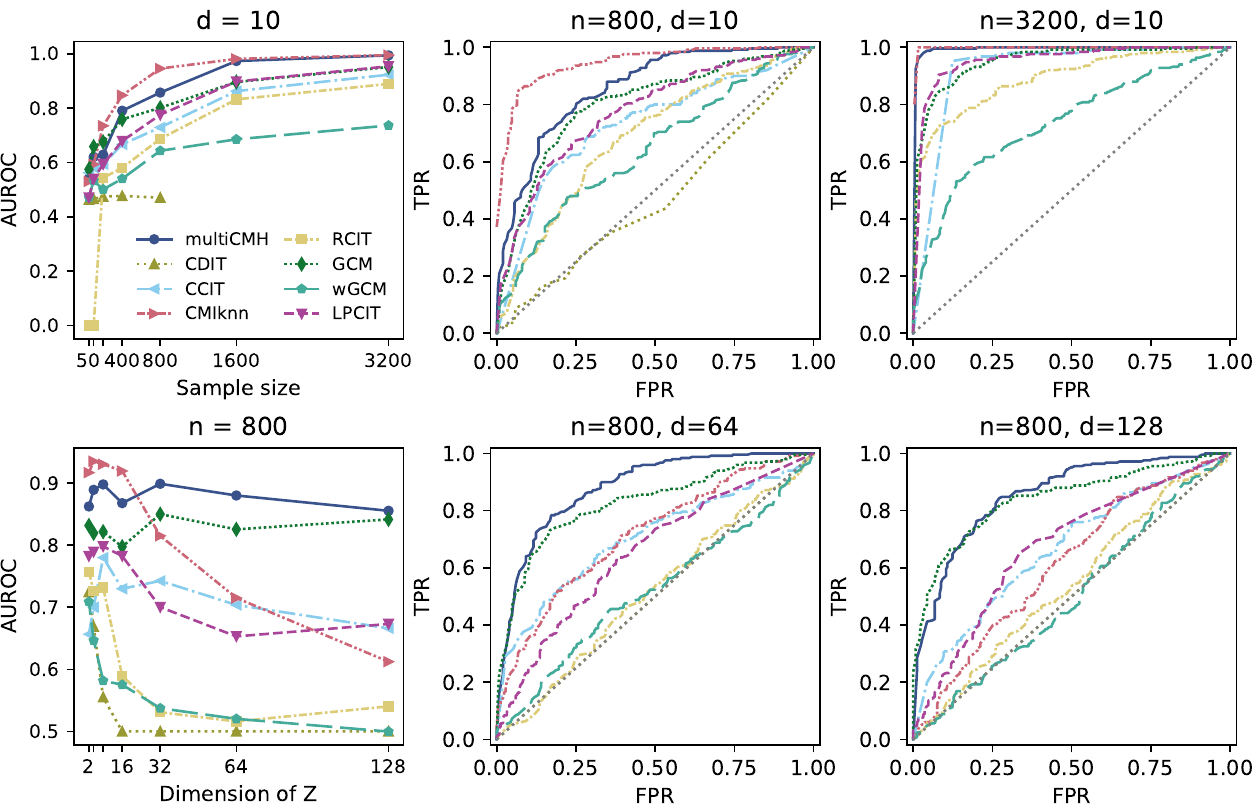}
	\caption{Results of Simulation~1. In the ROC plots, a $45^\circ$ reference line through the origin is included as a gray dotted line.}
	\label{fig:sim1 power}
\end{figure}

For a fair comparison, we adopt a standard data-generating procedure from the literature, referred to as the post-nonlinear noise model \citep{zhang2009identifiability}.
Originally introduced in \citet{zhang2012kernel}, this scenario has since become a benchmark for assessing methodological performance in numerous studies \citep{doran2014permutation, runge2018conditional, strobl2019approximate, scetbon2022asymptotic, li2023k, yang2025conditional}.

Under the null scenario $\mcl{H}_0:X\perp Y\mid Z$, we generate $n$ samples with $d$ conditioning variables.
For each $i$, the conditioning variables $Z_{i1}, \cdots, Z_{id}$ and noise terms $\epsilon_1, \epsilon_2$ are drawn independently from a standard normal distribution $\opn{N}(0,1)$.
The nonlinear functions $f_1$ and $f_2$ are independently selected from the set $\left\{(\cdot), (\cdot)^2, (\cdot)^3, \tanh(\cdot), \exp(-|\cdot|)\right\}$ uniformly at random.
$X_i$ and $Y_i$ are then generated as $X_i = f_1 \left(\sum_{j=1}^{\lfloor d/2\rfloor}Z_{ij}/\lfloor d/2\rfloor + \epsilon_1\right)$ and $Y_i = f_2 \left(\sum_{j=1}^{\lfloor d/2\rfloor}Z_{ij}/\lfloor d/2\rfloor + \epsilon_2\right)$.
In this setup, only the first $\lfloor d/2\rfloor$ components of $Z_i$ are used to generate $X_i$ and $Y_i$.
For the alternative $\mcl{H}_1:X\not\perp Y\mid Z$, we introduce a common latent confounder $\epsilon_3 \sim \opn{N}(0,1)$ and let $X_i = f_1 \left(0.8 \epsilon_3 + \epsilon_1\right)$, $Y_i = f_2 \left(0.8 \epsilon_3 + \epsilon_2 \right)$.
The shared confounder $\epsilon_3$ induces a dependency between $X_i$ and $Y_i$ that persists even after conditioning on the spurious $Z_i$.

To assess T1E control, we generate $500$ datasets under $\mcl{H}_0$ and report, for each method, the proportion of rejections at the significance level $\alpha=0.05$ and the empirical cumulative distribution function (ECDF) of $p$-values.
The former serves as a summary diagnostic, while the latter provides a visual assessment of each method's behavior.
For a well‑calibrated test, the $p$‑values under the null should closely follow, or at least be stochastically larger than, the uniform distribution.
These metrics are computed for a fixed $d=10$ with varying $n \in \{50, 100, 200, 400, 800, 1600, 3200\}$, and for a fixed $n=800$ with varying $d \in \{2, 4, 8, 16, 32, 64, 128\}$.

For power analysis, we emphasize that comparing rejection proportions under $\mcl{H}_1$---a common practice in the literature---can be misleading.
As our results demonstrate, some methods achieve inflated power by failing to adequately control T1E.
To provide a more robust assessment, we instead examine the full receiver operating characteristic (ROC) curve and report the area under the curve (AUROC) as the primary performance metric.
This power analysis is based on 250 datasets generated under $\mcl{H}_0$ and another 250 generated under $\mcl{H}_1$, with the same grid of $(n, d)$ values as above.

Figure~\ref{fig:sim1 T1E} shows that for a fixed dimension of $d=10$, multiCMH successfully controls T1E below the significance level for all sample sizes $n \geq 100$.
The empirical distribution of its $p$‑values is close to uniform, demonstrating rapid convergence to the null distribution.
This control is maintained across all tested dimensions at a sample size of $n=800$.
Because of finite-sample effects, larger conditioning sets generally make $X$ and $Y$ appear conditionally independent, so the T1E of other methods also decreases with increasing dimension---except for CCIT and RCIT, which exhibit a serious miscalibration issue.
Other than multiCMH, CDIT consistently controls T1E under all settings. 
However, its $p$‑values are concentrated near one, indicating conservative behavior that comes at the expense of its power: its AUROC precipitates to $0.5$ for $d\geq 10$ (Figure~\ref{fig:sim1 power}).

Our power analysis in Figure~\ref{fig:sim1 power} verifies that multiCMH is consistent, with its AUROC converging to almost one for sample sizes $n\geq 1600$.
Apart from multiCMH, CMIknn is the only method whose AUROC is close to one at $n=3200$, but its performance degrades sharply in higher dimensions.
In contrast, the power of multiCMH is robust to dimensionality, countering the popular misconception that discretization methods are especially vulnerable to the curse of dimensionality.
Many other methods, except for GCM, exhibit severely decreasing power as the number of conditioning variables grows, limiting their applicability to low-dimensional settings.

Beyond statistical performance, computational efficiency is a crucial consideration.
The rightmost panel of Figure~\ref{fig:sim1 T1E} reports the median CPU time per run, illustrating how computational complexity scales with both $n$ and $d$.
Across all $(n,d)$ pairs, multiCMH is the fastest among the methods; for example, Table~\ref{tab:timecomp} shows the median CPU time of each method for $n=800$ and $d=128$.
CDIT, CMIknn, and wGCM display polynomial growth in $n$; for instance, CDIT requires more than 1000 seconds for $n=800$ and fails to complete for larger sample sizes.
Furthermore, the runtime of multiCMH remains stable as $d$ increases, whereas LPCIT and wGCM exhibit apparent polynomial growth in dimensionality.

\begin{table}[t!]
\centering
\begin{tabular}{c c c c c c c c}
\hline
multiCMH & RCIT & GCM & CCIT & CDIT & LPCIT & CMIknn & wGCM \\
\hline
1.03 & 1.29 & 15.43 & 299.56 & 978.51 & 1435.83 & 1470.74 & 1864.53 \\
\hline
\end{tabular}
\caption{Median CPU seconds for $n=800$ and $d=128$, rounded to two decimal places.}
\label{tab:timecomp}
\end{table}

\subsection{Simulation 2: computational scalability}
In this simulation, we assess the scalability of the methods on large datasets.
We generate data under the null hypothesis, where all variables are drawn independently from a standard normal distribution, and we consider large sample sizes ranging from one thousand to over one million ($n \in \{2^i\times 1000: i=0,1,\cdots, 10\}$) and dimensions $d \in \{1, 10, 100\}$.
We repeat each experiment $30$ times and report the median log CPU times.
Methods that exhibited polynomial time complexity in $n$ or $d$ during the previous simulation (namely, CDIT, CMIknn, and wGCM) are excluded, as they are not computationally feasible at this scale.
Each method was allocated at most 64~GB of RAM.

Figure~\ref{fig:sim2} confirms that the time complexity of multiCMH is nearly linear.
This is expected, as the primary computational bottleneck is the sorting step required for partitioning and stratification, which has a complexity of $O(n\log n)$.
The figure also shows that the runtime barely increases with $d$, because our stratification algorithm performs one-dimensional sorting by looping over the axes.
Beyond speed, multiCMH is memory-efficient, as its memory footprint for any dataset is dominated by storing a fixed number of $2\times 2\times T$ contingency tables.
In contrast, other methods (except RCIT) require significantly more memory and fail to run on datasets beyond a certain scale.

\begin{figure}[t!]
	\centering
	\includegraphics[width=0.85\linewidth]{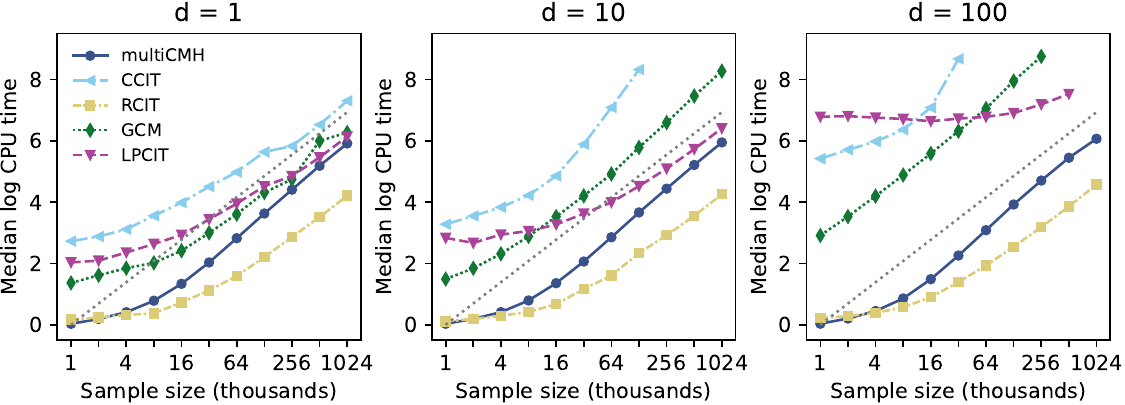}
	\caption{Results from Simulation 2. The gray dotted line is a $45^\circ$ reference. The results of each method are plotted only if they could be executed within the 64~GB RAM budget.}
	\label{fig:sim2}
\end{figure}

\section{Case study: Uber ride-share request data}

We apply our method to rider session data from the Uber ride-share platform, collected from an anonymous U.S. metropolitan area over multiple days. When a rider opens the Uber app and enters a destination, they are presented with ride options (e.g., UberX, Comfort). For each option, the app displays several key pieces of information, including the estimated pickup time (how long until a driver arrives) and the upfront fare (the total price of the trip). Based on this information, the rider decides whether to request a ride or abandon the session. Understanding which factors influence conversion at this stage has direct implications for marketplace operations, including driver positioning, pricing strategy, and product design.

Our dataset captures rider sessions at this decision point. Each observation records: (1) Pickup ETA: The estimated wait time for driver arrival, (2) Upfront Fare: The quoted price for the trip, inclusive of any dynamic pricing adjustments, and (3) Pricing Condition: Indicators of the pricing environment (e.g., baseline vs.\ elevated pricing periods). The outcome variable $Y$ is binary: whether the rider requested a ride ($Y = 1$) or exited without requesting ($Y = 0$). Note: Variable definitions and value ranges have been generalized for confidentiality.

We test the null hypothesis $X\perp Y \mid Z$, where $X$ is ETA, $Y$ the trip request decision ($Y=1$ if a request is made), and $Z$ the pricing context (price level and pricing conditions). The key question is: does pickup wait time affect a rider's decision to request, even after accounting for price? If riders are purely price-sensitive, ETA should have no residual association with conversion once we condition on fare. If riders also value early pickups, we expect to see a negative association between ETA and request probability, even at fixed price levels. Conditioning on price is important because ETA and price are often correlated through marketplace dynamics. During periods of high demand and limited supply, both wait times and prices tend to increase. Without conditioning, any observed ETA-conversion relationship could be confounded by this shared dependence on supply-demand balance.

In our analysis, the overall corrected $p$-value is numerically evaluated as zero, indicating that it lies below the lower bound of machine precision. This is not surprising. At the sample sizes typical in many observational studies of industry applications, it is rare to observe true conditional independence unless all potential confounders are accounted for. Therefore, any consistent testing procedure would reject the null hypothesis in our case. More interesting questions are: Where in the covariate space is the dependency strongest? How does the effect vary across pricing conditions?

To character the detected dependencies, we report the estimated common log odds ratio. Specifically, in terms of the $2\times 2\times T$ table constructed in a significant window $I\times J$ given a stratification $\mcl{S}$, an estimator of its common log odds ratio \citep{mantel1959statistical}
\begin{align*}
    \hat{\theta}_{\mcl{S}} = \log \frac{
        \sum_{S\in \mcl{S}}n(I^{left}, J^{left},S)n(I^{right}, J^{right},S)/n(I, J,S)
    }{
        \sum_{S\in \mcl{S}}n(I^{left}, J^{right},S)n(I^{right}, J^{left},S)/n(I, J,S)
    },
\end{align*}
provides a measure of the strength and direction of the conditional association within the window.
An estimator of its variance, denoted $\hat{\sigma}^2(\hat{\theta}_{\mcl{S}})$, is also available under the sparse-data asymptotics of a growing number of strata \citep{phillips1987estimators}.
Note that $\exp(\hat{\theta}_{\mcl{S}})$ is a weighted average of $\exp(\hat{\theta}(S))$, where $\hat{\theta}(S)$ denotes a stratum-specific sample log odds ratio \citep[p.~40]{plackett1974categorical}:
\begin{align*}
    \hat{\theta}(S) = \log \frac{
        (n(I^{left}, J^{left},S) + 0.5) (n(I^{right}, J^{right},S)+0.5)
    }{
        (n(I^{left}, J^{right},S)+0.5) (n(I^{right}, J^{left},S)+0.5)
    }, \quad S \in \mcl{S}.
\end{align*}
Following the rejection of the null, the empirical distributions of $\hat{\theta}(S)$ in the significant windows across levels of conditioning variables reveal how the strength and direction of the association vary with $Z$.

\begin{table}[t!]
    \centering
    \small
    \begin{tabular}{c c c c c c c c}
    \toprule
    $(l_1,l_2)$ & $I\times J$ & $\alpha_n(I,J)$ & $p(I,J,\mcl{S}_{IJ})$ & $\hat{\theta}_{\mcl{S}}$ & $\hat{\sigma}(\hat{\theta}_{\mcl{S}})$ & $2.5\%$ & $97.5\%$ \\
    \midrule
    $(0,0)$ & $(0,1]\times[0,1]$     & $7.3\times 10^{-3}$ & $0.0\times 10^{0}$   & $-0.10$ & $0.01$ & $-0.11$ & $-0.08$ \\
    $(1,0)$ & $(0.24,1]\times[0,1]$  & $3.7\times 10^{-3}$ & $4.5\times 10^{-10}$ & $-0.06$ & $0.01$ & $-0.07$ & $-0.04$ \\
    $(4,0)$ & $(0.47,1]\times[0,1]$  & $7.3\times 10^{-3}$ & $3.4\times 10^{-4}$  & $-0.10$ & $0.03$ & $-0.15$ & $-0.04$ \\
    $(5,0)$ & $(0.53,1]\times[0,1]$  & $7.3\times 10^{-3}$ & $2.1\times 10^{-6}$  & $-0.18$ & $0.04$ & $-0.25$ & $-0.10$ \\
    $(6,0)$ & $(0.59,1]\times[0,1]$  & $7.3\times 10^{-3}$ & $2.0\times 10^{-3}$  & $-0.16$ & $0.05$ & $-0.27$ & $-0.06$ \\
    \bottomrule
    \end{tabular}
    \captionof{table}{Significant windows when $X$ is ETA (rescaled to $(0,1]$), and $Y\in \{0,1\}$ is trip request ($Y=1$ if a request is made). 
    For each window $I\times J$ where $I\in \mcl{I}_{l_1}$ and $J\in \mcl{I}_{l_2}$,
    $\alpha_n(I,J)$ denotes the Šidák-corrected significance level, 
    while $p(I,J,\mcl{S}_{IJ})$ is the raw $p$-value prior to correction.
    $\hat{\theta}_{\mcl{S}}$ and $\hat{\sigma}(\hat{\theta}_{\mcl{S}})$ represent the estimated common log odds ratio for the window and its standard error, respectively.
    Confidence intervals are reported as $\hat{\theta}_{\mcl{S}}\pm 1.96 \hat{\sigma}(\hat{\theta}_{\mcl{S}})$ following \citet[p.~230]{agresti2013categorical}.
    }
    \label{tab:analysis1}
\end{table}

\begin{figure}[t!]
    \includegraphics[width=\linewidth]{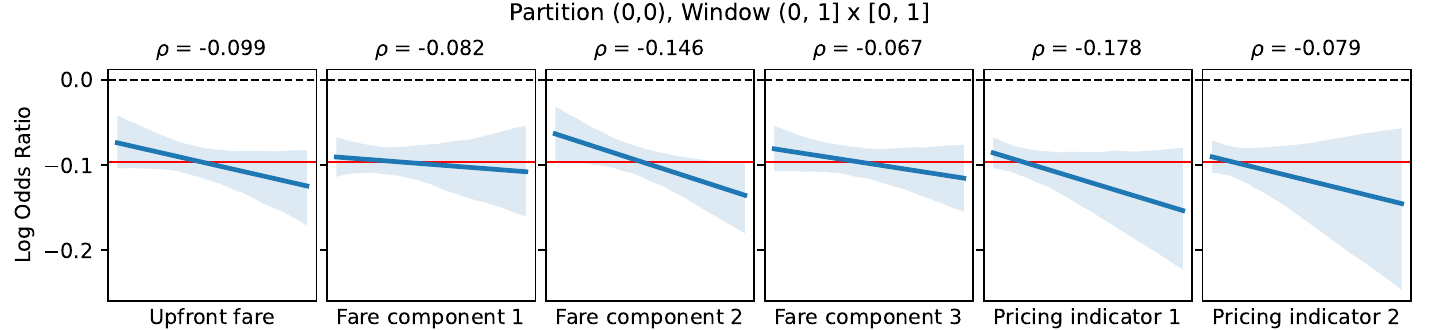}
    \vspace{0.001cm}
    
    \includegraphics[width=\linewidth]{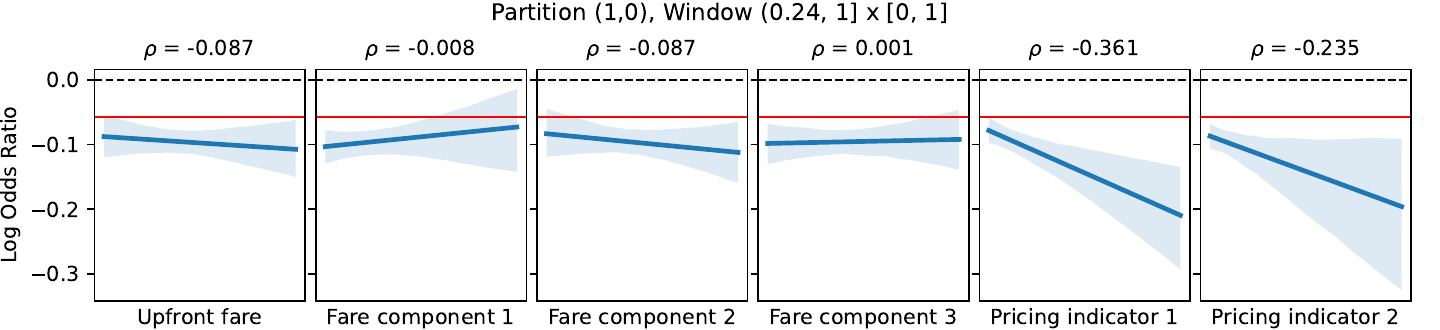}
    \captionof{figure}{For each window (row), the empirical distributions of $\hat{\theta}(S)$---the stratum-specific sample log odds ratios---are plotted against the stratum-specific sample means of the conditioning variables on $x$-axes.
    Factors related to the fare components include time, distance, and base fare, among others.
    The common log odds ratio $\hat{\theta}_\mcl{S}$ of the window is indicated by the red line.
    For confidentiality reasons, $x$-axes tick marks and individual points are not displayed.
    Instead, fitted regression lines (blue) and their associated $95\%$ confidence bands (sky-blue shading) illustrate how $\hat{\theta}(S)$ varies with the conditioning variables. 
    $\rho$ denotes Pearson's correlation coefficient.}
    \label{fig:analysis1}
\end{figure}

Table~\ref{tab:analysis1} reports the summary statistics for significant windows. The $95\%$  confidence intervals of the estimated common log odds ratios across all significant windows lie strictly below zero, confirming our conjecture that a longer ETA discourages users from requesting a ride. More substantively, the magnitude of this effect is not uniform. Windows covering the upper range of ETA (e.g., $(0.53, 1]$ and $(0.59, 1]$) exhibit larger negative log odds ratios ($-0.18$ and $-0.16$) compared to windows that include shorter ETAs ($-0.06$ to $-0.10$), suggesting that rider sensitivity to wait time intensifies at longer ETAs.

Figure~\ref{fig:analysis1} shows an additional dimension of heterogeneity: the strength of the ETA-conversion relationship varies systematically with the conditioning variables. This variation is particularly evident in the upper ETA range, suggesting that pricing context changes rider sensitivity to wait times. Such interaction structure is precisely the type of insight that the multiscale framework is designed to uncover.

\section{Concluding remarks}
We conclude this work with a remark on applying our framework to multivariate $X$ and $Y$, a particularly pressing issue in modern applications where $X$ may be a high-dimensional random object.
Our method does not directly use each observed data point; instead, it leverages the information encoded by its path from the root to the leaf node in the nested dyadic partition tree.
This process effectively compresses each data point $x_i$ into a binary sequence of $\{0,1\}$, where $0$ indicates membership in the left child and $1$ in the right child.
In this way, any sample space---including $\mbb{R}^d$---is encoded as an integer represented in binary according to the nested dyadic partition tree.

While this compression inevitably discards some information, it retains sufficient structure to characterize CI.
Definition~\ref{def:k1k2condind} formalizes CI of these encoded binary sequences, and Theorem~\ref{thm:condind} establishes its equivalence with CI in the original space.
Theorem~\ref{thm:smalltobig} motivates a practical algorithm to test CI based on this formulation, namely our proposed multiCMH, while our stratification algorithm constructs nested dyadic partitions of multivariate $\mcl{X}$, ensuring that the left and right child nodes contain equal numbers of observations.
Nevertheless, detecting the conditional associations in high-dimensional spaces is inherently challenging due to data sparsity.
Our future work will focus on increasing the power of our test in this setting.
One possible remedy is a data-adaptive approach, similar to \citet{gorsky2022multi}, in which the order of dimension splits in our stratification procedure is determined by the strength of the conditional associations revealed in the resulting tables.

\section{Acknowledgment}

This research is partly supported by NSF grant DMS-2152999. Part of the research was carried out when LM was at Duke University.

\bibliographystyle{plainnat}
\bibliography{references}

\clearpage 
\appendix 
\counterwithin{equation}{section} 
\section*{Supplementary Materials} 
\setcounter{section}{0} 
\renewcommand{\thesection}{S\arabic{section}} 
\renewcommand{\theequation}{S\arabic{equation}} 
\renewcommand{\thetheorem}{S\arabic{theorem}}
\renewcommand{\thelemma}{S\arabic{lemma}}
\renewcommand{\thefigure}{S\arabic{figure}}
\setcounter{figure}{0} 

\allowdisplaybreaks

\section{Proofs of the results in the main text} \label{Section: proofs}

\subsection{Proof of Theorem~\ref{thm:tbl22 T1E}}\label{Section: Proof of tbl22 T1E}
The proof is similar to that of Theorem 2 in \cite{kim2022local}. The key is the following lemma, which is a restatement of Lemma 1 in \cite{kim2022localsupp}:
\begin{lemma}\label{supp-lem:thm1-1}
	Let $\mcl{S}=\{S_t:t\in[T]\}$ be a $T$-stratification of $\mcl{Z}$.
	If $P\in \mcl{P}_0$, then
	\begin{align*}
		 & \mcl{D}_H^2 \left(
		\mbb{E}_{Z_t} P_{XY\mid Z}(X,Y\mid Z_t),
		\mbb{E}_{Z_t} P_{X\mid Z}(X\mid Z_t)
		\mbb{E}_{Z_t} P_{Y\mid Z}(Y\mid Z_t)
		\right)               \\ &\leq
		6\mbb{E}_{Z_t, Z_t', Z_t''} \left[
			\mcl{D}_H^2\left(
			P_{X\mid Z}(X\mid Z_t), P_{X\mid Z}(X\mid Z_t')
			\right)
			\mcl{D}_H^2\left(
			P_{Y\mid Z}(Y\mid Z_t), P_{Y\mid Z}(Y\mid Z_t'')
			\right)
			\right]
	\end{align*}
	where $Z_t, Z_t', Z_t''$ are iid samples of $P_Z(Z\mid Z\in S_t)$.
\end{lemma}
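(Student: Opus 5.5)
The plan is to work directly with densities and reduce the Hellinger distance between the stratum mixture and its CI projection to a product of two one-dimensional Hellinger distances, via a three-sample decoupling and Cauchy--Schwarz. I have not yet closed the argument at the stated constant $6$; see the last paragraph.

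\emph{Notation.} Fix the stratum $S_t$ and write $p_z(x)$, $q_z(y)$ for the densities of $P_{X\mid Z}(\cdot\mid z)$ and $P_{Y\mid Z}(\cdot\mid z)$. Because $P\in\mcl{P}_0$, the stratum mixture has density $M(x,y)=\mbb{E}_{Z_t}[p_{Z_t}(x)q_{Z_t}(y)]$. Its CI projection has density $Q(x,y)=\mbb{E}_{Z_t'}[p_{Z_t'}(x)]\,\mbb{E}_{Z_t''}[q_{Z_t''}(y)]$.

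\emph{Step 1 (decoupling).} With $Z_t,Z_t',Z_t''$ iid, expanding the product and using independence gives
\begin{equation*}
M(x,y)-Q(x,y)=\mbb{E}\big[(p_{Z_t}(x)-p_{Z_t'}(x))(q_{Z_t}(y)-q_{Z_t''}(y))\big].
\end{equation*}
The four resulting terms are $M$, $-Q$, $-Q$ and $+Q$, which sum to $M-Q$. This identity is why two \emph{independent} copies $Z_t'$ and $Z_t''$ appear in the statement.

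\emph{Step 2 (Cauchy--Schwarz).} Write $a=\sqrt{p_{Z_t}}-\sqrt{p_{Z_t'}}$, $b=\sqrt{q_{Z_t}}-\sqrt{q_{Z_t''}}$ and $w=(\sqrt{p_{Z_t}}+\sqrt{p_{Z_t'}})(\sqrt{q_{Z_t}}+\sqrt{q_{Z_t''}})$. Then
\begin{equation*}
(M-Q)^2=\big(\mbb{E}[a\,b\,w]\big)^2\le \mbb{E}[a^2b^2]\,\mbb{E}[w^2].
\end{equation*}

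\emph{Step 3 (pointwise Hellinger bound).} Pointwise,
\begin{equation*}
(\sqrt{M}-\sqrt{Q})^2=\frac{(M-Q)^2}{(\sqrt{M}+\sqrt{Q})^2}\le \frac{\mbb{E}[w^2]}{(\sqrt M+\sqrt Q)^2}\,\mbb{E}[a^2b^2].
\end{equation*}
The aim is a pointwise bound $\mbb{E}[w^2]\le c\,(\sqrt M+\sqrt Q)^2$ for an absolute constant $c$.

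\emph{Step 4 (integration).} Integrate over $(x,y)$ and apply Fubini. Since $a$ depends only on $x$ and $b$ only on $y$ given the $Z$'s, and $\int a^2\,dx=2\mcl{D}_H^2(p_{Z_t},p_{Z_t'})$ under the paper's $\tfrac12$-normalization,
\begin{equation*}
\mcl{D}_H^2(M,Q)=\tfrac12\int(\sqrt M-\sqrt Q)^2\le \tfrac{c}{2}\cdot 4\,\mbb{E}\big[\mcl{D}_H^2(p_{Z_t},p_{Z_t'})\,\mcl{D}_H^2(q_{Z_t},q_{Z_t''})\big].
\end{equation*}
Hence the target constant $6$ requires $c=3$.

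\emph{Main obstacle: the constant.} Everything reduces to bounding $\mbb{E}[w^2]$ with $c=3$. The crude route is $(\sqrt u+\sqrt v)^2\le 2(u+v)$, which gives $\mbb{E}[w^2]\le 4\,\mbb{E}[(p_{Z_t}+p_{Z_t'})(q_{Z_t}+q_{Z_t''})]=4(M+3Q)$. Combined with $(\sqrt M+\sqrt Q)^2\ge M+Q$, this yields only $c=12$, i.e.\ constant $24$.

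To recover $6$ I would not discard the cross terms. Instead I would expand $w^2$ exactly, giving the terms $p\,q$, the mixed products, and the geometric-mean terms $\sqrt{p_{Z_t}p_{Z_t'}}$ and $\sqrt{q_{Z_t}q_{Z_t''}}$. I would then compare each term with $M$, $Q$ and $\sqrt{MQ}$, using Jensen on $\mbb{E}\sqrt{\cdot}$ and keeping the $2\sqrt{MQ}$ term in the denominator. The other lever is to apply Cauchy--Schwarz with an asymmetric weight, for example putting $\sqrt{p_{Z_t}q_{Z_t}}$ on one factor, so that one side integrates to $M$ exactly.

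If neither device produces $c=3$, the fallback is that the factor depends on the Hellinger normalization: with the unnormalized $\int(\sqrt p-\sqrt q)^2$, the same argument's constant shifts by powers of $2$. I would therefore check which convention the cited Lemma 1 of \citet{kim2022localsupp} uses, and adopt its sharper weight choice for Step 3. For Theorem~\ref{thm:tbl22 T1E} any absolute constant suffices, but the statement as worded needs exactly $6$.
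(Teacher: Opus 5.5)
\textbf{Gap: the constant.} The paper gives no argument for this lemma; it defers to Lemma~1 of \citet{kim2022localsupp}. Your Step~1 identity and your Step~4 bookkeeping (the factor $\tfrac12\cdot 4c=2c$) are correct. As written, however, the proposal proves the inequality only with constant $24$, so the lemma with constant $6$ is not established. Expanding $w^2$ term by term with Jensen and Cauchy--Schwarz improves this only to $14$, since it gives $\mbb{E}[w^2]\le M+8\sqrt{MQ}+7Q=(\sqrt M+\sqrt Q)(\sqrt M+7\sqrt Q)$.

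The repairs you list cannot close the gap, because the loss is in the Cauchy--Schwarz step (Step~2) itself, not in the estimate of $\mbb{E}[w^2]$. Take binary $X,Y$, let $Z_t$ be uniform on $\{z_1,z_2\}$, and set $P_{X\mid Z}(\cdot\mid z_i)=P_{Y\mid Z}(\cdot\mid z_i)=\delta_{i-1}$.
\begin{itemize}
\item At the cell $(x,y)=(0,1)$ you get $M=0$, $Q=\tfrac14$, $\mbb{E}[a^2b^2]=\tfrac14$, but $\mbb{E}[w^2]=\tfrac54$. The ratio you need to bound by $3$ equals $5$ there.
\item Summing the exact Step~3 bound over all four cells gives $\tfrac12\int(\cdot)\approx 1.81$. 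By contrast, $6\,\mbb{E}[\mcl{D}_H^2(p_{Z_t},p_{Z_t'})\,\mcl{D}_H^2(q_{Z_t},q_{Z_t''})]=1.5$, and the true left side is only $1-2^{-1/2}\approx 0.29$.
\end{itemize}
So even an exact evaluation of $\mbb{E}[w^2]$ cannot reach $6$ along your route. An asymmetric weight depending only on $(x,y)$ simply cancels in Cauchy--Schwarz. A weight depending on the $Z$'s no longer integrates to the product $\int a^2\,dx\int b^2\,dy$ that Step~4 uses. The normalization fallback is also unavailable: the statement uses the paper's $\tfrac12$-normalized $\mcl{D}_H$, which you already handled correctly. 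In the unnormalized convention the target constant would be $3$, which is sharper, not weaker.

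\textbf{Missing idea: compare $\sqrt M$ with $\sqrt Q$ directly.} Avoid passing through $M-Q$ and the denominator $(\sqrt M+\sqrt Q)^2$.
\begin{itemize}
\item Fix $(x,y)$, put $u(z)=\sqrt{p_z(x)}$ and $v(z)=\sqrt{q_z(y)}$, and view them as elements of $L^2$ of the law of $Z_t$. Then $\sqrt M=\|uv\|$ and $\sqrt Q=\|u\|\,\|v\|$.
\item Center: write $u=\bar u+\tilde u$ and $v=\bar v+\tilde v$ with $\bar u=\mbb{E}u\ge 0$ and $\sigma_u^2=\mbb{E}\tilde u^2$ (similarly for $v$). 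Set $s=\sigma_u\sigma_v$, $m=\bar u\bar v$ and $g=uv-\tilde u\tilde v$.
\item Expanding gives $\|u\|^2\|v\|^2-\|g\|^2=s^2-2m\,\mbb{E}[\tilde u\tilde v]$, whose absolute value is at most $s^2+2ms$. By AM--GM, $\|u\|\,\|v\|\ge m+s$.
\item Hence $\bigl|\|g\|-\|u\|\,\|v\|\bigr|\le s(s+2m)/(s+m)\le 2s$ for $s>0$; when $s=0$ the two norms coincide. The triangle inequality then gives $|\sqrt M-\sqrt Q|\le\|\tilde u\tilde v\|+2s$.
\item On the other side, integrate out $Z_t'$ and $Z_t''$ given $Z_t$ in your decoupled quantity. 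This gives exactly $\mbb{E}[a^2b^2]=\|\tilde u\tilde v\|^2+3s^2$.
\end{itemize}
Therefore $(\sqrt M-\sqrt Q)^2\le 3\|\tilde u\tilde v\|^2+6s^2\le 3\,\mbb{E}[a^2b^2]$ pointwise. This is exactly the $c=3$ you needed, and your Step~4 then yields the constant $6$.
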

\noindent The proof can be found in \cite{kim2022localsupp}.

Consider any marginally smooth $P \in \mcl{P}_0$.
Given a stratification $\mcl{S}$, let $\tilde{P}$ be its discretization defined as \eqref{eqn:tildep} and $\tilde{P}_0$ its CI projection as in \eqref{eqn:tildep0}.
Let $\tilde{p}$ and $\tilde{p}_0$ be the corresponding densities with respect to $\mu$.
Since $\psi_n=1_{M_n^2 > \chi^2_{\alpha,1}}$ is a function of the cell counts $n(x, y, S_t)$ for $S_t \in \mcl{S}$, we only need to consider $\tilde{P}$ instead of $P$.
The T1E of $\psi_n$ under $\tilde P^n$ can be decomposed as
\begin{equation*}
	\mbb{E}_{\tilde{P}^n}[\psi_n] =
	\int \psi_n (\tilde{p}_0^n-\tilde{p}_0^n+\tilde{p}^n)d\mu \leq
	\mbb{E}_{\tilde{P}_0^n}[\psi_n] + 2\mcl{D}_{TV}(\tilde{P}^n_0.\tilde{P}^n)
\end{equation*}
where $\mcl{D}_{TV}(P,Q) = \int |p-q| d\mu/2$ is the TV distance.

For the first term, since by the construction $\tilde{P}_0$ belongs to $\tilde{\mcl{H}}_0$, the conditional distribution of $M_n$ given margin totals $n(\cdot, y, S_t)$ and $n(x, \cdot, S_t)$ for all strata asymptotically follows $\chi^2_1$, hence $\lim_{n\to \infty}\mbb{E}_{\tilde{P}^n}[\psi_n]\leq \alpha$.
For the second term, from well-known inequalities between the TV and Hellinger distances, it can be inferred that for any distributions $P_1$ and $P_2$ with densities with respect to a common base measure, $\mcl{D}_{TV}(P_1^n,P_2^n)\to 0$ if and only if $n\mcl{D}_H^2(P_1,P_2)\to 0$. Therefore, we address the Hellinger distance between $\tilde P$ and $\tilde P_0$ as follows (note that $s(t) = P_Z\{Z\in S_t\}$):
\begin{align*}
	 & \mcl{D}_{TV}(\tilde{P}^n,\tilde{P}_0^{n})
	\leq \sqrt{2}\mcl{D}_{H}(\tilde{P}^n,\tilde{P}_0^{n})
	\leq \sqrt{2n}\mcl{D}_{H}(\tilde{P},\tilde{P}_0) \\
	 & \leq
	\sqrt{n}\left\{
	\sum_{t=1}^T s(t)\sum_{X,Y}
	\left(
	\sqrt{\mbb{E}_{Z_t} P_{XY\mid Z}(X,Y\mid Z_t)} -
	\sqrt{\mbb{E}_{Z_t} P_{X\mid Z}(X\mid Z_t)
			\mbb{E}_{Z_t}P_{Y\mid Z}(Y\mid Z_t)}
	\right)^2
	\right\}^{1/2}                                   \\
	 & =
	\sqrt{2n}\bigg\{
	\sum_{t=1}^T s(t)
	\mcl{D}_H^2\Big(
	\mbb{E}_{Z_t}P_{XY\mid Z}(X,Y\mid Z_t),
	\mbb{E}_{Z_t}P_{X\mid Z}(X\mid Z_t) \mbb{E}_{Z_t} P_{Y\mid Z}(Y\mid Z_t)
	\Big)
	\bigg\}^{1/2}                                    \\
	 & \stackrel{(a)}{\leq}
	\sqrt{12n}\bigg\{
	\sum_{t=1}^T s(t)
	\mbb{E}_{Z_t,Z_t',Z_t''}
	\left[
		\mcl{D}_H^2
		\left( P_{X\mid Z}(X\mid Z_t), P_{X\mid Z}(X\mid Z_t')\right) 
		\mcl{D}_H^2\left( P_{Y\mid Z}(Y\mid Z_t), P_{Y\mid Z}(Y\mid Z_t'')\right)
		\right]
	\bigg\}^{1/2}                                    \\
	 & \stackrel{(b)}{\leq}
	\sqrt{12n}\bigg\{
	\sum_{t=1}^T s(t)
	\mbb{E}_{Z_t,Z_t',Z_t''}
	\left[ L^2h^2 \cdot L^2h^2 \right]
	\bigg\}^{1/2}                                    \\
	 & = Cn^{1/2}h^2
\end{align*}
where $(a)$ follows from Lemma \ref{supp-lem:thm1-1} and $(b)$ is due to $P$ being marginally smooth. Therefore, if $h=o(n^{-1/4})$, then $\mcl{D}_{TV}(\tilde{P}^n,\tilde{P}_0^{n})\to 0$ as $n\to \infty$, which completes the proof.
\qed

\subsection{Proof of Theorem~\ref{thm:tbl22 Power}}\label{Section: Proof of tbl22 Power}
We need the following two lemmas. The first one allows us to bound the difference between the marginal and conditional log odds ratios as a function of $h$.

\begin{lemma}\label{supp-lem:thm1-2}
	Let $\mcl{X}=\mcl{Y}=\{0,1\}$ and $\mcl{Z}=\mbb{R}^d$.
	Let $P$ be a distribution of $(X,Y,Z)$, of which $P(X,Y \mid Z)$ is continuous in $Z$.
	Let $\mcl{S}=\{S_t:t\in[T]\}$ be a $T$-stratification of $\mcl{Z}$ and $h=\max_{S\in \mcl{S}}\sup_{z,z'\in S}\delta(z,z')$.
	Suppose that $P$ satisfies:
	\begin{enumerate}[(i)]
		\item $\min_{X,Y} P_{XY\mid Z}(X,Y\mid Z) > p_{min}$ $P$-a.s. for some $0<p_{min}<1$,
		\item for all $z, z'\in \mcl{Z}$,
		      \begin{equation*}
			      \left|P_{X\mid Z}(X\mid z) - P_{X\mid Z}(X\mid z')\right|
			      \vee
			      \left|P_{Y\mid Z}(Y\mid z) - P_{Y\mid Z}(Y\mid z')\right|
			      \leq L\delta(z,z'),
		      \end{equation*}
	\end{enumerate}
	For any stratum $S_t$, select any pair $(X,Y)$ and choose $z_t^\ast \in \opn{cl}(S_t)$ such that $\mbb{E}_{Z_t} P_{XY\mid Z}(X,Y \mid Z_t) = P_{XY\mid Z}(X,Y\mid z_t^\ast)$ where $Z_t\sim P_{Z}(Z\mid Z\in S_t)$ and $\opn{cl}(S_t)$ is a closure of $S_t$.
	Then for small enough $h<p_{min}/L$, we have $|\theta_t-\theta(z_t^\ast)|=O(h)$.
\end{lemma}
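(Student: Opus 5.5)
We want to bound $|\theta_t-\theta(z_t^\ast)|$. Write $p_{xy}(z)=P_{XY\mid Z}(x,y\mid z)$ and $\bar p_{xy}=\mbb{E}_{Z_t}p_{xy}(Z_t)$, so that $\theta_t=\sum_{x,y}(-1)^{x+y}\log\bar p_{xy}$ and $\theta(z)=\sum_{x,y}(-1)^{x+y}\log p_{xy}(z)$. The point $z_t^\ast$ is chosen, by continuity of $p_{xy}(\cdot)$ on the connected hull of the stratum (intermediate value theorem on the closure; strata from \textsc{medtree} are hyper-rectangles, hence convex), so that $\bar p_{xy}=p_{xy}(z_t^\ast)$ for one chosen cell $(x,y)$. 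Hence that term cancels exactly. It then suffices to show that, for every other cell $(x',y')$, $|\log\bar p_{x'y'}-\log p_{x'y'}(z_t^\ast)|=O(h)$. Since both $\bar p_{x'y'}$ and $p_{x'y'}(z_t^\ast)$ exceed $p_{min}$ by (i), the mean value theorem gives $|\log a-\log b|\le |a-b|/p_{min}$. It is therefore enough to prove $|\bar p_{x'y'}-p_{x'y'}(z_t^\ast)|=O(h)$.

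\textbf{Key step.} The difficulty is that (ii) controls only the marginals $P_{X\mid Z}$ and $P_{Y\mid Z}$, not the joint cell probabilities. I would exploit the fact that in a $2\times 2$ table, once one cell and the two margins are fixed, the remaining cells are determined. Concretely,
\[
p_{x,1-y}(z)=P_{X\mid Z}(x\mid z)-p_{xy}(z),\qquad p_{1-x,y}(z)=P_{Y\mid Z}(y\mid z)-p_{xy}(z),
\]
\[
p_{1-x,1-y}(z)=1-P_{X\mid Z}(x\mid z)-P_{Y\mid Z}(y\mid z)+p_{xy}(z).
\]
Averaging over $Z_t$ and subtracting the same identities evaluated at $z_t^\ast$ makes the $p_{xy}$ terms cancel, because $\bar p_{xy}=p_{xy}(z_t^\ast)$. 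What remains are differences of the form $\mbb{E}_{Z_t}P_{X\mid Z}(x\mid Z_t)-P_{X\mid Z}(x\mid z_t^\ast)$, and similarly for $Y$. Each is at most $Lh$ in absolute value by (ii), since $Z_t$ and $z_t^\ast$ both lie in $\opn{cl}(S_t)$, whose diameter is $h$, and the Lipschitz bound extends to the closure by continuity. So each cell differs by at most $2Lh$.

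\textbf{Conclusion.} Summing over the three remaining cells gives $|\theta_t-\theta(z_t^\ast)|\le 4Lh/p_{min}$. The condition $h<p_{min}/L$ is needed only to keep arguments of the logarithm away from zero. It is in fact automatically satisfied here, since $\bar p_{x'y'}$ is an average of values exceeding $p_{min}$ and therefore itself exceeds $p_{min}$.

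The main obstacle I anticipate is justifying the existence of $z_t^\ast$. This requires a connectedness argument: $p_{xy}$ is continuous on $\opn{cl}(S_t)$, and its average lies between the infimum and supremum over that set. I would first invoke the convexity of rectangular strata. Failing that, I would replace $z_t^\ast$ by any point of $\opn{cl}(S_t)$ and absorb the extra $O(h)$ error using joint continuity. This would need a bound on $p_{xy}$'s oscillation, which (ii) alone does not give, so the connectedness route is preferred.
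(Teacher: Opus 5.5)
Your proposal is correct and follows essentially the paper's argument. Both fix one cell at $z_t^\ast$ by the intermediate value theorem, write the other three cells through that cell and the two margins so the chosen cell cancels, bound the remaining marginal discrepancies by $Lh$ via (ii), and pass to the log scale using the lower bound $p_{min}$. Your two local choices are slightly cleaner than the paper's. Averaging the marginal differences directly avoids its auxiliary points $z_t^X, z_t^Y$. Your mean-value step replaces its Taylor expansion with $o(h)$ remainders and gives the explicit bound $(Lh+Lh+2Lh)/p_{min}=4Lh/p_{min}$; note that your ``each cell $2Lh$'' tally would only give $6Lh/p_{min}$.
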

\begin{proof}
	Without loss of generality, choose $z_t^\ast \in \opn{cl}(S_t)$ such that. $\mbb{E}_{Z_t}P_{XY\mid Z}(0,0\mid Z_t) = P(0,0\mid z_t^\ast)$, which always exists since $P(X,Y\mid Z)$ is continuous in $Z$ and $\opn{cl}(S_t)$ is compact and connected.
	For brevity, we substitute $P$ for $P_{XY\mid Z}$ when there is no confusion. Then
	\begin{align*}
		e^{\theta_t} & =
		\frac{\mbb{E}_{Z_t}P(0,0\mid Z_t) \mbb{E}_{Z_t}P(1,1\mid Z_t)}
		{\mbb{E}_{Z_t}P(0,1\mid Z_t)\mbb{E}_{Z_t}P(1,0\mid Z_t)} \\&=
		\frac{P(0,0\mid z_t^\ast) P(1,1\mid z_t^\ast)}
		{P(0,1\mid z_t^\ast)P(1,0\mid z_t^\ast)}
		\frac{\mbb{E}_{Z_t}P(1,1\mid Z_t)}{P(1,1\mid z_t^\ast)}
		\frac{P(0,1\mid z_t^\ast)}{\mbb{E}_{Z_t}P(0,1\mid Z_t)}
		\frac{P(1,0\mid z_t^\ast)}{\mbb{E}_{Z_t}P(1,0\mid Z_t)}  \\&=
		e^{\theta(z_t^\ast)}
		\frac{\mbb{E}_{Z_t}P(1,1\mid Z_t)}{P(1,1\mid z_t^\ast)}
		\frac{P(0,1\mid z_t^\ast)}{\mbb{E}_{Z_t}P(0,1\mid Z_t)}
		\frac{P(1,0\mid z_t^\ast)}{\mbb{E}_{Z_t}P(1,0\mid Z_t)}.
	\end{align*}
	Let $z_t^X\in \opn{cl}(S_t)$ s.t. $\opn{E_{Z_t}} P_{X\mid Z}(0\mid Z_t)=P_{X\mid Z}(0\mid z_t^X)$ and $\Delta_X = P_{X\mid Z}(0\mid z_t^X)-P_{X\mid Z}(0\mid z_t^\ast)$.
	Similarly, let $z_t^Y\in \opn{cl}(S_t)$ s.t. $\opn{E_{Z_t}} P_{Y\mid Z}(0\mid Z_t)=P_{Y\mid Z}(0\mid z_t^Y)$ and $\Delta_Y = P_{Y\mid Z}(0\mid z_t^Y)-P_{Y\mid Z}(0\mid z_t^\ast)$.
	Then
	\begin{align*}
		\mbb{E}_{Z_t}P(0,1\mid Z_t) & =
		\mbb{E}_{Z_t}\left[ P_{X\mid Z}(0 \mid Z_t) - P(0,0\mid Z_t) \right]                      \\&=
		P_{X\mid Z}(0\mid z_t^X)-\mbb{E}_{Z_t} P(0,0\mid Z_t)                                     \\&=
		P_{X\mid Z}(0\mid z_t^X)-P(0,0\mid z_t^\ast)                                              \\&=
		P_{X\mid Z}(0\mid z_t^X) - \left(P_{X\mid Z}(0\mid z_t^\ast) - P(0,1\mid z_t^\ast)\right) \\&=
		\Delta_X+ P(0,1\mid z_t^\ast),
	\end{align*}
	and $\mbb{E}_{Z_t}P(1,0\mid Z_t)$, $\mbb{E}_{Z_t}P(1,1\mid Z_t)$ can be expressed in a similar manner. Hence
	\begin{align*}
		\theta_t & = \theta(z_t^\ast) +
		\log \left(1 - \frac{\Delta_X}{P(1,1\mid z_t^\ast)}- \frac{\Delta_Y}{P(1,1\mid z_t^\ast)}\right) \\&\quad-
		\log \left(1 + \frac{\Delta_X}{P(0,1\mid z_t^\ast)}\right) -
		\log \left(1 + \frac{\Delta_Y}{P(1,0\mid z_t^\ast)}\right).
	\end{align*}
	For simplicity, denote $P(1,1\mid z_t^\ast)$ by $p_{11}$, and similarly for other probabilities.
	By condition (ii), $\Delta_X \vee \Delta_Y \leq Lh$ and $p_{11}\geq \min_{X,Y}P(X,Y\mid z_t^\ast) > p_{min}$ imply that the ratio $\Delta_X/p_{11}$ can be made arbitrarily close to zero with a smaller $h$.
	Since $\log(1+x) = x - x^2/2 +x^3/3 - \cdots = x + o(x)$ around zero,
	\begin{align*}
		|\theta_t - \theta(z_t^\ast)| & \leq
		\frac{\Delta_X}{p_{11}} + \frac{\Delta_Y}{p_{11}} +
		\frac{\Delta_X}{p_{01}} + \frac{\Delta_Y}{p_{10}} +
		o\left(\frac{\Delta_X}{p_{11}} + \frac{\Delta_Y}{p_{11}}\right) +
		o\left(\frac{\Delta_X}{p_{01}}\right) + o\left(\frac{\Delta_Y}{p_{10}}\right) \\
		                              & \leq 4Ch + o(h)
	\end{align*}
	where $C=L/p_{min}$.
	Therefore, $|\theta_t-\theta(z_t^\ast)|=O(h)$.
\end{proof}

The second lemma bounds the squared Hellinger distance within a class of one-parameter family of distributions in terms of the difference in the parameter. 
This lemma is adapted from \citet[Theorem~7.13]{polyanskiy_fdiv}.

\begin{lemma}\label{supp-lem:thm1-3}
	Let $\{P_t:t\in[a,b)\}$ be a family of distributions with densities $p_t(x)$ with respect to a finite base measure $\mu$. If $h_t(x)=\sqrt{p_t(x)}$ satisfies all of the following:
	\begin{enumerate}[(i).]
		\item $\int_a^b |h_s'(x)|ds < \infty$ and $\lim_{t\downarrow a} h_t'(x)=h'(a)$ $\mu$-almost surely where $h_t'(x)=dh_t(x)/dt$ so that $h_t(x)$ can be written as
		      \begin{equation*}
			      h_t(x) = h_a(x) + \int_a^t h_s'(x)ds,\quad t\in[a,b),
		      \end{equation*}
		\item $\{[h_t'(x)]^2:t\in [a,b)\}$ is uniformly $\mu$-integrable.
	\end{enumerate}
	Then, as $t\to a$ for $t\in [a,b)$,
	\begin{equation*}
		\mcl{D}_H^2(P_t,P_a) = J(a)(t-a)^2/8 + o((t-a)^2)
	\end{equation*}
	where $J(a) = \int [p_a'(x)]^2/p_a(x)d\mu <\infty$ is the Fisher information of $P_t(x)$ at $t=a$.
\end{lemma}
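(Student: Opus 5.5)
The plan is the standard local-asymptotic argument for the Hellinger affinity; it is essentially the proof of Theorem~7.13 in \citet{polyanskiy_fdiv}. By definition, $\mcl{D}_H^2(P_t,P_a)=\frac12\int (h_t-h_a)^2\,d\mu$. Assumption (i) gives $h_t-h_a=\int_a^t h_s'\,ds$, so with the change of variables $s=a+u(t-a)$,
\begin{equation*}
\frac{h_t(x)-h_a(x)}{t-a}=\int_0^1 h'_{a+u(t-a)}(x)\,du .
\end{equation*}
Thus $\mcl{D}_H^2(P_t,P_a)/(t-a)^2=\frac12\int g_t^2\,d\mu$, where $g_t(x)=\int_0^1 h'_{a+u(t-a)}(x)\,du$. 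The whole statement reduces to showing
\begin{equation*}
\int g_t^2\,d\mu\to\int (h_a')^2\,d\mu .
\end{equation*}
We then identify the limit: $h_a'=p_a'/(2\sqrt{p_a})$ wherever $p_a>0$, so $\int (h_a')^2\,d\mu=J(a)/4$, and $\frac12\cdot J(a)/4=J(a)/8$, as claimed.

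Pointwise convergence is the easy part. Since $h_s'(x)\to h_a'(x)$ as $s\downarrow a$ for $\mu$-a.e.\ $x$, the average $g_t(x)$ of $h_s'(x)$ over $s\in[a,t]$ also converges to $h_a'(x)$. This is a Cesàro-type argument: for small $t-a$, every $s$ in the range is close to $a$. The one point needing care is integrability of $h'_s(x)$ near $a$, which assumption (i) supplies.

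The convergence of integrals, i.e.\ uniform integrability of $g_t^2$, is the main obstacle. By Jensen,
\begin{equation*}
g_t^2\le \int_0^1 \big(h'_{a+u(t-a)}\big)^2\,du .
\end{equation*}
So for any measurable $A$,
\begin{equation*}
\int_A g_t^2\,d\mu\le \sup_{s\in[a,b)}\int_A (h_s')^2\,d\mu ,
\end{equation*}
by Fubini. Assumption (ii) makes the right side uniformly small once $\mu(A)$ is small. It also bounds the family uniformly, since $\mu$ is finite. Hence $\{g_t^2\}$ is uniformly integrable.

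Vitali's convergence theorem, using finiteness of $\mu$, then gives $\int g_t^2\,d\mu\to\int (h_a')^2\,d\mu$. This yields $\mcl{D}_H^2(P_t,P_a)=J(a)(t-a)^2/8+o((t-a)^2)$, and in particular $J(a)<\infty$. The only delicate points are the measurability and Fubini step, and the set $\{p_a=0\}$. On $\{p_a=0\}$ we have $h_a'=0$ a.e., because $h_t\ge 0$ attains its minimum at $t=a$; so that set contributes nothing to $J(a)$.
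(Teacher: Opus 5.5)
The core of your argument is correct and is essentially the paper's route. Both write $(h_t-h_a)/(t-a)=\int_0^1 h'_{a+u(t-a)}\,du$ and pass to the limit in its squared $L^2(\mu)$ norm using (ii) and Vitali. The paper expands the square into $\int_0^1\int_0^1\int h'_{v(u_1)}h'_{v(u_2)}\,d\mu\,du_1\,du_2$, applies Vitali to the products for fixed $(u_1,u_2)$, and then uses dominated convergence on $[0,1]^2$. You instead show $g_t\to h'_a$ pointwise and get uniform integrability of $g_t^2$ from Jensen, so a single Vitali step suffices; the two are equivalent.

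The step that fails is your last one. Because the family is indexed by $[a,b)$, $h'_a$ is a right derivative at the left endpoint. A nonnegative function minimized at a left endpoint only satisfies $h'_a\ge 0$ there; concluding $h'_a=0$ needs a two-sided derivative at an interior point.

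This is not a removable technicality. Take $\mu$ counting measure on $\{1,2\}$, with $p_t(1)=(t-a)^2$ and $p_t(2)=1-(t-a)^2$ for $t\in[a,a+1/2)$. All hypotheses hold and $h'_a(1)=1$, so $\mcl{D}_H^2(P_t,P_a)=(t-a)^2/2+O((t-a)^4)$. Yet $p_a(1)=p_a'(1)=0$, so the literal formula gives $\int_{\{p_a>0\}}(p_a')^2/p_a\,d\mu=0$.

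What your argument (and the paper's) actually proves is $\mcl{D}_H^2(P_t,P_a)=\tfrac{1}{2}(t-a)^2\int (h'_a)^2\,d\mu+o((t-a)^2)$. To read this as $J(a)(t-a)^2/8$, you must do one of two things:
\begin{enumerate}
\item define $J(a):=4\int (h'_a)^2\,d\mu$, which agrees with $\int (p'_a)^2/p_a\,d\mu$ on $\{p_a>0\}$; or
\item add the hypothesis $p_a>0$ $\mu$-a.e.
\end{enumerate}
The paper's proof has the same omission, since it substitutes $h'_a=p_a'/(2\sqrt{p_a})$ everywhere without comment. This is harmless in the paper's application, because the non-central hypergeometric pmf is strictly positive on its fixed finite support for finite $\theta$. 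Still, replace your minimum argument with one of these two fixes.
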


\begin{proof}
	Note that $2\mcl{D}_H^2(P_t,P_a)=\int (h_t-h_a)^2d\mu$ where
	\begin{equation*}
		h_t(x) = h_a(x) + \int_a^t h_s'(x)ds = h_a(x) + \int_0^1 (t-a)h'_{v(u)}(x) du
	\end{equation*}
	where $v(u)=a+(t-a)u$. Hence
	\begin{align}
		\frac{2}{(t-a)^2}\mcl{D}_H^2(P_t,P_a) & =\int \left(\int_0^1 h'_{v(u)}(x) du\right)^2 d\mu \notag                                     \\
		                                      & =\int \int_0^1 h'_{v(u_1)}(x) du_1 \int_0^1 h'_{v(u_2)}(x) du_2 d\mu \notag                   \\
		                                      & = \label{supp-eqn:lem3pf1}\int \int_0^1 \int_0^1 h'_{v(u_1)}(x) h'_{v(u_2)}(x) du_1du_2 d\mu.
	\end{align}
	Note that $h'_{v(u)}$ is integrable with respect to $u\in [0,1)$ since $\int_a^b |h_s'(x)|ds < \infty$. Moreover, by the Cauchy–Schwarz inequality,
	\begin{equation*}
		\int h'_{v(u_1)} h'_{v(u_2)} d\mu \leq \int [h'_{v(u)}]^2d\mu <\infty
	\end{equation*}
	which is finite by the assumption that $\{[h_t'(x)]^2:t\in [a,b)\}$ is uniformly $\mu$-integrable.
	Therefore, we can apply Fubini's Theorem to \eqref{supp-eqn:lem3pf1}, writing $H\left(v(u_1), v(u_2)\right) = \int h'_{v(u_1)} h'_{v(u_2)} d\mu$,
	\begin{equation*}
		\frac{2}{(t-a)^2}\mcl{D}_H^2(P_t,P_a) =
		\int_0^1\int_0^1 H\left(v(u_1), v(u_2)\right) du_1du_2.
	\end{equation*}
	Now we investigate the limit of the integrand as $t\to a$. By the assumptions, $\{h_{t_1}'(x)h_{t_2}'(x):t_1,t_2\in [a,b)\}$ is uniformly $\mu$-integrable and $\lim_{t_1,t_2\downarrow a}h'_{t_1}h'_{t_2}(x)=h_a'(x)^2$ $\mu$-almost surely. Since $\mu$ is a finite measure, pointwise $\mu$-almost sure convergence implies convergence in measure; hence, we can apply Vitali's Convergence Theorem to show convergence in $L^2$:
	\begin{equation*}
		\lim_{t\downarrow a} H\left(v(u_1), v(u_2)\right)=
		\int [h'_a]^2 d\mu = \int [p_a^{-1/2}p_a'/2]^2d\mu
		=\frac{1}{4} \int \frac{(p_a')^2}{p_a} d\mu=J(a)/4.
	\end{equation*}
	Moreover, $H\left(v(u_1), v(u_2)\right)$ is bounded because $\{h_{t_1}'(x)h_{t_2}'(x):t_1,t_2\in [a,b)\}$ is uniformly $\mu$-integrable, so we can apply the Dominated Convergence Theorem to write
	\begin{equation*}
		\lim_{t\downarrow a} \int_0^1\int_0^1 H\left(v(u_1), v(u_2)\right) du_1du_2 =
		\frac{1}{4}\int_0^1\int_0^1 J(a) du_1du_2 = \frac{1}{4}J(a)
	\end{equation*}
	which completes the proof.
\end{proof}

Now we prove Theorem~\ref{thm:tbl22 Power}.
We first prove the case where $Q$ satisfies condition (ii) of Theorem~\ref{thm:tbl22 Power}.
By the assumption, we can partition $\mcl{Z}$ into $\mcl{Z}=U_Q\cup V_Q$, where $U_Q$ is a non-null set such that $\theta(z)>0$ for $^\forall z\in U_Q$ and $\theta(z)\geq 0$ for $^\forall z \in V_Q$.
Let $\mcl{S}=\{S_t:t\in [2T]\}$ be a partition of $\mcl{Z}$ such that $U_Q=\cup_{t=1}^T S_t$ and $V_Q = \cup_{t=T+1}^{2T} S_t$.
For each $S_t$, define $z_t^\ast \in \opn{cl}(S_t)$ as the value satisfying $\mbb{E}_{Z_t}Q_{XY\mid Z}(0,0\mid Z_t) = Q_{XY\mid Z}(0,0\mid z_t^\ast)$, where $\opn{cl}(S_t)$ is the closure of $S_t$.
Such a value exists since $Q(X,Y\mid Z)$ is continuous, and $\opn{cl}(S_t)$ is compact and connected.
We note that, by the assumption of $T\asymp n$, the total number of cell counts in each $2\times 2$ table is bounded above as $n\to \infty$.
Define $Q^\ast$ as a distribution of $(X,Y,Z)$ given by $Q^\ast(X,Y,Z)=Q_Z(Z) Q_{XY\mid Z}^{\ast}(X,Y\mid Z)$ where
\begin{equation*}
	Q_{XY\mid Z}^{\ast}(X,Y\mid Z)=\sum_{t=1}^{2T} Q_{XY\mid Z}(X,Y\mid z_t^\ast) 1_{Z\in S_t}.
\end{equation*}
$Q^\ast$ is a distribution where the marginal and conditional log odds ratios coincide in terms of the stratification $\mcl{S}$.
Let $g_\theta(x\mid n_{0\cdot}, n_{\cdot 0}, n_{\cdot\cdot})$ be the pmf of the non-central Fisher's hypergeometric distribution with the log odds ratio $\theta$:
\begin{equation*}
	g_\theta(x\mid n_{\cdot0}, n_{0\cdot}, n_{\cdot\cdot}) =
	{n_{0\cdot} \choose x} {n_{\cdot\cdot} - n_{0\cdot} \choose n_{\cdot0} - x} e^{x \theta} \Bigg/
	\sum_{y \in \Gamma}
	{n_{0\cdot} \choose y} {n_{\cdot\cdot} - n_{0\cdot} \choose n_{\cdot0} - y} e^{y \theta}
\end{equation*}
where $\Gamma= [0\vee n_{0\cdot}+n_{\cdot0}-n_{\cdot\cdot}, n_{\cdot0}\wedge n_{0\cdot}]$.
We denote $n_{\mcl{S}}(x,y) = \{n(x, y,S_t):S_t \in \mcl{S}\}$ as the set of cell counts of $2\times 2$ tables, $n_{\mcl{S}}(x,\cdot) = \{n(x, \cdot,S_t):S_t \in \mcl{S}\}$, $n_{\mcl{S}}(\cdot,y) = \{n(\cdot, y,S_t):S_t \in \mcl{S}\}$ as sub-table margin totals given a stratification $\mcl{S}$.
Under $Q$, the conditional distribution of $n_{\mcl{S}}(0,0)$ given $n_{\mcl{S}}(0,\cdot), n_{\mcl{S}}(\cdot, 0), n_{\mcl{S}}(\cdot,\cdot)$ is a $2T$-product of non-central hypergeometric distributions with marginal log odds ratios $\{\theta_t : t\in [2T]\}$:
\begin{equation*}
	g_{Q}\left(n_{\mcl{S}}(0,0) \mid n_{\mcl{S}}(0,\cdot), n_{\mcl{S}}(\cdot,0), n_{\mcl{S}}(\cdot,\cdot)\right) =
	\prod_{t=1}^{2T} g_{\theta_t}\left(n(0,0,S_t) \mid n(0,\cdot,S_t), n(\cdot,0,S_t), n(\cdot,\cdot,S_t)\right).
\end{equation*}
Although $\theta(Z)$ is positive with non-zero probability, there is no guaranty that, after the stratification, $\theta_t >0$ will hold for some $t\in[2T]$. For example, it is possible that after averaging across positively associated tables, the resulting marginal log odds ratios can be zero. In contrast, since under $Q^\ast$ the marginal and conditional log odds ratios coincide within each stratum, the corresponding conditional distribution becomes
\begin{equation*}
	g_{Q^\ast}\left(n_{\mcl{S}}(0,0) \mid n_{\mcl{S}}(0,\cdot), n_{\mcl{S}}(\cdot,0), n_{\mcl{S}}(\cdot,\cdot)\right) =
	\prod_{t=1}^{2T} g_{\theta(z_t^\ast)}\left(n(0,0,S_t) \mid n(0,\cdot,S_t), n(\cdot,0,S_t), n(\cdot,\cdot,S_t)\right),
\end{equation*}
where $\theta(z_t^\ast)> 0$ for $1\leq t\leq T$ and $\theta(z_t^\ast)\geq 0$ for $T+1\leq t\leq 2T$.
Note that $\psi_n$ is a function of $n_{\mcl{S}}(x,y)$ only. Therefore,
\begin{align}
	\mbb{E}_{Q^n}\left[1-\psi_n\mid n_{\mcl{S}}(0,\cdot), n_{\mcl{S}}(\cdot,0), n_{\mcl{S}}(\cdot,\cdot)\right]
	 & = \mbb{E}_{g_Q}[1-\psi_n]
	= \sum_{n_{\mcl{S}}(0,0)}(1-\psi_n)(g_Q - g_{Q^\ast} + g_{Q^\ast})  \notag \\
	\label{supp-eqn:gqTV}
	 & \leq \mbb{E}_{g_{Q^\ast}}[1-\psi_n] + 2\mcl{D}_{TV}(g_Q, g_{Q^\ast}).
\end{align}

\emph{Showing $\mbb{E}_{g_{Q^\ast}}[1-\psi_n] \to 0$:}
For simplicity, let $\mu_{t}(\theta)$ and $\sigma^2_{t}(\theta)$ be the mean and variance of $n(0,0,S_t)$ given $\{n(0,\cdot,S_t), n(\cdot,0,S_t), n(\cdot,\cdot,S_t)\}$ with a log odds ratio $\theta$.
To further simplify, we write $n_t = n(0,0,S_t)$.
Abusing the notation, we write $\theta_t^\ast = \theta(z_t^\ast)$.
\begin{align}
	\mbb{E}_{g_{Q^\ast}}[1-\psi_n]
	 & = 1 - \left(\mbb{P}_{g_{Q^\ast}}[M_n > z_{\alpha/2}] + \mbb{P}_{g_{Q^\ast}}[M_n < z_{1-\alpha/2}]\right) \notag \\
	 & \leq \mbb{P}_{g_{Q^\ast}}[M_n \leq z_{\alpha/2}] \notag                                                         \\
	 & =\mbb{P}_{g_{Q^\ast}}\left(
	\frac{\sum_{t=1}^{2T} n_{t} - \sum_{t=1}^{2T} \mu_{t}(0)}{\sqrt{\sum_{t=1}^{2T} \sigma^2_{t}(0)}}
	\leq z_{\alpha/2}
	\right) \notag                                                                                                     \\
	 & =\mbb{P}_{g_{Q^\ast}}\left(
	\frac{\sum_{t=1}^{2T}n_{t} - \sum_{t=1}^{2T} \mu_{t}(\theta_t^\ast)}{\sqrt{\sum_{t=1}^{2T} \sigma^2_{t}(\theta_t^\ast)}}
	\leq -G_TH_T + z_{\alpha/2}H_T
	\right) \label{supp-eqn:GtHt}
\end{align}
where
\begin{equation*}
	G_T = \frac{\sum_{t=1}^{2T} \mu_{t}(\theta_t^\ast) - \sum_{t=1}^{2T} \mu_{t}(0)}{\sqrt{\sum_{t=1}^{2T} \sigma^2_{t}(\theta_t^\ast)}},\quad
	H_T = \frac{\sqrt{\sum_{t=1}^{2T} \sigma^2_{t}(\theta_t^\ast)}}{\sqrt{\sum_{t=1}^{2T} \sigma^2_{t}(0)}}.
\end{equation*}
As $n\to \infty$, $\sigma^2_{t}(\theta_t^\ast)$ is finite since $n_t$ is bounded above by the condition $T\asymp n$.
Therefore, by Lindberg's condition, the LHS of \eqref{supp-eqn:GtHt} converges in distribution to the standard normal distribution.
On the other hand, the RHS of \eqref{supp-eqn:GtHt} approaches negative infinity.
To see this, note that $\mu_{t}(0)$ and $\sigma_{t}(0)$ are all $O(1)$ since $n_{t}$ is bounded.
According to \citet[Corollary~2.1]{kou1996asymptotics}, if $\theta\geq0$, then $\mu_{t}(0) \leq \mu_{t}(\theta) \leq e^{\theta}\mu_{t}(0)$ and $e^{-\theta}\sigma^2_{t}(0) \leq \sigma^2_{t}(\theta) \leq e^\theta \sigma^2_{t}(0)$.
This shows that $\mu_{t}(\theta_t^\ast)$, $\sigma_{t}(\theta_t^\ast)$ are also of the same asymptotic order as $\mu_{t}(0)$ and $\sigma_{t}(0)$ since $\theta_t^\ast$ is finite $^\forall t\in[2T]$ by the assumption.
Since the numerator of $G_T$ is $O(T)$ whereas the denominator is $O(T^{1/2})$, and $\mu_{t}(\theta_t^\ast) \geq \mu_{t}(0)$ for $t \in [2T]$ with the inequality being strict for $t \in [T]$, we have $G_T \to \infty$.
On the other hand, $H_T \in [e^{-\theta^\ast/2}, e^{\theta^\ast/2}]$ where $\theta^\ast = \max_{t\in[2T]} \theta_t^\ast$.
All combined, \eqref{supp-eqn:GtHt} goes to $0$ as $n\to \infty$.

\emph{Showing $\mcl{D}_{TV}(g_Q, g_{Q^\ast}) \to 0$:}
Similar to the proof of Theorem~\ref{thm:tbl22 T1E}, we deal with the Hellinger distance:
\begin{align*}
	\mcl{D}_{TV}(g_Q, g_{Q^\ast}) \leq \sqrt{2} \mcl{D}_{H}(g_Q, g_{Q^\ast})
	\leq \sqrt{2\sum_{t=1}^{2T} \mcl{D}_H^2(g_{\theta_t}, g_{\theta(z_t^\ast)})} \leq 2\sqrt{T \max_{t\in [2T]} \mcl{D}_H^2(g_{\theta_t}, g_{\theta(z_t^\ast)})}.
\end{align*}
where the second inequality follows from the tensorization property of the Hellinger distance, since $g_Q=\prod_{t=1}^{2T}g_{\theta_t}$ and $g_{Q^\ast}=\prod_{t=1}^{2T}g_{\theta(z_t^\ast)}$.
It is straightforward to see that $g_\theta(x\mid n_{\cdot1}, n_{1\cdot}, n_{\cdot\cdot})$, as a member of a single parameter (the log odds ratio) family, satisfies the assumptions of Lemma \ref{supp-lem:thm1-3}. Therefore,
\begin{equation*}
	\mcl{D}_H^2(g_{\theta_t}, g_{\theta(z_t^\ast)})
	=  J(\theta_t) (\theta_t - \theta(z_t^\ast))^2/8 + o\left((\theta_t - \theta(z_t^\ast))^2\right)
\end{equation*}
where $J(\theta_t)$ is the Fisher information of $g_\theta(x\mid n_{\cdot1}, n_{1\cdot}, n_{\cdot\cdot})$ at $\theta = \theta_t$.
Note that $J(\theta_t)$ is finite as $\theta(z)$ is finite by the conditions in Theorem~\ref{thm:tbl22 Power}.
Furthermore, since the Hellinger smoothness (marginal smoothness) implies condition (ii) of Lemma~\ref{supp-lem:thm1-2}, along with condition (ii) of Theorem~\ref{thm:tbl22 Power}, we can invoke Lemma \ref{supp-lem:thm1-2} to state that $|\theta_t - \theta(z_t^\ast)|=O(h)$.
All combined,
\begin{equation*}
	\sqrt{T \max_{t\in [2T]} \mcl{D}_H^2(g_{\theta_t}, g_{\theta(z_t^\ast)})} = \sqrt{TO(h^2)} \leq \sqrt{n} O(h),
\end{equation*}
so if $h=o(n^{-1/2})$, then $\mcl{D}_{TV}(g_Q, g_{Q^\ast}) \to 0$ as $n\to\infty$.

Alternatively, suppose $Z$ meets the condition (i) of Theorem~\ref{thm:tbl22 Power}.
Letting $\mcl{S}$, $Q^\ast$ and $z_t^\ast$ as previously defined, we can write
\begin{align*}
	 & \mcl{D}_{TV}(\tilde{Q}^n,\tilde{Q}_0^{n})
	\leq \sqrt{2}\mcl{D}_{H}(\tilde{Q}^n,\tilde{Q}_0^{n})
	\leq \sqrt{2n}\mcl{D}_{H}(\tilde{Q},\tilde{Q}_0) \\
	 & \leq
	\sqrt{n}\left\{
	\sum_{t=1}^{2T} s(t)\sum_{X,Y}
	\left(
	\sqrt{\mbb{E}_{Z_t} Q_{XY\mid Z}(X,Y\mid Z_t)} -
	\sqrt{Q_{XY\mid Z}(X,Y\mid z_t^\ast)}
	\right)^2
	\right\}^{1/2}                                   \\
	 & \stackrel{(a)}{\leq}
	\sqrt{6n}\left\{
	\sum_{t=1}^{2T} s(t) L^2h^2
	\right\}^{1/2}
	= \sqrt{6n}Lh
\end{align*}
where $(a)$ holds since $z_t^\ast$ is defined such that $Q_{XY\mid Z}(0,0\mid z_t^\ast) = \mbb{E}_{Z_t}Q_{XY\mid Z}(0,0\mid Z_t)$, and for $(X,Y) = (0,1), (1,0), (1,1)$, we can choose yet another value in $\opn{cl}(S_t)$ that allows us to express the expectations as function values, say $\mbb{E}_{Z_t}Q_{XY\mid Z}(0, 1\mid Z_t) = Q_{XY\mid Z}(0, 1 \mid z_t^{\ast\ast})$ for $z_t^{\ast\ast}\in S_t$.
\qed

\subsection{Proof of Theorem~\ref{thm:condind}}
It is clear that conditional independence leads to $(k_1,k_2)$-conditional independence. We prove the other direction.
Let a triplet $(X,Y,Z)$ be defined on a probability space $(\Omega, \mcl{B}, \mbb{P})$.
Let $\mcl{B}(\mcl{X})$ and $\mcl{B}(\mcl{Y})$ be Borel $\sigma$-algebra of the supports $\mcl{X}$ and $\mcl{Y}$ of $X$ and $Y$.
For a set $A\in \mcl{B}(\mcl{X})$, we write $[X \in A] = \{w\in \Omega : X(w) \in A\}$ and $\sigma(X) = \{[X\in A]:A\in \mcl{B}(\mcl{X})\}$, likewise for $\sigma(Y)$.

The sequence of dyadic partitions $I_0, I_1, \cdots$ of $\mcl{X}$ and $J_0, J_1, \cdots$ of $\mcl{Y}$, defined as in Definition~\ref{dfn:binary partition}, respectively induces sub-classes of $\mcl{B}$ as
\begin{equation*}
	\mcl{C}_k^\mcl{X} = \{[X\in I]:I\in \mcl{I}_k \},\quad
	\mcl{C}^\mcl{X} = \{[X\in I]:I\in \mcl{I} \}
\end{equation*}
which are $\pi$-systems (if augmented with $\varnothing$), likewise for $\mcl{C}_k^\mcl{Y}$ and $\mcl{C}^\mcl{Y}$.
With these notations, $P_{XY\mid Z}(I, J\mid Z) \stackrel{a.s.}{=}P_{X\mid Z}(I\mid Z) P_{Y\mid Z}(J\mid Z)$ being true for all $k_1, k_2 \geq 0$ is equivalent to
\begin{equation}\label{supp-eqn:IJ-pf1}
	\mbb{P}\{C_X \cap C_Y \mid Z\} \stackrel{a.s.}{=}
	\mbb{P}\{C_X \mid Z\} \mbb{P}\{C_Y \mid Z\}
\end{equation}
for all $C_X \in \mcl{C}^\mcl{X}$ and $C_Y \in \mcl{C}^\mcl{Y}$,
where $\mbb{P}\{C\mid Z\}$ is the conditional probability of $C \in \mcl{B}$ given $\sigma(Z)$.
Our goal is to prove that \eqref{supp-eqn:IJ-pf1} holds for any $C_X\in \sigma(X)$ and $C_Y\in \sigma(Y)$, which is the definition of $X\perp Y \mid Z$.

Fix $C^\ast\in \mcl{C}^\mcl{Y}$ and define
\begin{equation*}
	\mcl{L}^{C^\ast} = \{C_X\in \sigma(X):\mbb{P}(C_X\cap C^\ast \mid Z) \stackrel{a.s.}{=}
	\mbb{P}(C_X\mid Z) \mbb{P}(C^\ast\mid Z)\}.
\end{equation*}
Note that $\mcl{C}^\mcl{X}\subseteq \mcl{L}^{C^\ast} \subseteq \sigma(X)$ where $\mcl{C}^\mcl{X}$ is a $\pi$-system. We claim that $\mcl{L}^{C^\ast}$ is a $\lambda$-system:
\begin{enumerate}[(1)]
	\item $\Omega \in \mcl{L}^{C^\ast}$,
	\item if $C_X \in \mcl{L}^{C^\ast}$, then
	      \begin{align*}
		      \mbb{P}\{(\Omega \backslash C_X)\cap C^\ast\mid Z\} & =
		      \mbb{P}\{C^\ast \mid Z\}-\mbb{P}\{C_X \cap C^\ast \mid Z\}              \\&=
		      \mbb{P}\{C^\ast \mid Z\}-\mbb{P}\{C_X \mid Z\} \mbb{P}\{C^\ast \mid Z\} \\&=
		      \mbb{P}\{\Omega \backslash C_X \mid Z\} \mbb{P}\{C^\ast \mid Z\},
	      \end{align*}
	\item similarly, we can check if $C_1,C_2,\cdots \in \mcl{L}^{C^\ast}$ are all disjoint, then $\cup_{i\geq 0} C_i \in \mcl{L}^{C^\ast}$.
\end{enumerate}
Hence, by Dynkin's $\pi$-$\lambda$ theorem, $\sigma(\mcl{C}^\mcl{X}) \subseteq \mcl{L}^{C^\ast}$.
Furthermore, since $\sigma(\mcl{I})=\mcl{B}(\mcl{X})$ by construction,
\begin{equation*}
	\sigma(\mcl{C}^\mcl{X}) =
	\sigma(X^{-1}(\mcl{I})) = X^{-1}(\sigma(\mcl{I})) = \sigma(X),
\end{equation*}
and by the same reasoning $\sigma(\mcl{C}^\mcl{Y})=\sigma(Y)$.
All combined, $\sigma(\mcl{C}^\mcl{X}) =\sigma(X) \subseteq \mcl{L}^{C^\ast}$, hence $\mcl{L}^{C^\ast}=\sigma(X)$. Since $C^\ast\in \mcl{C}^\mcl{Y}$ is arbitrary, \eqref{supp-eqn:IJ-pf1} holds for any $C_X\in \sigma(X)$ and $C_Y\in \mcl{C}^\mcl{Y}$. Repeating the same argument after fixing $C^\ast \in \sigma(X)$, we can see that \eqref{supp-eqn:IJ-pf1} holds for any $C_X\in \sigma(X)$ and $C_Y\in \sigma(Y)$.
\qed

\subsection{Proof of Theorem~\ref{thm:smalltobig}}
It is clear that $X\perp_{k_1,k_2}Y \mid Z$ implies $\theta(I, J, Z)\stackrel{a.s.}{=}0$ for any $I\in \mcl{I}_{k_1-1}$ and $J \in \mcl{J}_{k_2-1}$.
For the other direction, we prove by induction: suppose that the following holds:
\begin{align}\label{supp-eqn:indhyp}
\text{$\theta(I, J, Z)\stackrel{a.s.}{=}0$, $^\forall I\in \mcl{I}_{l_1-1}$, $^\forall J \in \mcl{J}_{l_2-1}$} \quad \Rightarrow \quad \text{$X\perp_{l_1,l_2}Y \mid Z$}.
\end{align}
which is trivially true for $l_1=l_2=1$.
Now, assume that 
\begin{align}\label{supp-eqn:indhyp2}
    \theta(I, J, Z)\stackrel{a.s.}{=}0,\quad ^\forall I\in \mcl{I}_{l_1},\quad ^\forall J \in \mcl{J}_{l_2-1}.
\end{align}
To prove $X\perp_{l_1+1,l_2}Y \mid Z$, pick any $I \in I_{l_1}$ and $J\in J_{l_2-1}$. 
The $2\times2$ table formed by the sets $\{I^{left}, I^{right}\}$ and $\{J^{left}, J^{right}\}$ is independent.
(Note that $I^{left}, I^{right}\in I_{l_1+1}$ and $J^{left}, J^{right}\in J_{l_2}$.)
To see this, for simplicity, we let $p_{ll}$ represent $P_{XY\mid Z}(I^{left}, J^{left} \mid Z)$, and likewise for $p_{lr}, p_{rl}, p_{rr}$.
By the assumption \eqref{supp-eqn:indhyp2}, $\theta(I, J, Z)\stackrel{a.s.}{=}0$ for $I\in I_{l_1}$ and $J\in J_{l_2-1}$, so we have $p_{ll}p_{rr} = p_{lr}p_{rl}$.
If either $p_{rl} + p_{ll}$ or $p_{rr}+p_{lr}$ is $0$, then the table is trivially independent.
Otherwise, a simple manipulation yields
\begin{align*}
    p_{ll}p_{rr} + p_{ll}p_{lr}          & = p_{lr}p_{rl} + p_{ll}p_{lr} \\
	\Rightarrow p_{ll}(p_{rr}+ p_{lr})   & = p_{lr}(p_{rl} + p_{ll})     \\
	\Rightarrow p_{ll}/(p_{rl} + p_{ll}) & = p_{lr}/(p_{rr}+ p_{lr}).
\end{align*}
Hence, the table is independent, and we have $P_{XY\mid Z}(I, J \mid Z)= P_{X\mid Z}(I\mid Z)P_{Y\mid Z}(J \mid Z)$ for $I\in I_{l_1+1}$ and $J\in J_{l_2}$.

Again, by the assumption \eqref{supp-eqn:indhyp2}, $\theta(I, J, Z)\stackrel{a.s.}{=}0$ for $I\in I_{l_1}$ and $J\in J_{l_2-2}$.
We can apply the same logic to conclude $P_{XY\mid Z}(I, J \mid Z)= P_{X\mid Z}(I\mid Z)P_{Y\mid Z}(J \mid Z)$ for $I\in I_{l_1+1}$ and $J\in J_{l_2-1}$.
Repeating this, and combined with the inductive hypothesis \eqref{supp-eqn:indhyp}, we have the result $X\perp_{l_1+1,l_2}Y \mid Z$.
\qed

\subsection{Proof of Theorem~\ref{thm:IJKfacto}}
Conditioned on a set of stratum-specific margin totals $n(I_{k_1}, J_{0}, \mcl{S})$ $n(I_{0}, J_{k_2}, \mcl{S})$, it is clear that
\begin{equation*}
	\mbb{P}_{\tilde{P}_0^n}\left\{ n(I_{k_1}, J_{k_2}, \mcl{S})
	\mid n(I_{k_1}, J_{0}, \mcl{S}), n(I_{0}, J_{k_2}, \mcl{S}) \right\}
	=
	\prod_{S_t\in \mcl{S}}
	\mbb{P}_{\tilde{P}_0^n}
	\left\{
	n(I_{k_1}, J_{k_2}, S_t) \mid
	n(I_{k_1}, J_{0}, S_t), n(I_{0}, J_{k_2}, S_t)
	\right\}.
\end{equation*}
Fix any $S_t \in \mcl{S}$.
For brevity, we suppress explicit notation for dependence on $S_t$ unless needed.
We further introduce a few shortened notations as follows: $n_{ij} = n(I_i, J_j)$ is a $2^{i}\times 2^{j}$ contingency table at strata $S_t$, and $\sigma_{ij}$ is the sigma algebra generated by windows $I\times J$ where $I\in \mcl{I}_i$ and $J \in \mcl{J}_j$.
Lastly, we simplify $\mbb{P}_{\tilde{P}_0^n}$ to $p_n$.
It suffices to show that
\begin{align}\label{supp-eqn:IJfacto}
	\begin{split}
		 & p_n
		\left\{
		n_{k_1, k_2} \mid
		n_{k_1, 0}, n_{0, k_2}
		\right\} \\
		 & =
		\prod_{\substack{I \in \mcl{I}_{k_1-1}, J \in \mcl{J}_{k_2-1}}}
		g_0\left(
		n(I^{left},J^{left}) \mid
		n(I^{left}, J), n(I, J^{left}), n(I,J) \right).
	\end{split}
\end{align}
Although \eqref{supp-eqn:IJfacto} has already been proved in \citet{ma2019fisher}, we provide a slightly more detailed proof for ease of reading.

It is clear that \eqref{supp-eqn:IJfacto} holds for $k_1=k_2=1$.
We will show that if \eqref{supp-eqn:IJfacto} is true for $k_1=i-1$ and $k_2=j$, then it also holds for $k_1=i$ and $k_2=j$.
First, note that
\begin{align*}
	p_n\left\{n_{i, j} \mid n_{i, 0}, n_{0, j} \right\}
	= p_n\left\{n_{i, j} \mid n_{i-1, j}, n_{i, 0}, n_{0, j} \right\}
	p_n\left\{n_{i-1, j} \mid n_{i, 0}, n_{0, j} \right\}.
\end{align*}
For the first term, $p_n\left\{n_{i, j} \mid n_{i-1, j}, n_{i, 0}, n_{0, j} \right\} = p_n\left\{n_{i, j} \mid n_{i-1, j}, n_{i, 0} \right\}$ since $\sigma_{0,j} \subseteq \sigma_{i-1,j}$.
For the second term, since under $\tilde{P}_0$ two margins of the table are independent, $p_n\{n_{a,b} \mid n_{a',0}, n_{0,b}\} = p_n\{n_{a,b} \mid n_{a,0}, n_{0,b}\}$ for any $a'\geq a$.
To see this, we invoke the urn argument in \citet{ma2019fisher}: suppose there are $2^a$ different colors of balls in an urn.
$p_n\left\{n_{a, b} \mid n_{a, 0}, n_{0, b} \right\}$ can be considered the probability of randomly assigning each ball to $2^b$ labels, given the total number of each label.
This can be done by arbitrarily drawing a required number of balls from the urn without replacement for each label.
Assume that, in fact, each color consists of $2^{a'-a}$ different sizes, so that there are $2^{a'}$ unique color and size combinations.
However, since the label was assigned independent of both the color and the size of the balls, conditioning on this new information does not change the probability of label assignment.

All combined,
\begin{align*}
	p_n\left\{n_{i, j} \mid n_{i, 0}, n_{0, j} \right\}
	= p_n\left\{n_{i, j} \mid n_{i-1, j}, n_{i, 0}\right\}
	p_n\left\{n_{i-1, j} \mid n_{i-1, 0}, n_{0, j} \right\}.
\end{align*}
\eqref{supp-eqn:IJfacto} holds for the second term by the inductive hypothesis.
For the first term,
\begin{align*}
	p_n\left\{n_{i, j} \mid n_{i-1, j}, n_{i, 0} \right\}
	 & = p_n\left\{n_{i, j}, n_{i, j-1}, n_{i, j-2}, \cdots, n_{i,1} \mid n_{i-1, j}, n_{i, 0} \right\} \\
	 & = \prod_{b=1}^{j}p_n\left\{n_{i, b} \mid n_{i, b-1}, n_{i-1, b} \right\}                         \\
	 & = \prod_{b=1}^{j} \prod_{I \in I_{i-1}, J \in J_{b-1}}
	p_n\left\{n(I^{left}, J^{left}) \mid n(I^{left}, J), n(I, J^{left}), n(I, J)\right\}
\end{align*}
where the last inequality follows because each window $I\times J$ does not overlap with any other, and the row sums and column sums are given.
Putting all pieces together, we see that
\begin{align*}
	p_n\left\{n_{i, j} \mid n_{i, 0}, n_{0, j} \right\} =
	\prod_{a=1}^{i}\prod_{b=1}^{j} \prod_{I \in I_{a-1}, J \in J_{b-1}}
	p_n\left\{n(I^{left}, J^{left}) \mid n(I^{left}, J), n(I, J^{left}), n(I, J)\right\}
\end{align*}
which is equivalent to \eqref{supp-eqn:IJfacto}.
By the same reasoning, it is easy to check the case when \eqref{supp-eqn:IJfacto} is assumed for $k_1=i$ and $k_2=j-1$.
Therefore, by induction, we have proved \eqref{supp-eqn:IJfacto} for any $i, j \geq 1$.
\qed

\subsection{Proof of Corollary~\ref{cor:pvalindep}}
Let $k = k_1+k_2-2$ and $E(l_1,l_2) = \cap_{I \in I_{l_1}, J \in J_{l_2}}\{p(I, J) \leq \alpha(I, J)\}$ be the events that all the $p$-values computed from $2^{l_1+l_2}$ number of $2\times 2\times T$ tables constituting $2^{l_1+1} \times 2^{l_2+1} \times T$ table (one for each window $I\times J$), written as $n(I_{l_1+1}, J_{l_2+1}, \mcl{S})$, are less than or equal to their corresponding significance levels.
Note that under $\tilde{P}_0$, the conditional probability of the event $E(l_1,l_2)$ is determined by the conditional distribution of $n(I_{l_1+1}, J_{l_2+1}, \mcl{S})$ given $n(I_{l_1+1}, J_{l_2}, \mcl{S})$ and $n(I_{l_1}, J_{l_2+1}, \mcl{S})$.
\begin{align*}
	 & \mbb{E}_{\tilde{P}_0^n}\left[
		\prod_{\substack{l_1+l_2 \leq k}} 1_{E(l_1,l_2)}
		\mid n(I_{k_1}, J_0, \mcl{S}), n(I_0, J_{k_2}, \mcl{S})
	\right]                                            \\
	 & = \mbb{E}_{\tilde{P}_0^n}\left[
		\prod_{\substack{l_1+l_2 < k}} 1_{E(l_1,l_2)}
		\mbb{E}_{\tilde{P}_0^n} \left[
			\prod_{\substack{l_1+l_2 = k}} 1_{E(l_1,l_2)} \mid
			\{n(I_{i}, J_{j}, \mcl{S}):i+j = k+1\}
			\right]
		\mid n(I_{k_1}, J_0, \mcl{S}), n(I_0, J_{k_2}, \mcl{S})
	\right]                                            \\
	 & \stackrel{(a)}{=} \mbb{E}_{\tilde{P}_0^n}\left[
		\prod_{\substack{l_1+l_2 < k}}\!\! 1_{E(l_1,l_2)}\!\!
	\prod_{\substack{I \in I_{l_1}, J \in J_{l_2}      \\l_1+l_2 = k}}\!\! \mbb{E}_{\tilde{P}_0^n} \left[
			1_{p(I, J) \leq \alpha_n(I, J)} \mid
			n(I^{left}, J), n(I, J^{left}), n(I, J)
			\right]
		\mid n(I_{k_1}, J_0, \mcl{S}), n(I_0, J_{k_2}, \mcl{S})
	\right]                                            \\
	 & \stackrel{(b)}{=} \mbb{E}_{\tilde{P}_0^n}\left[
	\prod_{\substack{I \in I_{l_1}, J \in J_{l_2}      \\l_1+l_2 \leq k}} \mbb{P}_{\tilde{P}_0^n} \left[
			p(I, J) \leq \alpha_n(I, J) \mid
			n(I^{left}, J), n(I, J^{left}), n(I, J)
			\right]
		\mid n(I_{k_1}, J_0, \mcl{S}), n(I_0, J_{k_2}, \mcl{S})
	\right]                                            \\
	 & \stackrel{(c)}{=}
	\prod_{\substack{I \in I_{l_1}, J \in J_{l_2}      \\l_1+l_2 \leq k}} \mbb{P}_{\tilde{P}_0^n} \left[
		p(I, J) \leq \alpha_n(I, J) \mid
		n(I^{left}, J), n(I, J^{left}), n(I, J)
		\right]
\end{align*}
where $(a)$ is because the probability of $2^{l_1+1} \times 2^{l_2+1} \times T$ tables given $2^{l_1+1} \times 2^{l_2} \times T$ and $2^{l_1} \times 2^{l_2+1} \times T$ is a product of independent $2\times 2\times T$ tables. $(b)$ follows from recursively applying the same reasoning, noting that conditioning on strata-specific margin totals $n(I_{k_1}, J_0, \mcl{S}), n(I_0, J_{k_2}, \mcl{S})$, the tables $n(I_{l_1}, J_{l_2}, \mcl{S})$ follow the DAG structure shown in Theorem~\ref{thm:IJKfacto}.
$(c)$ is also due to Theorem~\ref{thm:IJKfacto}.
\qed

\subsection{Proof of Theorem~\ref{thm:mainT1E}}
For any window $I\times J$, its $p$-value $p(I, J, \mcl{S}_{IJ})$ is also a function of $n(\{I^{left}, I\}, \{J^{left}, J\}, \mcl{S})$ since $\mcl{S} \preceq \mcl{S}_{IJ}$.
Therefore, $p(I, J, \mcl{S}_{IJ})$ are mutually independent, conditioned on stratum-specific margin totals, as in Corollary~\ref{cor:pvalindep}.
Furthermore, the asymptotic behavior of $p(I, J, \mcl{S}_{IJ})$ and $p(I, J, \mcl{S})$, which will be used in the very last line of this proof, is the same since $T\asymp T_{IJ}$.
For this reason, without loss of generality, we prove the claim under the common stratification. 

The T1E can be decomposed as
\begin{equation*}
	\mbb{E}_{\tilde{P}^n}1_{\tilde{p}\leq \alpha} =
	\int 1_{\tilde{p}\leq \alpha} (\tilde{p}^n_0 - \tilde{p}^n_0 + \tilde{p}^n) d\mu \leq
	\mbb{E}_{\tilde{P}_0^n}1_{\tilde{p}\leq \alpha} + 2 \mcl{D}_{TV}(\tilde{P}^n, \tilde{P}^n_0).
\end{equation*}
By the same reasoning as in the proof of Theorem \ref{thm:tbl22 T1E}, the second term vanishes to $0$ in the limit, as long as $h=o(n^{-1/4})$.
For the first term, we can unravel Šidák's corrections to show that it is equal to $\alpha$ in the limit.
Recall the definition of $\tilde{p}_k$ and $\tilde{p}_{l_1,l_2}$ in Algorithm~\ref{alg:main}.
To ease the notation, we write $k_1' = k_1-1$, $k_2'=k_2-1$, and let $E(k)$, $E(a,b)$ represent the following events:
\begin{align*}
	E(k)   & = \{\tilde{p}_k > 1- (1-\alpha)^{1/(k_1'+k_2'+1)}\},                 \\
	E(a,b) & = \{\tilde{p}_k > 1- (1-\alpha)^{1/(k_1'+k_2'+1)}, a\leq k \leq b\}.
\end{align*}
Finally, we simplify $n(I_{l_1}, J_{l_2}, \mcl{S})$ as $\bm n_{l_1, l_2}$ and omit the subscript $\tilde{P}_0^n$.
\begin{align*}
	\mbb{P}\left\{
	\tilde{p}\leq \alpha \mid \bm n_{k_1, 0}, \bm n_{0, k_2}
	\right\} & = \mbb{P}\left\{
	\min_{0\leq k \leq k_1'+k_2'} \tilde{p}_{k}
	\leq 1 - (1-\alpha)^{1/(k_1'+k_2'+1)}
	\mid \bm n_{k_1, 0}, \bm n_{0, k_2}
	\right\}                        \\
	         & = 1 - \mbb{P}\left\{
	E(0, k_1'+k_2') \mid \bm n_{k_1, 0}, \bm n_{0, k_2}
	\right\}.
\end{align*}
Then
\begin{align*}
	 & \mbb{P}\left\{
	E(0, k_1'+k_2') \mid \bm n_{k_1, 0}, \bm n_{0, k_2}
	\right\}                                \\
	 & = \mbb{E}\left[
		1_{E(0, k_1'+k_2'-1)} \mbb{E} \left[
			1_{E(k_1'+k_2')} \mid \{n(I_i,J_j,\mcl{S}):i+j=k_1'+k_2'+1\}
			\right] \mid \bm n_{k_1, 0}, \bm n_{0, k_2}
	\right]                                 \\
	 & \stackrel{(a)}{=} \mbb{E}\left[
	\prod_{\substack{0\leq k \leq k_1'+k_2' \\ U(k)>0}} \mbb{E} \left[
			1_{E(k)} \mid \{n(I_i,J_j,\mcl{S}):i+j=k+1\}
			\right] \mid \bm n_{k_1, 0}, \bm n_{0, k_2}
		\right]
\end{align*}
where $(a)$ follows from Theorem~\ref{thm:IJKfacto}.
$\mbb{P} \left\{E(k) \mid \{n(I_i,J_j,\mcl{S}):i+j=k+1\}\right\}$ becomes
\begin{align*}
	 & \mbb{P} \left\{
	\tilde{p}_k > 1- (1-\alpha)^{1/(k_1'+k_2'+1)} \mid \{n(I_i,J_j,\mcl{S}):i+j=k+1\}
	\right\}                                    \\
	 & = \mbb{P} \left\{
	\min_{\substack{l_1+l_2=k                   \\ L(l_1, l_2)>0}}
	\tilde{p}_{l_1, l_2} > 1 - (1-\alpha)^{1/(k_1'+k_2'+1) \cdot 1/U(k)}
	\mid \{n(I_i,J_j,\mcl{S}):i+j=k+1\}
	\right\}                                    \\
	 & \stackrel{(b)}{=}
	\prod_{\substack{l_1+l_2=k                  \\ L(l_1, l_2)>0}}
	\mbb{P}\left\{
	\tilde{p}_{l_1, l_2} > 1 - (1-\alpha)^{1/(k_1'+k_2'+1) \cdot 1/U(k)}
	\mid \{n(I_i,J_j,\mcl{S}):i+j=k+1\}
	\right\}                                    \\
	 & \stackrel{(c)}{=}
	\prod_{\substack{l_1+l_2=k                  \\ L(l_1, l_2)>0}}
	\prod_{\substack{I\in I_{l_1}, J\in J_{l_2} \\ V(I, J)=1}}
	\left\{
	1 - F_{IJ, n}\left(1 - (1-\alpha)^{1/(k_1'+k_2'+1) \cdot 1/U(k) \cdot 1/L(i,j)}\right)
	\right\}
\end{align*}
where $(b)$ and $(c)$ are due to Theorem~\ref{thm:IJKfacto} and
\begin{equation*}
	F_{IJ, n}(\alpha) = \mbb{P}\left\{
	p(I, J, \mcl{S}) \leq \alpha
	\mid n(I^{left}, J, \mcl{S}), n(I, J^{left}, \mcl{S}), n(I, J, \mcl{S})
	\right\}.
\end{equation*}
Since two margins of the table $n(I_{k_1}, J_{k_2}, \mcl{S})$ are conditionally independent under $\tilde{P}_0^n$, $\lim_{n\to \infty} F_{IJ, n}(\alpha) = \alpha$ for any $\alpha \in [0,1]$.
All combined,
\begin{align*}
	 & \mbb{P}\left\{
	\tilde{p}\leq \alpha \mid \bm n_{k_1, 0}, \bm n_{0, k_2}
	\right\}                                    \\
	 & =1-  \mbb{P}\left\{
	E(0, k_1'+k_2') \mid \bm n_{k_1, 0}, \bm n_{0, k_2}
	\right\}                                    \\
	 & = 1- \mbb{E}\left[
	\prod_{\substack{0\leq k \leq k_1'+k_2'     \\ U(k)>0}}
	\prod_{\substack{l_1+l_2=k                  \\ L(l_1, l_2)>0}}
	\prod_{\substack{I\in I_{l_1}, J\in J_{l_2} \\ V(I, J)=1}}
		\left\{
		1 - F_{IJ, n}\left(1 - (1-\alpha)^{1/(k_1'+k_2'+1) \cdot 1/U(k) \cdot 1/L(i,j)}\right)
		\right\}
		\mid \bm n_{k_1, 0}, \bm n_{0, k_2}
	\right]                                     \\
	 & \to 1-
	\prod_{\substack{0\leq k \leq k_1'+k_2'     \\ U(k)>0}}
	\prod_{\substack{l_1+l_2=k                  \\ L(l_1, l_2)>0}}
	\prod_{\substack{I\in I_{l_1}, J\in J_{l_2} \\ V(I, J)=1}}
	\left\{
	(1-\alpha)^{1/(k_1'+k_2'+1) \cdot 1/U(k) \cdot 1/L(i,j)}
	\right\}^{(k_1'+k_2'+1) \cdot U(k) \cdot L(i,j)}                                         = \alpha
\end{align*}
as $n \to \infty$ as the integrand is bounded.
\qed

\subsection{Proof of Theorem~\ref{thm:mainPower}}
We prove the local consistency $\lim_{n\to\infty} \mbb{P}_{Q^n}\{ p(I, J, \mcl{S}_{IJ}) \leq \alpha_n(I, J) \mid n(I_{k_1}, J_{0}, \mcl{S}_{IJ}), n(I_{0}, J_{k_2}, \mcl{S}_{IJ}) \} = 1$, which, in turn, implies global consistency $\lim_{n\to \infty} \mbb{P}_{Q^n} \{\tilde{p} \leq \alpha \} = 1$.
Restricting our attention to the window $I\times J$ for some $I\in I_{l_1}$, $J\in J_{l_2}$ satisfying the conditions in Theorem~\ref{thm:mainPower}, we can see that the proof of local consistency is identical to showing how \eqref{supp-eqn:gqTV} vanishes to $0$ in the limit.
Since it is assumed that the induced distribution of a triplet $(1_{X\in I^{left}}, 1_{Y\in J^{left}}, Z)$ satisfies the conditions of Theorem~\ref{thm:tbl22 Power}, and $h=o(n^{-1/2})$, the argument for the RHS of \eqref{supp-eqn:gqTV} remains the same.

The LHS of \eqref{supp-eqn:gqTV} is bounded by \eqref{supp-eqn:GtHt}, where the significance level
\begin{equation*}
	\alpha_n= 1 - (1-\alpha)^{1/(k_1+k_2-1) \cdot 1/U(l_1+l_2) \cdot 1/L(l_1,l_2)}
\end{equation*}
depends on $n$.
If $k_1$ and $k_2$ are some fixed constants, then after some $n$ large enough so that $V(I,J)=1$ for all windows, $\alpha_n(A)$ remains constant as $n\to \infty$; hence, \eqref{supp-eqn:GtHt} decreases to $0$.
If $k_1,k_2$ are of $O(\log n)$, then $\alpha_n = O(1/\log n)$ because $1/b_n\asymp 1-(1-\alpha)^{1/b_n}$ if $\lim_{n\to \infty}b_n= \infty$ and $\alpha\in(0,1)$.
Since $\alpha_n \to 0$, we can write
\begin{equation*}
	z_{\alpha_n/2} = 1-\Phi(\alpha_n/2) \asymp \exp(-\alpha_n^2/8)/\alpha_n\asymp 1/\alpha_n,
\end{equation*}
which shows $z_{\alpha_n/2} = O(\log n)$, where $\Phi(\cdot)$ is the cdf of the standard normal random distribution.
Since $G_T$ in \eqref{supp-eqn:GtHt} is $O(n^{1/2})$, $-G_TH_T+z_{\alpha_n/2}H_T \to -\infty$; hence, \eqref{supp-eqn:GtHt} diminishes to $0$.
\qed

\section{Pseudocode of the stratification algorithm}
The following algorithm produces $2^{\lceil \log_2 \lceil n/\eta\rceil\rceil}$ strata.
To have exactly $T$ strata, we undo the split of the last $2(2^{\lceil \log_2 \lceil n/\eta\rceil\rceil}-T)$ terminal partitions. 
\begin{algorithm}[h!]
	\caption{Recursive dyadic stratification of $\mcl{Z}$}
	\label{alg:medtree}
	\begin{algorithmic}[1]
		\Procedure{medtree}{$\{z_i:i\in [n]\}$, $\eta$} \Comment{$\eta$: desired number of $z_i \in \mbb{R}^d$ per stratum}
		\State $T \leftarrow \lceil n/\eta \rceil$
		\State $\mcl{S}_0 \leftarrow (0,1]^d$
		\For{$t=1, 2, \cdots, \lceil \log_2 T \rceil$}
		\State $j \leftarrow (t-1) \mod d + 1$ \Comment{Cycle through axes 1 to d}
		\State
		Compute $c_j= \opn{median}(\{z_{ij}:z_i \in S\})$ for every $S = \prod_{i=1}^d(a_i,b_i] \in \mcl{S}_{t-1}$ and let
		\begin{align*}
			S^{left} = \prod_{i < j}(a_i,b_i] \times (a_j, c_j] \times \prod_{i > j}(a_i,b_i],\quad
			S^{right} = \prod_{i < j}(a_i,b_i] \times (c_j, b_j] \times \prod_{i > j}(a_i,b_i]
		\end{align*}
		\State $\mcl{S}_{t} \leftarrow \bigcup_{S \in \mcl{S}_{t-1}}\{S^{left}, S^{right}\}$
		\EndFor
		\State \Return $\mcl{S}_{\lceil \log_2 T \rceil}$ \Comment{Return the final stratification}
		\EndProcedure

	\end{algorithmic}
\end{algorithm}
\section{Simulation 3: sensitivity to choice of $\eta$}
\label{supp-sec:sim3}

In this section, we examine how different choices of $\eta$ affect the statistical performance of multiCMH.
Although our method remains consistent against the alternatives satisfying the conditions of Theorem~\ref{thm:tbl22 Power} or Theorem~\ref{thm:mainPower} for all fixed values of $\eta$, its choice influences both T1E and power in finite samples.
Specifically, with small samples, larger strata (higher $\eta$) will better detect conditional dependence under the alternative at the expense of an increased Type I Error (T1E) under the null.
To illustrate these effects, we repeat Simulation~1 in Section~\ref{sec:sim1} for $\eta\in \{5, 10, 15, 20\}$.

As expected, Figure~\ref{supp-fig:sim3 T1E} shows that larger stratum sizes, for a given number of samples, lead to increased T1E. 
This increase does not appear to be affected by the dimension size, as can be seen in the second row of Figure~\ref{supp-fig:sim3 T1E}.
Moreover, larger stratum sizes tend to increase AUROC, particularly for higher-dimensional conditioning sets, as shown in the second row of Figure~\ref{supp-fig:sim3 Power}.
However, as seen in the first row of Figure~\ref{supp-fig:sim3 Power}, for all choices of $\eta$, the AUROC converges to one once the sample size is sufficiently large, despite each stratum size being bounded, thereby confirming our theoretical results.

\begin{figure}[t!]
	\centering
	\includegraphics[width=0.88\linewidth]{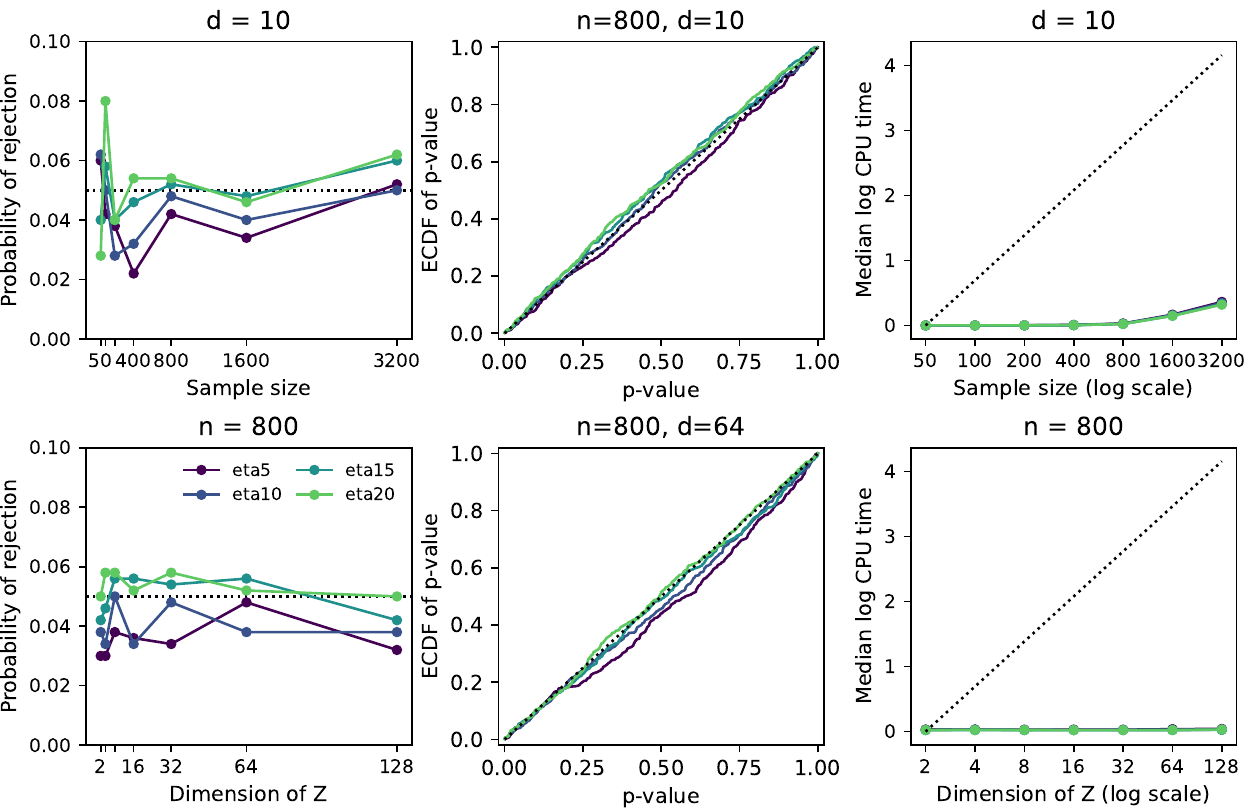}
	\caption{Results of Simulation~3. The nominal level $0.05$ is indicated as a gray dotted line. In the ECDF and median log CPU time plots, a $45^\circ$ reference line through the origin is also drawn in gray dotted line.}
	\label{supp-fig:sim3 T1E}
    \includegraphics[width=0.88\linewidth]{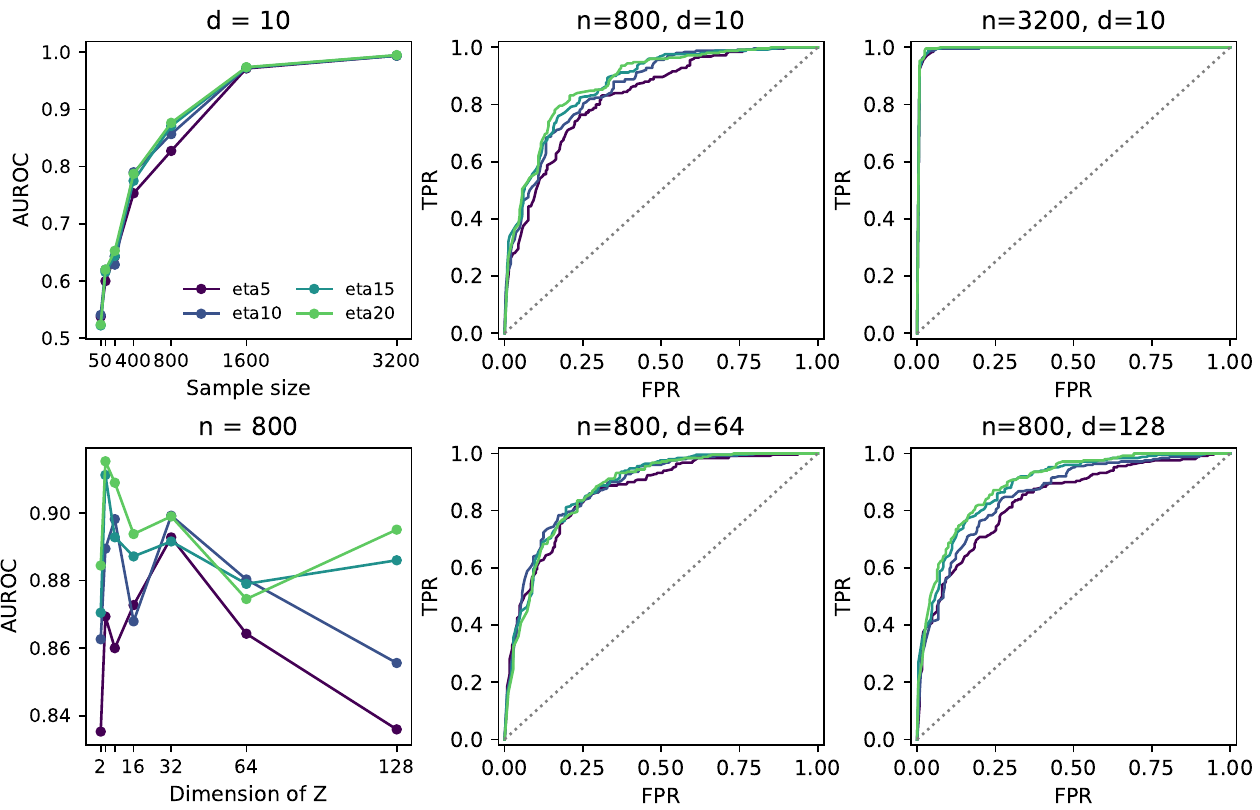}
	\caption{Results of Simulation~3. In the ROC plots, a $45^\circ$ reference line through the origin is included as a gray dotted line.}
	\label{supp-fig:sim3 Power}
\end{figure}

\end{document}